\documentclass[twoside,11pt]{article}

\usepackage{jmlr2e}
\usepackage{amsmath,amssymb,amsfonts,mathtools}
\usepackage{mathtools}
\usepackage{graphicx}
\usepackage{booktabs}
\usepackage{hyperref}
\usepackage{array}
\usepackage{pifont}
\usepackage{color}
\usepackage{subcaption}
\usepackage{float}
\usepackage{multirow}
\usepackage{placeins}
\usepackage{booktabs}

\newcommand{\R}{\mathbb{R}}

\newcommand{\cmark}{\ding{51}}
\newcommand{\xmark}{\ding{55}}

\ShortHeadings{Hierarchical Partial-Order Models}{Li, Nicholls, Lee, Jiang}
\firstpageno{1}

\def\L{\mathcal{L}}
\def\P{\mathcal{P}}

\def\B{\mathcal{B}}

\def\M{\mathcal{M}}
\def\H{\mathcal{H}}
\def\A{\mathcal{A}}
\def\C{\mathcal{C}}

\def\U{U}
\def\u{u}

\usepackage{bm}
\usepackage{enumitem}

\usepackage{tikz}
\usepackage{tikz-network}
\usetikzlibrary{cd} 
\usetikzlibrary{backgrounds}
\pgfdeclarelayer{bg}    
\pgfsetlayers{bg,main}
\usetikzlibrary{fit,positioning,arrows.meta,bayesnet,shapes.geometric,chains,matrix,scopes,calc,decorations.markings}
\tikzstyle{param}=[circle, minimum size = 0.7cm, thick, draw=black!100, fill = gray!10, node distance = 0.5cm]
\tikzstyle{data}=[rectangle, minimum size = 0.7cm, thick, draw =black!100, node distance = 0.5cm]
\tikzstyle{model}=[rectangle, minimum size = 1cm, thick, draw=black!100, node distance = 0.5cm]

\definecolor{orange}{rgb}{0.8,0.6,0.0}
\definecolor{purple}{rgb}{0.4,0,0.8}
\definecolor{brightgreen}{rgb}{1.0,0.8,1.0}
\definecolor{indigo}{rgb}{0.1,0,1.0}
\definecolor{Orange}{rgb}{1.0,1.0,0.5}

\begin{document}

\title{Hierarchical Partial-Order Models for Ranking}

\author{Dongqing Li  \email{dongqing.li@kellogg.ox.ac.uk} \\
\addr{
    Department of Statistics, University of Oxford, 
    \\ 24-29 St Giles, Oxford, OX1 3LB, UK}
  \AND Geoff K. Nicholls \email{nicholls@stats.ox.ac.uk}\\
  \addr{
    Department of Statistics, University of Oxford, 
    \\ 24-29 St Giles, Oxford, OX1 3LB, UK}
  \AND
  Jeong Eun Lee \email{kate.lee@auckland.ac.nz}\\
  \addr{
    Department of Statistics, University of Auckland, \\
    38 Princes St, Auckland 1010, New Zealand}
  \AND
  Chuxuan (Jessie) Jiang  \email{chuxuanj@gmail.com}\\ 
  \addr{
    Department of Statistics, University of Oxford, 
    \\ 24-29 St Giles, Oxford, OX1 3LB, UK}
}

\editor{TBC}

\maketitle

\begin{abstract}

Rank aggregation combines information from ordered lists ranking items by preference.  Classical parametric models for such data, including the Mallows and Plackett–Luce models, assume the orders concentrate around one or more complete consensus rankings. Recent work relaxes the total-order assumption by allowing the consensus structure to be a partial order (poset), allowing for incomparabilities in preferences.
However, in many applications preference data exhibit group structure. We introduce hierarchical partial order (HPO) models, which extend poset-based models to accommodate grouped data through a hierarchy of latent posets. This framework, which parallels mixture model extensions of the Mallows and Plackett-Luce models, enables principled sharing of information across groups while preserving partial-order structure. We show that the Plackett–Luce model and its hierarchical variants are special cases of HPO-models.
We develop a hierarchical clustering extension (HCPO) for unsupervised clustering in settings where group labels are unknown. Bayesian inference for the latent poset hierarchy is performed using Markov chain Monte Carlo methods.
Experiments on synthetic and real-world datasets, including pairwise acoustic preference data and LLM agent traces, demonstrate that the proposed HPO and HCPO models outperform existing approaches in both predictive performance and structural interpretability.
\end{abstract}


\begin{keywords} Partial orders, Hierarchical model, Clustering, Bayesian inference, MCMC \end{keywords}
 
\newpage

\section{Introduction}

Rank data arise in many applied settings. They may record the preferences of human assessors over choice sets of comparable items \citep{GormleyMurphy2009,raman14crowdGrade,crispino19}, the outcomes of competition between animals \citep{Johnson02MonkeyJasa,nagy13pigeonPO,foerster16} or ranking of people \citep{mogapi09,nicholls11} by their perceived expertise or status. Action-traces of LLMs  \citep{li2026delinearizing}, expressing a temporal logic, can also be treated as rank data.
In meta-analysis, studies rank different sets of genes by effect size \citep{Li19CompareAggregation} and continuous multivariate expression-levels are sometimes projected to rank-order data before analysis \citep{eliiseussen22RankSelect,vitelli23PanCancerSelect} as the resulting models are more robust and parsimonious than their continuous counterparts. Although these data are generated in quite different ways, various forms of rank-aggregation are used to identify favored rankings and recover underlying structural constraints informing the rank-outcome.

Classical rank aggregation combines multiple ranked lists from diverse sources into a single consensus ranking. \cite{lin10RankAgReview, Li19CompareAggregation} review heuristic methods \citep{borda1781}, optimisation based methods \citep{kemeny59principles} and parametric distribution-based methods \citep{mallows1957,thurstone1927law}. There is some overlap. For example, an estimator based on Kemeny's approach with a given distance measure is the MLE of a Mallows model with that distance.
Distribution-based models for rank aggregation generally assume that the lists in the data are random permutations which vary around a central consensus order. The Mallows \citep{fligner86} and Plackett-Luce models
\citep{luce1959possible,Plackett1975Analysis,hunter04,guiver2009bayesian} come under this broad heading.  Hierarchical and clustering extensions \citep{meila16,vitelli2018,caron2014bayesian,MollicaTardella2017} do allow ranking lists to come in groups with different consensus orders and the time-label in timeseries models \citep{bradlow01PLtime,asfaw17,caron12time,glickman24} also enforces a kind of group structure. However, these models all fundamentally reconstruct complete orders.

The assumption that the unknown true preference structure is given by one or more total orders ranking all items in the universe of choices from best to worst is often unjustified: an assessor may regard a pair of items as incomparable; not equal but actually having no order. Even when it is justified, a model that allows for incomparable pairs may have a higher marginal likelihood if it is more parsimonious \citep{jiang23}. Recent work \citep{nicholls25AOAS} has relaxed the requirement that the consensus orders be total orders and instead parameterises the underlying preference structure using a partial order or \emph{poset}: preferences express a set of order relations which may be incomplete. 


We give a new Hierarchical model using Posets (HPO) to model rank data which come in labeled groups. This is well motivated: many data sets have an obvious group structure \citep{Johnson02MonkeyJasa,nagy13pigeonPO,crispino19}; hierarchical models share information across groups and are a principled way to apply shrinkage. The HPO model is arranged as a tree, with posets at the root and leaves. The leaf-posets parameterise assessor-preferences and the root or ``global''-poset parameterises shared preferences. The generative model for a random poset-hierarchy is defined using continuous latent variables which are mapped to posets: we first define a prior over the latent variables at the root; given the root variables, the latent variables at each leaf are conditionally independent of variables at other leaves; we get a random poset-hierarchy by mapping the latent variables at each leaf to a poset. A parameter controlling the correlation between the global and assessor-level variables allows us to control ``shrinkage'' over the space of posets. The model reduces to a hierarchical Plackett-Luce model when the poset dimension is set equal one. The basic latent-variable/poset-map setup is used in timeseries models in \cite{nicholls25AOAS}. We present HPO-models in the terminology of social choice theory: assessors rank items in choice sets and return preference orders.
However, the framework applies in the more general ranking applications we cited in the first paragraph. 

At inference the lists generated by each assessor inform ``their'' poset and jointly inform the global poset. Our analysis is Bayesian and we use MCMC to summarise the posterior. Our main contributions are the hierarchical (HPO, Section~\ref{sec:Hpo-labeled}) and clustering (HCPO, Section~\ref{sec:Hpo-cluster}) models themselves. The model definitions are accompanied by theory relating the models to Plackett-Luce, showing they are marginally consistent and that they parameterise any hierarchy of posets. We also give new observation models (weighted Queue-jumping and Frontier-softmax in Sections~\ref{sec:eta-weighted-QJ} and \ref{sec:front_softmax_qj} respectively) for random orders structured by posets. Model comparison on real data (Sections~\ref{sec:sound} and \ref{sec:ghana-sweet-potato}) using the WAIC \citep{Watanabe2010,VehtariGelmanGabry2017} shows our approach is favored.

The only existing work in the literature on hierarchical models for partial orders is \cite{arcagni2022complexity}. Those authors have three partial orders on the same set of items. This is the data; they wish to compute a central consensus poset. They define a distance between posets and minimise it to get something like a centre-of-mass. The setup there shares with ours the idea of partial orders
distributed around a central poset. However, our framework is based on a generative model and Bayesian inference, where theirs is loss-based and gives a point estimate. We analyse their data in Appendix~\ref{app:arcagni-fixed-leaf} and explain how our framework may be applied (in fact, is simpler) when the data are partial orders.

\section{Background and models}\label{sec:po-foundation-models}

\subsection{Choice sets and posets}
Let $\M=[M],\ [M]=\{1,\dots,M\}$ be the universe of items over which preferences can be given and suppose there are $M=|\M|$ in all. Let $\B_\M$ be the set of all subsets of $\M$ excluding the empty set and let $S\in\B_\M$ be a given {\it choice set} with $m=|S|$ elements.

A \emph{poset} $h$ is a choice set equipped with a \emph{partial order} $\succ_h$. We use partial orders to describe preferences over items in a choice set.
Poset $h=(\M,\succ_h)$ on $\M$ is a set with order relations $j_1\succ_h j_2$ for $j_1,j_2\in \M$. An example is shown in Figure~\ref{fig:PO-LE-example}.
\begin{figure}
    \centering
    \scalebox{0.65}{\begin{tikzpicture}
      \node (1) [param] {$1$};
      \node (2) [param, below=of 1, xshift=-2.5cm, yshift=-0.5cm] {$2$};
      \node (3) [param, below=of 2, yshift=-1.5cm] {$3$};
      \node (4) [param, below=of 1, xshift=2cm, yshift=-2.25cm] {$4$};
      \node (5) [param, below=of 3, xshift=2.5cm, yshift=-1cm] {$5$};

      \node (POh) [text height=1cm, below=of 5, yshift=0.68cm] {\LARGE Partial order $h$};
      
      \node (1q1) [text height=0.5cm, right=of 1, xshift=2.5cm, yshift=0.5cm] {$1$};
      \node (2q1) [text height=0.5cm, below=of 1q1, yshift=-0.1cm] {$2$};
      \node (3q1) [text height=0.5cm, below=of 2q1, yshift=-0.1cm ] {$3$};
      \node (4q1) [text height=0.5cm, below=of 3q1, yshift=-0.1cm ] {$4$};
      \node (5q1) [text height=0.5cm, below=of 4q1, yshift=-0.1cm ] {$5$};

      \node (1q2) [text height=0.5cm, right=of 1q1, xshift=0.5cm] {$1$};
      \node (2q2) [text height=0.5cm, below=of 1q2, yshift=-0.1cm ] {$2$};
      \node (3q2) [text height=0.5cm, below=of 2q2, yshift=-0.1cm ] {$4$};
      \node (4q2) [text height=0.5cm, below=of 3q2, yshift=-0.1cm ] {$3$};
      \node (5q2) [text height=0.5cm, below=of 4q2, yshift=-0.1cm ] {$5$};

      \node (1q3) [text height=0.5cm, right=of 1q2, xshift=0.5cm] {$1$};
      \node (2q3) [text height=0.5cm, below=of 1q3, yshift=-0.1cm ] {$4$};
      \node (3q3) [text height=0.5cm, below=of 2q3, yshift=-0.1cm ] {$2$};
      \node (4q3) [text height=0.5cm, below=of 3q3, yshift=-0.1cm ] {$3$};
      \node (5q3) [text height=0.5cm, below=of 4q3, yshift=-0.1cm ] {$5$};

      \node (LEs) [text height=1cm, below=of 5q2, yshift=1cm, xshift=0.5cm] {\LARGE Linear extensions $\L(h)$};
      
      \edge [->] {1} {2};
      \edge [->] {1} {4};
      \edge [->] {2} {3};
      \edge [->] {4} {5};
      \edge [->] {3} {5};
      \edge [dashed,->] {1} {3};
      \edge [dashed,->] {1} {5};
      \edge [dashed,->] {2} {5};
      \edge [dashed,->] {2} {5};
      
      \edge [->] {1q1} {2q1};
      \edge [->] {2q1} {3q1};
      \edge [->] {3q1} {4q1};
      \edge [->] {4q1} {5q1};

      \edge [->] {1q2} {2q2};
      \edge [->] {2q2} {3q2};
      \edge [->] {3q2} {4q2};
      \edge [->] {4q2} {5q2};

      \edge [->] {1q3} {2q3};
      \edge [->] {2q3} {3q3};
      \edge [->] {3q3} {4q3};
      \edge [->] {4q3} {5q3};

      \node (2s) [param, right=of 3, xshift=12cm, draw=blue] {$2$};
      \node (4s) [param, right=of 2s, xshift=3cm, draw=blue] {$4$};
      \node (5s) [param, below=of 2s, xshift=2.25cm, yshift=-1cm, draw=blue] {$5$};
      
      \edge [->] {2s} {5s};
      \edge [->] {4s} {5s};

      \node (2s1) [text height=0.5cm, right=of 4s, xshift=0.5cm, yshift=1.5cm] {$2$};
      \node (4s1) [text height=0.5cm, below=of 2s1, yshift=-0.1cm] {$4$};
      \node (5s1) [text height=0.5cm, below=of 4s1, yshift=-0.1cm ] {$5$};

      \node (2s2) [text height=0.5cm, right=of 2s1, xshift=0.5cm] {$4$};
      \node (4s2) [text height=0.5cm, below=of 2s2, yshift=-0.1cm ] {$2$};
      \node (5s2) [text height=0.5cm, below=of 4s2, yshift=-0.1cm ] {$5$};
      
      \edge [->] {2s1} {4s1};
      \edge [->] {4s1} {5s1};
      \edge [->] {2s2} {4s2};
      \edge [->] {4s2} {5s2};
      
      \node (SOh) [text height=1cm, above=of 5s, yshift=1.5cm] {\LARGE $S=\{2,4,5\}$};
      \node (SOLE) [text height=1cm, below=of 5s1, yshift=0.75cm, xshift=1cm,] {\LARGE $\L(h[S])$};
      \node (O) [text height=1cm, below=of 5s, yshift=0.65cm] {\LARGE Sub-order $h[S]$};
    \end{tikzpicture} }
    \caption{The example partial order $h$ (left) on $\M=\{1,2,3,4,5\}$ has three linear extensions (left-centre). Dashed lines in $h$ are relations implied by transitivity. Its suborder on $S=\{2,4,5\}$ (right-centre) has two linear extensions (right).}
    \label{fig:PO-LE-example}
\end{figure}
Relations are anti-symmetric (if $j_1\succ_h j_2$ then not $j_2 \succ_h j_1$) and transitive (if $j_1\succ_h j_2$ and $j_2\succ_h j_3$ then $j_1\succ_h j_3$) but need not be complete, so there may be pairs where neither $j_1\succ_h j_2$ nor $j_2\succ_h j_1$. 
We work with strict partial orders, so relations are irreflexive (no relations $j\succ_h j$, and no ties).
Denote by $\max(h)$ the \emph{max-set} of $h$; if $j\in \max(h)$ then $\{j'\in \M; j'\succ j\}$ is empty, so these are the highest ranked elements in $h$. 

If the choice set is restricted from $\M$ to $S\subset\M$, then preferences are given by the {\it suborder} $h[S]=(S,\succ_h)$ with relations $j_1\succ_h j_2$ involving pairs $j_1,j_2\in S$. 
Let $\H_S$ be the set of all posets on $S$.
If for all pairs $j_1,j_2\in S$ either $j_1\succ_h j_2$ or $j_2\succ_h j_1$ then $h$ is a {\it complete order}. 
Let $\C_S$ be the set of all complete orders of $S$. 
Complete orders are one to one with $\P_S$, the set of all permutations of the elements of $S$.

\subsection{Context-independent preference}\label{sec:context-independent-defn}

A generic observation $y=(S,\succ_y),\ y\in \C_S$ is a complete order on its choice set or equivalently
the permutation $y_{1:m}=(y_1,\dots y_m),\ y_{1:m}\in \P_S$ we get by indexing $y_i,\ i=1,\dots m$ so that $y_1\succ_y y_2\succ_y\dots\succ_y y_m$. 
There are two ways to realise an order: an assessor may be given a choice set $S$ and return a preference order considering only the elements of $S$, in which case $y$ is {\it realised on the choice set} $S$; alternatively they may make a preference order $y'\in\C_\M$ on the universe of choices and return the suborder $y=y'[S]$, thinning out any choices not in $S$ while retaining the order of those that remain. 
In this case, $y$ is {\it realised as a suborder} on $S$. 
If preferences are {\it context-independent}, then these two rules give the same observation models and otherwise, preferences are context-dependent. 
\begin{definition}\label{def:context-independence} (Context-independent preference) Let $\{p_S(y),\ y\in \C_S\},\ S\in\B_\M$ be a family of probability distributions over orders and let $y\sim p_\M(\cdot)$. 
If $y[S]\sim p_S(\cdot)$ for all $S\in\B_\M$, then the family expresses context-independent preference, and otherwise, preference is context-dependent.
\end{definition}
Context independence in Definition~\ref{def:context-independence} is weaker than context-independence in the sense of the ``Luce Axiom of Choice'' \citep{luce77}. We return to this in Section~\ref{sec:PL}. 
See Appendix~\ref{app:context-independent-models} for background on context independence and its relation to marginal consistency.

We focus on the case where the data are realised on a choice set.
Our models are context dependent so our methods do not in general apply to data realised as a suborder. They do apply specifically to top-$k$ data, where assessors just return their top-$k$ preferences in order, and they are context independent in the special case where they reduce to Plackett-Luce models. If we wanted to model data realised as a suborder we would follow \cite{vitelli2018} and treat the missing part of the order as missing data. See Appendix~\ref{app:PO_model_single_PO}.

\subsection{Noise-free observation model}\label{sec:noise-free-obs-model}


In this section we give a simple observation model. It makes the often unrealistic assumption that the data perfectly respect all preferences in the unknown underlying poset. However, the model underpins some more general observation models we give in Section~\ref{sec:lkd-noisy} and allows us to move on to Bayesian inference for HPO-models with a concrete likelihood in hand.  
\begin{definition}\label{def:linear-extension} 
A {\it linear extension} of a poset $h\in \H_S$ is any complete order $\ell\in \C_S$ satisfying $j_1\succ_h j_2\Rightarrow j_1\succ_\ell j_2$, so the linear extension ``completes'' the partial order $\succ_h$. Denote by $\L[h]$ the set of all linear extensions of $h$.
\end{definition}
Suppose observed orders are sampled uniformly at random from the linear extensions of a poset $h\in\H_S$.
Let $p_S(y|h)$ be the probability to realise $y\in \L[h]$.
The likelihood for $h$ is
\begin{equation}\label{eq:lkd-noise-free}
p_S(y|h)=|\L[h]|^{-1}\mathbb{I}_{y\in \L[h]}.
\end{equation}
This model is motivated in \cite{nicholls25AOAS} by a stochastic queue process in which $y$ is a draw from the equilibrium of a random walk on $\L[h]$. As shown in Appendix~\ref{app:context-independent-models}, the model in \eqref{eq:lkd-noise-free} expresses context-dependent preferences.
See also remarks in Appendix~\ref{app:PCMC-models} on relations to 
Pairwise choice Markov chain models \citep{ragain16_pcmc}.

Computing $|\L[h]|$ is \#P-complete \citep{brightwell1991counting}, so we cannot evaluate $p_S(y|h)$ for general $h\in \H_S$ and large $m=|S|$. 
Counting is feasible up to about $m = 40$ using $\tt lecount()$, which implements methods in \cite{koivisto16b}. 
For greater $m$-values, we use a repeated selection model (defined in Section~\ref{sec:front_softmax_qj}) which builds up $y_{1:m}$ one element at a time. This is chosen at random from the max-set of the suborder for the elements that remain, so random orders $y$ are still distributed over linear extensions, but not uniformly. The likelihood can be evaluated rapidly. 

In many data sets the choice set varies from one list to another. 
An assessor with preferences expressed by a poset $h = (\M, \succ_h)$ gives $N$ preference orders $Y = (Y_1, \dots, Y_N)$, where $Y_i\in \C_{S_i}$ orders the elements of choice set $S_i\in\B_\M$. For $i\in [N]$ let $m_i=|S_i|$. As lists, $Y_i=(Y_{i,1},\dots,Y_{i,m_i})$.
The assessor's preferences over $S_i$ are determined by the suborder $h[S_i] = (S_i,\succ_h)$. 
Assuming the lists are independent given $h$, the likelihood is
\begin{equation}\label{eq:po_lkd_full_noisefree}
   p_{S_{1:N}}(Y|h) = \prod_{i=1}^N p_{S_i}(Y_i|h[S_i]),
\end{equation}
where $p_{S_i}(Y_i|h[S_i])=|\L[h[S_i]]|^{-1}$ per Equation~\eqref{eq:lkd-noise-free}. 

There is enough here for simple Bayesian inference for posets from order data. For example, if we take a uniform prior over $\H_\M$ then the posterior is proportional to the likelihood in \eqref{eq:po_lkd_full_noisefree} and the MCMC algorithm given in \cite{muirwatt15}, which samples posets uniformly at random, could be used for sample-based Bayesian inference.

\subsection{The Plackett-Luce Model}\label{sec:PL}

The HPO model we write down in Section~\ref{sec:Hpo-labeled} uses the same Gumbel-construction as the Plackett-Luce model, and covariates enter in a similar way. In order to keep our presentation self-contained we give the standard setup here.

In a Plackett-Luce (PL) model with preference weights $\alpha_\M=(\alpha_1,\dots,\alpha_M)$, the weight for choice $i\in\M$ is $\alpha_i\in \R$ and the observation model for an ordering $y\in\C_S$ of a choice set $S\in\B_\M$ with $m$ elements is
\begin{equation}\label{eq:PL-def-full}
    p_S(y|\alpha_S)=\prod_{i=1}^m \frac{e^{\alpha_{y_i}}}{\sum_{i'=i}^m e^{\alpha_{y_{i'}}}},
\end{equation}
where $\alpha_S=(\alpha_j)_{j\in S}$.
We write $y\sim\mbox{PL}(\alpha_S;S)$. This is a sequential choice model in which the order is built up element by element: we can write
\begin{equation}\label{eq:PL-sequential}
p_S(y|\alpha_S)=\prod_{i=1}^{m-1} q_{y_{i:m}}(y_i|\alpha_S),
\end{equation}
where 
\begin{equation}\label{eq:PL-def-one-step}
    q_{y_{i:m}}(y_i|\alpha_S)=\frac{e^{\alpha_{y_i}}}{\sum_{i'=i}^m e^{\alpha_{y_{i'}}}}
\end{equation}
is the probability to choose $y_i$ from the choice set $\{y_{i},y_{i+1},\dots,y_{m}\}$.

We can add covariates to this model. 
For $j\in \M$, let $x_j = (x_{j,1}, \dots, x_{j,p})$ be a vector of $p$ covariates associated with item $j$, and let $X=(x_{j,c})_{j\in \M}^{c\in [p]}$ be the $M\times p$ matrix of all covariates. 
Let $\beta\in \R^p$ be a vector of effects. 
If we include an intercept in the covariate vector, then the vector of preference weights in \eqref{eq:PL-def-full} is $\alpha_\M = X\beta$.

The generative model for orders in the Plackett-Luce model can be given in terms of latent Gumbel random variables \citep{yellot77}. 
\begin{lemma} \citep{yellot77} \label{lem:PL-gumbel} Let $X_j\sim \mbox{Gumbel}(\alpha_j)$ be independent for $j\in \M$, where $\mbox{Gumbel}(\alpha_j)$ is a distribution with CDF $F_{\alpha_j}(g)=\exp(-\exp(-(g-\alpha_j))),\ g\in \R$. Let $X=(X_1,\dots,X_M)$ and
let $y(X)=(\M,\succ_X)$ be the corresponding complete order for the elements of $X$, that is $j_1\succ_X j_2\Leftrightarrow X_{j_1}>X_{j_2}$. \cite{yellot77} shows that $y(X)\sim \mbox{PL}(\alpha_\M;\M)$. 
\end{lemma}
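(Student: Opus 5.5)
The plan is to compute the probability of each complete order $y=(y_1,\dots,y_M)$ directly from the Gumbel densities and reduce it to the sequential form \eqref{eq:PL-sequential}. Two facts about the Gumbel family will be used repeatedly. First, it is closed under maxima: if $X_j\sim\mbox{Gumbel}(\alpha_j)$ are independent for $j\in A$, then $\max_{j\in A}X_j\sim\mbox{Gumbel}(\log\sum_{j\in A}e^{\alpha_j})$, because the CDFs multiply to $\exp(-e^{-g}\sum_{j\in A}e^{\alpha_j})$. Second, we have the ``argmax'' identity
\[
P\Big(X_{j}>\max_{k\in A\setminus\{j\}}X_k\Big)=\frac{e^{\alpha_j}}{\sum_{k\in A}e^{\alpha_k}} .
\]
This follows by integrating the density $f_{\alpha_j}(g)=e^{-(g-\alpha_j)}\exp(-e^{-(g-\alpha_j)})$ against $\prod_{k\ne j}F_{\alpha_k}(g)$. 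After the substitution $u=e^{-g}$ the integral is $\int_0^\infty e^{\alpha_j}\exp(-u\sum_{k\in A}e^{\alpha_k})\,du$, which is exactly the ratio above. Ties occur with probability zero, so $y(X)$ is almost surely a well-defined complete order.

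Next comes the induction over positions. The key step is the standard memorylessness property of Gumbel maxima: conditional on $\arg\max_{k\in A}X_k=j$, the vector $(X_k)_{k\in A\setminus\{j\}}$ has the same law as independent $\mbox{Gumbel}(\alpha_k)$ variables conditioned on all being below $X_j$. In addition, the event that $j$ is the argmax is independent of the value $\max_{k\in A} X_k$.

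Rather than working conditionally, I would compute the joint probability $P(X_{y_1}>X_{y_2}>\dots>X_{y_M})$ directly. I integrate out the smallest variable first, then the next, and so on from the bottom. At stage $i$, with $A_i=\{y_i,\dots,y_M\}$, the remaining integral over $X_{y_i},\dots,X_{y_M}$ restricted to the ordering factorises as
\[
\frac{e^{\alpha_{y_i}}}{\sum_{k\in A_i}e^{\alpha_k}}\;\times\;\mbox{(density of a Gumbel with location }\log\textstyle\sum_{k\in A_i}e^{\alpha_k}\mbox{) evaluated at the upper limit}.
\]
This is what makes the induction close. Concretely, I will prove the claim that for each threshold $t$,
\[
P\big(t>X_{y_i}>\dots>X_{y_M}\big)=\prod_{r=i}^{M}\frac{e^{\alpha_{y_r}}}{\sum_{k\in A_r}e^{\alpha_k}}\;\exp\Big(-e^{-t}\sum_{k\in A_i}e^{\alpha_k}\Big).
\]
The proof is by downward induction on $i$. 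Each step integrates the density of $X_{y_{i}}$ over $g<t$ against the inductive expression evaluated at $t=g$, with the same substitution $u=e^{-g}$. Setting $i=1$ and letting $t\to\infty$ gives \eqref{eq:PL-def-full} with $S=\M$, since the $r=M$ factor equals $1$.

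The main obstacle is bookkeeping the exponents in the inductive step, namely checking that the product $f_{\alpha_{y_i}}(g)\exp(-e^{-g}\sum_{k\in A_{i+1}}e^{\alpha_k})$ integrates to the right ratio times $\exp(-e^{-t}\sum_{k\in A_i}e^{\alpha_k})$. This is a single exponential integral in $u=e^{-g}$, so I expect it to be routine once the claim is stated in the strengthened, threshold-dependent form above.
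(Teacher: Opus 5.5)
The paper gives no proof of this lemma. It is quoted from \cite{yellot77} and then used as a black box, in Theorem~\ref{thm:po-prior-gumbel-U} and in the proof of Theorem~\ref{thm:PL-MC}. Your direct computation is a correct, self-contained replacement for the citation. The strengthened claim $P(t>X_{y_i}>\dots>X_{y_M})=\prod_{r=i}^{M}\bigl(e^{\alpha_{y_r}}/\sum_{k\in A_r}e^{\alpha_k}\bigr)\exp\bigl(-e^{-t}\sum_{k\in A_i}e^{\alpha_k}\bigr)$ holds at $i=M$, where it is just $F_{\alpha_{y_M}}(t)$. The inductive step works as you describe. By independence of $X_{y_i}$ from the lower variables, the integrand is the earlier ratios times $e^{\alpha_{y_i}}e^{-g}\exp\bigl(-e^{-g}\sum_{k\in A_i}e^{\alpha_k}\bigr)$. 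The substitution $u=e^{-g}$ maps $g<t$ to $u>e^{-t}$, which yields the new ratio times $\exp\bigl(-e^{-t}\sum_{k\in A_i}e^{\alpha_k}\bigr)$. Letting $t\to\infty$ at $i=1$ then recovers \eqref{eq:PL-def-full}.

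There are two small points. First, in your informal description of stage $i$, the factor that appears is the \emph{CDF} of a Gumbel with location $\log\sum_{k\in A_i}e^{\alpha_k}$ at $t$, together with all the ratios for $r=i,\dots,M$, not a density. State the induction hypothesis only in the precise form you give afterwards. Second, the ``memorylessness'' and argmax-independence remarks are never used in your route and can be dropped. If you want that route, the clean version is the exponential race. $E_j=e^{-X_j}$ is exponential with rate $e^{\alpha_j}$, the decreasing order of the $X_j$ is the arrival order of the $E_j$, and memorylessness of the exponential gives the sequential form \eqref{eq:PL-sequential} directly.

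As for what each approach buys: the citation is brief. Yellott's main result is essentially the converse, that double-exponential noise is forced if a Thurstone-type model is to satisfy Luce's choice axiom, and the paper does not need it. Your argument proves only the direction the paper uses. As a by-product, the factorisation in $t$ shows that the random order $y(X)$ is independent of $\max_j X_j$.
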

We can think of $X_j$ as a random score for item $j$, biased by its preference weight $\alpha_j$; the random preference order $y(X)$ gives the items ordered by score. 

The Plackett-Luce model is context independent so we don't need to know whether the assessor realised their preference order on the choice set $S$ or made a full ranking on $\M$ and then thinned the list down to the suborder for $S$. See Appendix~\ref{app:PL-further-remarks} for further remarks on context independence in the Plackett-Luce model.


\section{Bayesian inference for a single partial order}
\label{sec:PO-single-inference}

In this section we give a prior for a single random poset, modifying the latent variable setup in \cite{nicholls25AOAS} to capture the Placket-Luce model as a special case.
In the next section we extend this prior to the hierarchical setting.


We parameterise $h$ using continuous latent variables and a product-order embedding \citep{dushnik1941}, inspired by models for random partial orders given by \cite{winkler1985random}. 
Let $\U\in \R^{M\times K}$ be a matrix of preference weights with one row $\U_{j,:}\in \R^K$ for each item $j\in\M$ in the universe of choices and one column $\U_{:,k}$ for each ``feature'' $k=1,\dots,K$.
For a pair of items $j_1,j_2\in\M$, our rule $h(\U) = (\M,\succ_\U)$ mapping $\U$ to a poset will have $j_1 \succ_\U j_2$, if and only if the two rows of weights satisfy $\U_{j_1,k} > \U_{j_2,k}$ for each $k=1,\dots,K$. We can write this map as an intersection over the orders of the columns of $U$. 
Let $h(U_{:,k}) = (\M,\succ_{k})$ be the complete order on column $k$ with 
\[
j_1\succ_k j_2 \Leftrightarrow \U_{j_1,k}>\U_{j_2,k}.
\]
This will almost surely be a complete order because $>$ is a complete order on the real entries in the $k$'th column of $\U$. We define $\succ_\U$ to be the set of order relations \[j_1\succ_U j_2 \Leftrightarrow j_1\succ_k j_2,\ \forall k=1,\dots,K.\] 
This is called an intersection order, and we write 
\begin{equation}\label{eq:UtoPO}
h(\U)=\bigcap_{k=1}^K h(\U_{:,k}), 
\end{equation}
as $h(\U)=(\M,\succ_\U)$  contains just the order relations shared by all $h(\U_{:,k}),\ k=1,\dots,K$. 

This latent variable setup makes it straightforward to add covariates.
In terms of the covariate notation in Section~\ref{sec:PL}, with $\alpha=X\beta$ an $M\times 1$ vector, let
\begin{equation}
    \gamma=\U+\alpha1^T_K,
\end{equation}
where $\alpha1^T_K$ is an outer product of $\alpha$ with a vector of $K$ ones.
All elements in a row, $\gamma_{j,:}=\U_{j,:}+\alpha_j 1^T_K$, are offset by the same $\alpha_j=x_{j}^T\beta$, so if we compute $h=h(\gamma)$ from the offset weights then a large positive effect for item $j$ tends to move $j$ up in the partial order.
We do not include an intercept among the covariates as that duplicates the degree of freedom corresponding to the mean of $\U_{j,:}$.

We get a prior over $h\in \H_\M$ by taking priors over $\U$ and $\beta$. 
We want the prior distribution to be uninformative of the \emph{depth} of the reconstructed partial order as hypotheses about the unknown true $h$ are often framed as statements about depth. 
The depth $d(h)\in \{1,2,\dots, M\}$ of a partial order is 
the number of elements in the longest complete suborder,
\[
d(h)=\max_{S\in\B_\M}\{|S|: h[S]\in \C_S\}.
\]
In order to control $d(h(\gamma))$, \cite{nicholls25AOAS} take the rows of the $\gamma$-matrix to be correlated multivariate normal variables, $\gamma_{j,:}\sim N(\alpha_j 1_K,\Sigma_\rho)$ independent for $j\in\M$. The covariance matrix $\Sigma_\rho$ has a constant unit diagonal $(\Sigma_\rho)_{k,k}=1$ and constant off diagonal $(\Sigma_\rho)_{k,k'}=\rho$ for $\rho\in [0,1)$ and $k\ne k'$. The correlation parameter $\rho$ is positive and controls the typical depth of $h(\gamma)$. 
When $\rho$ is close to one, the values $\gamma_{j,k}$ don't vary much with $k$, so the orders $\succ_k,\ k=1,\dots,K$ are all the same and $d(h(\eta))$ is close to $M$. When $\rho$ is small the preference weights $\eta_{j,k}$ and $\eta_{j,k'}$ are nearly independent so the orders $\succ_k$ and $\succ_{k'}$ share fewer order relations. \cite{nicholls25AOAS} show using simulation that taking $\rho\sim \mbox{Beta}(1,1/6)$ gives a marginal prior distribution for $h$ which is reasonably uninformative of depth.

Up this point we followed \cite{nicholls25AOAS}. In order to get a model for random partial orders nesting PL, we modify this prior. Let $G^{-1}(g)=-\log(-\log(g))$ be the inverse CDF of a standard Gumbel random variable, and let $\Phi$ be the CDF of a standard normal.
\begin{theorem} (Partial Order Model) \label{thm:po-prior-gumbel-U} For $\alpha$ and $\Sigma_\rho$ defined above, if we take\\[-0.25in]
    \begin{align} \label{eq:po-prior-gumbel-U}
    \U_{j,:}&\sim N(0,\Sigma_\rho),\quad\mbox{independent for each $j\in \M$,}\\
    \eta_{j,:}&=G^{-1}(\Phi(\U_{j,:}))+\alpha_j 1^T_K, \quad\mbox{and}
    \label{eq:po-prior-gumbel-eta}\\
    y&\sim p_\M(\cdot|h(\eta(\U,\beta))),\quad\mbox{noise free likelihood in \eqref{eq:lkd-noise-free},}
\end{align}
then $h(\eta_{:,k})\sim PL(\alpha;\M)$ for each $k=1,\dots,K$. 
In particular, if $K=1$ then $y\sim PL(\alpha;\M)$.
\end{theorem}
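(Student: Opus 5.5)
The plan is to reduce the statement to Lemma~\ref{lem:PL-gumbel} by showing that, for each fixed column $k$, the entries $\eta_{j,k}$, $j\in\M$, are independent with $\eta_{j,k}\sim\mbox{Gumbel}(\alpha_j)$. The order $h(\eta_{:,k})$ is by definition the complete order ranking items by the values in column $k$, exactly $y(X)$ in Lemma~\ref{lem:PL-gumbel} with $X_j=\eta_{j,k}$. The lemma then gives $h(\eta_{:,k})\sim PL(\alpha;\M)$ at once.

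\textbf{Marginal law of one entry.} Fix $k$. Since $\Sigma_\rho$ has unit diagonal, $\U_{j,k}\sim N(0,1)$, so $\Phi(\U_{j,k})\sim\mbox{Uniform}(0,1)$ by the probability integral transform. Applying the inverse CDF $G^{-1}$ of the standard Gumbel gives $G^{-1}(\Phi(\U_{j,k}))\sim\mbox{Gumbel}(0)$. Adding the constant $\alpha_j$ shifts the location, so $P(\eta_{j,k}\le g)=\exp(-\exp(-(g-\alpha_j)))=F_{\alpha_j}(g)$.

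\textbf{Independence across items.} The rows $\U_{j,:}$ are independent across $j\in\M$, and $\eta_{j,k}$ is a measurable function of row $j$ alone. Hence $(\eta_{j,k})_{j\in\M}$ are independent. The correlation $\rho$ only couples entries within a row, that is across columns, which is irrelevant for a single column. I should say explicitly that the claim is only about each column's marginal law, not the joint law of the $K$ column orders, which are dependent through $\rho$. Ties occur with probability zero, since the map $u\mapsto G^{-1}(\Phi(u))$ is strictly increasing and the $\U$ entries are continuous, so $h(\eta_{:,k})$ is almost surely a complete order.

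\textbf{The case $K=1$.} Here $h(\eta)=h(\eta_{:,1})$ is itself a complete order, so $\L[h]=\{h\}$. The noise-free likelihood \eqref{eq:lkd-noise-free} then puts mass one on $y=h(\eta)$, and hence $y\sim PL(\alpha;\M)$.

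I expect no real obstacle. The only points needing care are confirming that the unit-diagonal assumption gives standard normal marginals, and that independence across rows is preserved by the coordinatewise transformation.
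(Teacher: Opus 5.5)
Your proposal is correct and follows essentially the same route as the paper: the coordinatewise transform $G^{-1}(\Phi(\cdot))$ plus the shift by $\alpha_j$ makes each entry $\mbox{Gumbel}(\alpha_j)$, independence across rows lets you apply Lemma~\ref{lem:PL-gumbel} to each column, and for $K=1$ the complete order has a single linear extension, so the noise-free likelihood forces $y=h(\eta_{:,1})$. Your explicit remarks on the unit diagonal of $\Sigma_\rho$ and on ties having probability zero are small clarifications that the paper leaves implicit.
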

\begin{proof}
The CDF of a standard normal $\Phi$ is applied to each element of $\U_{j,:}$, so $\Phi(\U_{j,:})$ is a vector of correlated uniform random variables. 
Applying the inverse CDF of the Gumbel distribution to each element of this vector gives a vector of correlated standard Gumbel random variables, which we shift by $\alpha$ to get $\eta_{j,k}\sim \mbox{Gumbel}(\alpha_j),\ k=1,\dots, K$. 
They are independent for each $j\in \M$, so $h(\eta_{:,k})$ is a complete order distributed as $h(\eta_{:,k})\sim PL(\alpha;\M)$ by Lemma~\ref{lem:PL-gumbel}. If $K=1$ then $h(\eta(\U,\beta))=h(\eta_{:,1})$ is a complete order with only has one linear extension, $h(\eta_{:,1})$ itself, so with the noise-free likelihood in \eqref{eq:lkd-noise-free} we must observe $y=h(\eta_{:,1})$ and hence $y\sim PL(\alpha;\M)$.
\end{proof}

Equations~\eqref{eq:po-prior-gumbel-U} and \eqref{eq:po-prior-gumbel-eta} determine a prior distribution $\pi_\M(h|\rho,\beta)$ for $h\in \H_S$ for each choice set 
$S\in \B_\M$ with $m=|S|$. Suppose now $\U$ is an $m\times K$ matrix with one row for each element in $S$. Let 
\[
\eta(\U,\beta)=G^{-1}(\Phi(\U)) + X\beta\, 1^T_K
\]
be the $m\times K$ matrix with rows $\eta_{j,:},\ j\in S$. 
The random partial order $h(\eta(\U,\beta))$ has prior distribution 
\begin{equation}\label{eq:po-single-h-prior}
    \pi_S(h|\rho,\beta)=E_U( \mathbb{I}_{h(\eta(\U,\beta))=h} ),\quad h\in\H_S.
\end{equation}

Every partial order in $\H_\M$ has non-zero prior probability in this model.
\begin{corollary}\label{cor:po-support}
If $M\ge 4$ and $K\ge \lfloor M/2 \rfloor$, then $\pi_S(h|\rho,\beta)>0$ for any $0\le \rho<1,\ \beta\in\R^p$, $h\in \H_S$ and $S\in\B_\M$.
\end{corollary}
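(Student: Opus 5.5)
Two facts combine to give the result: a dimension bound showing every poset is an intersection of at most $K$ linear orders, and a positivity argument showing the map from $\U$ to $h(\eta(\U,\beta))$ takes each value $h$ on a set of positive Lebesgue measure.

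\emph{Step 1: dimension.} Fix $S\in\B_\M$ with $m=|S|\le M$ and $h\in\H_S$. The order dimension of $h$ is the least number of linear extensions whose intersection is $h$. By Hiraguchi's theorem it is at most $\lfloor m/2\rfloor$ when $m\ge 4$. For $m\le 3$ it is at most $2\le \lfloor M/2\rfloor$, using $M\ge 4$. Hence there are linear extensions $\ell_1,\dots,\ell_d\in\L[h]$ with $d\le\lfloor M/2\rfloor\le K$ and $h=\bigcap_{k\le d}\ell_k$. Padding with $\ell_k=\ell_1$ for $d<k\le K$ leaves the intersection unchanged. So $h=\bigcap_{k=1}^K \ell_k$ for some $\ell_1,\dots,\ell_K\in\C_S$.

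\emph{Step 2: an open target set.} Define
\[
A_h=\{\eta\in\R^{m\times K}:\ \mbox{column } k \mbox{ of } \eta \mbox{ induces the order } \ell_k \mbox{ for each } k\}.
\]
Each condition is a finite set of strict inequalities between entries of a column, so $A_h$ is open and nonempty. On $A_h$ we have $h(\eta)=h$ by \eqref{eq:UtoPO}.

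\emph{Step 3: positive probability.} The map $\U\mapsto \eta(\U,\beta)=G^{-1}(\Phi(\U))+X\beta\,1_K^T$ acts entrywise. Each entry map is a strictly increasing continuous bijection $\R\to\R$, so the whole map is a homeomorphism of $\R^{m\times K}$. Therefore the preimage of $A_h$ is open and nonempty. Because $0\le\rho<1$, $\Sigma_\rho$ has eigenvalues $1-\rho>0$ and $1+(K-1)\rho>0$, so it is positive definite. The rows $\U_{j,:}$ are independent $N(0,\Sigma_\rho)$, so $\U$ has an everywhere-positive Lebesgue density on $\R^{m\times K}$. It follows that $\pi_S(h|\rho,\beta)\ge P(\U\in\eta^{-1}(A_h))>0$ by \eqref{eq:po-single-h-prior}.

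\emph{Main obstacle.} The only delicate point is Step 1. I will cite Hiraguchi's bound $\dim(h)\le\lfloor m/2\rfloor$ for $m\ge4$, not prove it, and handle the small cases $m\le 3$ by the explicit bound $2\le\lfloor M/2\rfloor$. This explains the hypotheses $M\ge 4$ and $K\ge\lfloor M/2\rfloor$. Steps 2 and 3 are routine.
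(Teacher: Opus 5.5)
Your proposal is correct and follows the route the paper indicates: Hiraguchi's dimension bound writes $h$ as an intersection of at most $K$ linear extensions, and the positive-probability argument of Proposition~3 in \cite{nicholls25AOAS} is then transferred to the Gumbel-transformed weights. You also spell out two details the paper leaves implicit, namely choice sets with $|S|\le 3$ and the covariate offset, which you handle through the entrywise homeomorphism $\U\mapsto\eta(\U,\beta)$.
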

Corollary~\ref{cor:po-support} is a simple variant of Proposition~3 in \cite{nicholls25AOAS}, with Normal-$U$ replaced by Gumbel-$U$, and so the proof is essentially the same. It hinges on a result due to \cite{hiraguchi1951dimension}, who shows that any partial order on $M\ge 4$ elements can be written as the intersection of $\lfloor M/2\rfloor$ complete orders or fewer.

The latent variable setup \citep{winkler1985random} ensures that the family of prior distributions $\pi_S(\cdot|\rho,\beta),\ S\in \B_\M$ is marginally consistent. \cite{nicholls25AOAS} show this holds if $\beta=0_p$ in a timeseries model for random partial orders with context-dependent covariates. The result given here holds for all $\beta$ but the proof (given in Appendix~\ref{app:PO_model_single_PO}) is simpler because the covariates in \cite{nicholls25AOAS} are context-dependent.
\begin{corollary}\label{cor:po-MC}
The family of prior distributions $\pi_S(\cdot|\rho,\beta),\ S\in \B_\M$, is marginally consistent, that is if $h\sim \pi_\M(\cdot|\rho,\beta)$, then $h[S]\sim \pi_S(\cdot|\rho,\beta)$.
\end{corollary}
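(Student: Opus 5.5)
The plan is to prove marginal consistency by coupling: build the poset on $S$ and the poset on $\M$ from the same latent matrix, and check that restricting the big poset to $S$ gives exactly the small one. The argument rests on two facts. First, the latent rows are independent across items and identically structured. Second, the intersection-order map $h(\cdot)$ commutes with deletion of rows.

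First I fix notation. Let $\U\in\R^{M\times K}$ have independent rows $\U_{j,:}\sim N(0,\Sigma_\rho)$, $j\in\M$. Set $\eta=\eta(\U,\beta)=G^{-1}(\Phi(\U))+X\beta\,1_K^T$ and $h=h(\eta)\sim\pi_\M(\cdot|\rho,\beta)$. For $S\in\B_\M$, let $\U_S$ denote the $|S|\times K$ submatrix of rows indexed by $S$, and define $X_S$ and $\eta_S$ similarly.

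The first step is to observe that the construction is row-wise. The row $\eta_{j,:}$ depends only on $\U_{j,:}$ and on the covariate row $x_j$. The transforms $\Phi$ and $G^{-1}$ act elementwise, and the offset $x_j^T\beta\,1_K^T$ is row-specific and context-independent. It follows that $\eta(\U,\beta)_S=\eta(\U_S,\beta)$, where the right-hand side is the $|S|\times K$ matrix in the definition preceding \eqref{eq:po-single-h-prior}. Moreover, the rows of $\U_S$ are i.i.d.\ $N(0,\Sigma_\rho)$, because the rows of $\U$ are independent. So $\U_S$ has exactly the distribution used to define $\pi_S(\cdot|\rho,\beta)$.

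The second step, which is the combinatorial heart of the proof, is the identity $h(\eta)[S]=h(\eta_S)$. For $j_1,j_2\in S$, the relation $j_1\succ_\eta j_2$ holds if and only if $\eta_{j_1,k}>\eta_{j_2,k}$ for all $k$. This condition involves only rows $j_1$ and $j_2$, so it is the same condition that defines $j_1\succ_{\eta_S}j_2$. Since the suborder $h[S]$ keeps exactly the relations between pairs in $S$, the two posets have the same relation set on $S$. Equivalently, restriction commutes with the intersection in \eqref{eq:UtoPO}, because each column order satisfies $h(\eta_{:,k})[S]=h((\eta_S)_{:,k})$.

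Combining the two steps, for any $h'\in\H_S$,
\[
P(h[S]=h')=E_\U\bigl(\mathbb{I}_{h(\eta(\U_S,\beta))=h'}\bigr)=\pi_S(h'|\rho,\beta),
\]
where the last equality uses the fact that $\U_S$ has the law defining \eqref{eq:po-single-h-prior}. This proves the corollary.

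I expect the main potential obstacle to be the covariates. If the offset for item $j$ depended on the choice set, for instance through centering over $S$ as in the context-dependent covariates of \cite{nicholls25AOAS}, the row-wise identity in the first step would fail. I will therefore make explicit that $\alpha_j=x_j^T\beta$ is fixed per item and does not depend on $S$, which is exactly why the result holds for all $\beta$. Ties have probability zero because the marginals are continuous, so the posets are almost surely well defined.
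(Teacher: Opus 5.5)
Your proposal is correct and matches the paper's proof. Both arguments note that the maps act row by row with item-specific offsets $x_j^T\beta$, so that $h(\eta(\U,\beta))[S]=h(\eta(\U_{S,:},\beta))$. Both then use that the rows $\U_{j,:}$, $j\in S$, are i.i.d.\ $N(0,\Sigma_\rho)$, which are exactly the variables defining $\pi_S(\cdot|\rho,\beta)$. You spell out the step that restriction to $S$ commutes with the intersection order, which the paper only asserts, but this is the same route.
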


Suppose the data are realised on choice sets $S_i,\ i=1,\dots,N$, and $\M'=\cup_{i} S_i$ with $\M'\subseteq\M$ and $M'=|\M'|$. Items $j\in \M\setminus\M'$ are not present in any choice set so parameters $\U_{j,:}$ are dropped from the posterior (like PL, see remark at end of Appendix~\ref{app:PL-further-remarks}): they don't appear in the likelihood and by Corollary~\ref{cor:po-MC} we have all the prior marginals. We assume for simplicity that items in $\M'\subseteq\M$ are removed from the start so $\cup_{i} S_i=\M$.
The posterior parameters are $\beta\in \R^p,\rho\in [0,1)$ and $\U\in \R^{M\times K}$. 
The posterior is
\begin{align}
    \pi_{\M}(\rho,\U,\beta|Y)&\propto \pi_R(\rho)\,\pi_B(\beta)\,\pi(\U|\rho)\,p_{S_{1:N}}(Y|h(\eta(\U,\beta))),\label{eq:po-posterior-marginal}\\
    \pi(U|\rho) &= \prod_{j\in\M} N(\U_{j,:};0_K,\Sigma_\rho),\nonumber
\end{align}
where $p_{S_{1:N}}(Y|h(\eta(\U,\beta)))$ is given in \eqref{eq:po_lkd_full_noisefree}.
If we have samples $\rho^{(t)},\U^{(t)},\beta^{(t)},\ t = 1, \dots, T$ distributed according to $\pi(\rho,\U,\beta|Y)$, and we want samples from the marginal posterior over partial orders, then we simply set $h^{(t)}=h(\eta(\U^{(t)},\beta^{(t)})),\ t = 1, \dots, T$. 

When the data realised as a suborder we have to add auxiliary variables for the missing ranks, because our observation model is context dependent. The revised setup is outlined briefly in Appendix~\ref{app:PO_model_single_PO}.

In the analysis above, the number of columns $K$ is a fixed hyper-parameter of the prior over partial orders. It controls how ``expressive'' the paramerisation is. If we take $K$ at least $\lfloor M/2\rfloor$ then we can represent any poset in $\H_\M$. However, the parameter dimension increases linearly with $K$ and we would like to use the smallest value that gives a good fit to the data. \cite{jiang24} estimate $K$ using reversible jump MCMC. We extend this to the hierarchical setting. In the present single-poset setting the joint posterior is 
\begin{equation}\label{eq:po-posterior-vary-K}
\pi_{\M}(K,\rho,\U,\beta|Y)\propto\pi_K(K)\,\pi_R(\rho)\,\pi_B(\beta)\,\pi(\U|\rho,K)\,p(Y|h(\eta(\U,\beta))),
\end{equation}
where $\pi_K(K)$ is a prior for $K$ (geometric with mean chosen so that $\pi_K(K>M/2)\simeq 0.5$). The dimension of the $M\times K$ matrix $\U$ is now random. 

\section{A hierarchical model for grouped data}\label{sec:Hpo-labeled}

We now give a hierarchical model for rank-order data which come in labeled groups, one group for each assessor. The setup resembles \cite{crispino19} who give a hierarchical model for Mallows ranking. 
The model can be represented as a tree. It has a root node with label $0$ for the central poset and leaves with labels $1$ through $A$, one leaf for each assessor. The posets on the leaves have a conditional distribution given the root poset, so we can think of the model as ``shrinking'' the posets at the leaves towards the central poset. In the following, we write down priors for all these objects and give the posterior distribution for the hierarchy of posets.

\subsection{Grouped data and a hierarchy of posets}

The notation defined in this subsection is illustrated in Figures~\ref{fig:choice-set-hierarchy} and \ref{fig:po-hierarchy}. Let $\A=\{1,\dots,A\}$ be a set of $A$ ``assessors,'' who rank items in $\M=\{1,\dots,M\}$. 
For $a\in \A$, assessor $a$ is presented with $N_a$ choice sets $S_{a,i}\in\B_\M,\ i=1,\dots,N_a$ of varying size to order. 
Let $m_{a,i} = |S_{a,i}|$ be the number of items in the $i$'th set ordered by assessor $a$ and let $\M_a=\cup_{i=1}^{N_a} S_{a,i}$ be the set of items assessor $a$ was actually asked to rank with $M_a=|\M_a|$. Let $S_a=(S_{a,1},\dots,S_{a,N_a})$ and $S=(S_1,\dots,S_A)$. 
Let $\M_0=\cup_{a\in\A}\M_a$ be the set of items ranked by at least one assessor. For simplicity (and using marginal consistency) we assume $\M_0=\M$ so $|\M_0|=M$. 
Let 
$Y_{a,i}\in \C_{S_{a,i}}$ be the complete order assessor $a$ returned for choice set $S_{a,i}$.
Let $Y_{a}=(Y_{a,1},\dots,Y_{a,N_a})$ be the data returned from assessor $a$ and let $Y=(Y_1,\dots,Y_A)$ be all data. 
We condition on a fixed collection of choice sets which may be chosen in an arbitrary way. 
Relations between these sets are illustrated in Figure~\ref{fig:choice-set-hierarchy}. 
\begin{figure}
    \centering
    \hspace*{0in}\scalebox{1.25}{    \begin{tikzpicture}[thick,scale=0.7, every node/.style={scale=0.25,font=\Huge}]
        \node[draw, circle, minimum width=2.75cm,fill=pink] (0) at (0, 2) {$\M_0$};
        \node[draw, circle, minimum width=2.75cm] (1) at (-3, 0) {$\M_1$};
        \node[draw, circle, minimum width=2.75cm] (2) at (-0.5, 0) {$\M_2$};
        \node[] (3) at (0.75, 0) {};
        \node[] (4) at (1.75, 0) {};
        \node[draw, circle, minimum width=2.75cm] (5) at (3, 0) {$\M_A$};
        \edge [->] {0} {1};
        \edge [->] {0} {2};
        \edge [->] {0} {5};
        \edge [dotted,-] {3} {4};

      \node[draw, circle, minimum width=2.75cm,fill=lightgray] (6) at (-6, -2) {$S_{1,1}$};
      \node[draw, circle, minimum width=2.75cm,fill=lightgray] (7) at (-4, -2) {$S_{1,N_1}$};
      \edge [dotted,-] {6} {7};

      \node[draw, circle, minimum width=2.75cm,fill=lightgray] (8) at (-2, -2) {$S_{2,1}$};
      \node[draw, circle, minimum width=2.75cm,fill=lightgray] (9) at (0, -2) {$S_{2,N_2}$};
      \edge [dotted,-] {8} {9};

      \node[] (8b) at (1.5, -2) {};
      \node[] (9b) at (2.5, -2) {};
      \edge [dotted,-] {8b} {9b};

      \node[draw, circle, minimum width=2.75cm,fill=lightgray] (10) at (4, -2) {$S_{A,1}$};
      \node[draw, circle, minimum width=2.75cm,fill=lightgray] (11) at (6, -2) {$S_{A,N_A}$};
      \edge [dotted,-] {10} {11};

      \edge [->] {1} {6};
      \edge [->] {1} {7};
      \edge [->] {2} {8};
      \edge [->] {2} {9};
      \edge [->] {5} {10};
      \edge [->] {5} {11};
      
      
    \end{tikzpicture}}
    \caption{Caption}
    \label{fig:choice-set-hierarchy}
\end{figure}



Let $\U^{(a)}\in \R^{M\times K},\ a=0,1,\dots,A$ be $A+1$ real matrices with one $K$-dimensional row-vector $\U^{(a)}_j=(\U^{(a)}_{j,1},\dots,\U^{(a)}_{j,K})$ for each item $j\in\M_0$. The ``global'' preference weights are $\U^{(0)}$ and $\U^{(a)}$ are the preference-weights for assessor $a=1,\dots,A$. 
Let $H^{(0)} = h(\U^{(0)})$ with $H^{(0)}\in \H_{\M}$ be the global partial order, and for $a=1,\dots,A$ let $H^{(a)}=h(\U^{(a)}),\ H^{(a)}\in\H_{\M}$ be the partial order of preferences held by the $a$'th assessor. 

We will only need weight parameters $\u^{(a)}=(\U^{(a)}_{j,:})_{j\in\M_a}\in \R^{m_a\times K}$ and partial orders $h^{(a)}=H^{(a)}[\M_a]=h(\u^{(a)})\in\H_{\M_a}$ for items actually ranked because marginal consistency will allow us to drop the rest.
Write $\u = (\u^{(0)},\u^{(1)},\dots,\u^{(A)})$ and let $h = h(\u)$ (applied matrix by matrix), so that $h=(h^{(0)},h^{(1)},\dots,h^{(A)})$. The full parameterisations $\U$ and $H$ are defined in a similar way. 
The partial order hierarchy is illustrated in Figure~\ref{fig:po-hierarchy}.
\begin{figure}
    \centering
    \hspace*{0in}\scalebox{1.25}{    \begin{tikzpicture}[thick,scale=0.7, every node/.style={scale=0.25,font=\Huge}]
        \node[draw, circle, minimum width=2.75cm,fill=pink] (0) at (0, 2) {$h^{(0)}$};
        \node[draw, circle, minimum width=2.75cm] (1) at (-3, 0) {$h^{(1)}$};
        \node[draw, circle, minimum width=2.75cm] (2) at (-0.5, 0) {$h^{(2)}$};
        \node[] (3) at (0.75, 0) {};
        \node[] (4) at (1.75, 0) {};
        \node[draw, circle, minimum width=2.75cm] (5) at (3, 0) {$h^{(A)}$};
        \edge [->] {0} {1};
        \edge [->] {0} {2};
        \edge [->] {0} {5};
        \edge [dotted,-] {3} {4};

      \node[draw, circle, minimum width=2.75cm,fill=lightgray] (6) at (-6, -2) {$Y_{1,1}$};
      \node[draw, circle, minimum width=2.75cm,fill=lightgray] (7) at (-4, -2) {$Y_{1,N_1}$};
      \edge [dotted,-] {6} {7};

      \node[draw, circle, minimum width=2.75cm,fill=lightgray] (8) at (-2, -2) {$Y_{2,1}$};
      \node[draw, circle, minimum width=2.75cm,fill=lightgray] (9) at (0, -2) {$Y_{2,N_2}$};
      \edge [dotted,-] {8} {9};

      \node[] (8b) at (1.5, -2) {};
      \node[] (9b) at (2.5, -2) {};
      \edge [dotted,-] {8b} {9b};

      \node[draw, circle, minimum width=2.75cm,fill=lightgray] (10) at (4, -2) {$Y_{A,1}$};
      \node[draw, circle, minimum width=2.75cm,fill=lightgray] (11) at (6, -2) {$Y_{A,N_A}$};
      \edge [dotted,-] {10} {11};

      \edge [->] {1} {6};
      \edge [->] {1} {7};
      \edge [->] {2} {8};
      \edge [->] {2} {9};
      \edge [->] {5} {10};
      \edge [->] {5} {11};
      
      
    \end{tikzpicture}}
    \caption{Caption}
    \label{fig:po-hierarchy}
\end{figure}

\subsection{A prior for the poset hierarchy}
A prior on the latent weights $\u$ determines a prior for the hierarchy of posets $h$ via
\[
\pi_{\M_{0:A}}(h|\psi)=E_{\u|\psi}(\mathbb{I}_{h(\u)=h}),\quad h\in \H_{\M_{0:A}}.
\]
Here $\psi$ is a set of prior parameters specified below. 
Before we write down the prior, we list the properties it must posses. 
We need a model with a hyperparameter controlling the correlation of $h^{(0)}$ and $h^{(a)}$. 
If $\M_a=\M$ for $a=1,\dots,A$ (all assessors rank all items) then the posets are $H^{(0)},\dots,H^{(A)}$ and the prior is $\pi_{{\M}^{A+1}}(H|\psi),\ H\in\H_{\M^{A+1}}$. We assume assessors are exchangeable samples from a larger population of assessors so their posets, $H^{(1)},\dots,H^{(A)}$ are exchangeable also. For interpretive clarity, the global order $H^{(0)}$ should come from the same population, so marginally, $H^{(0)}\sim H^{(a)},\ a=1,\dots,A$.  

We also require the prior hierarchy to be marginally consistent.
Let $s_{0:A}=(s_0,s_1,\dots,s_A)$ be a generic collection of $A+1$ subsets $s_a\in\B_\M,\ a\in\A$. Define a set of suborders $H[s_{0:A}]=(H^{(0)}[s_0],H^{(1)}[s_1],\dots,H^{(A)}[s_A])$, so that $H[s_{0:A}]\in \H_{s_{0:A}}$.
\begin{definition} ({\it Marginally Consistent Hierarchy})
    The family of probability distributions $\pi_{{s_{0:A}}}(\cdot|\psi),\ s_{0:A}\in \B_\M^{A+1}$ is marginally consistent if $H\sim \pi_{{\M}^{A+1}}(\cdot|\psi)$ implies $H[s_{0:A}]\sim \pi_{{s_{0:A}}}(\cdot|\psi)$ for all $s_{0:A}\in \B_\M^{A+1}$ with $s_0=\cup_{a=1}^A s_a$.
\end{definition}
This says the prior we get by taking suborders of random partial orders on the universe of choices is the same as the prior we defined separately for the suborders.  
Finally, we would like the same control over the distributions of the depths $d(h^{(a)}), a = 0, 1, \dots, A$ that we had for a single partial order in Section~\ref{sec:PO-single-inference}.
We give the prior and set out its properties in Theorem~\ref{thm:Hpo-prior-gumbel-U}. See Appendix~\ref{app:hpo-prior-theorem-proof} for the proof and Figure~\ref{fig:prior-hierarchy-overview} for a graphical display of notation.
\begin{figure}
    \centering
    \begin{tabular}{c}
     $u^{(0)}_{j,:}\!\!\sim\! N(0_K,\Sigma_\rho),\, j\in \M_0$ \\[0.1in]
      \scalebox{1}{    \begin{tikzpicture}[thick,scale=0.7, every node/.style={scale=0.25,font=\Huge}]
        \node[draw, circle, minimum width=2.75cm,fill=pink] (0) at (0, 2) {$u^{(0)}$};
        \node[draw, circle, minimum width=2.75cm] (1) at (-2, 0) {$u^{(1)}$};
        \node[draw, circle, minimum width=2.75cm] (2) at (-0.5, 0) {$u^{(2)}$};
        \node[] (3) at (0.25, 0) {};
        \node[] (4) at (1.25, 0) {};
        \node[draw, circle, minimum width=2.75cm] (5) at (2, 0) {$u^{(A)}$};
        \edge [->] {0} {1};
        \edge [->] {0} {2};
        \edge [->] {0} {5};
        \edge [dotted,-] {3} {4};
    \end{tikzpicture}}\\[0.1in]
      $u^{(a)}_{j,:}\!\!\sim\! N(\tau u^{0}_{j,:},(1\!-\!\tau^2)\Sigma_\rho)$\\$a\in\A,j\in \M_a$
\end{tabular}
\begin{tabular}{c}
      $\eta^{(a)}_{j,:}=G^{-1}(\Phi(u^{(a)}_{j,:}))+\alpha_j 1_K$\\[0.1in]
      \scalebox{1}{    \begin{tikzpicture}[thick,scale=0.7, every node/.style={scale=0.25,font=\Huge}]
        \node[draw, circle, minimum width=2.75cm,fill=pink] (0) at (0, 2) {$\eta^{(0)}$};
        \node[draw, circle, minimum width=2.75cm] (1) at (-2, 0) {$\eta^{(1)}$};
        \node[draw, circle, minimum width=2.75cm] (2) at (-0.5, 0) {$\eta^{(2)}$};
        \node[] (3) at (0.25, 0) {};
        \node[] (4) at (1.25, 0) {};
        \node[draw, circle, minimum width=2.75cm] (5) at (2, 0) {$\eta^{(A)}$};
        \edge [->] {0} {1};
        \edge [->] {0} {2};
        \edge [->] {0} {5};
        \edge [dotted,-] {3} {4};
    \end{tikzpicture}}\\[0.1in]
      $a=0,1,\dots,A,\ j\in\M_a$\\
      \phantom{$a\in\A,j\in \M_a$}
\end{tabular}
\begin{tabular}{c}
        $h^{(0)}=h(\eta^{(0)})$\\[0.1in]
      \scalebox{1}{    \begin{tikzpicture}[thick,scale=0.7, every node/.style={scale=0.25,font=\Huge}]
        \node[draw, circle, minimum width=2.75cm,fill=pink] (0) at (0, 2) {$h^{(0)}$};
        \node[draw, circle, minimum width=2.75cm] (1) at (-2, 0) {$h^{(1)}$};
        \node[draw, circle, minimum width=2.75cm] (2) at (-0.5, 0) {$h^{(2)}$};
        \node[] (3) at (0.25, 0) {};
        \node[] (4) at (1.25, 0) {};
        \node[draw, circle, minimum width=2.75cm] (5) at (2, 0) {$h^{(A)}$};
        \edge [->] {0} {1};
        \edge [->] {0} {2};
        \edge [->] {0} {5};
        \edge [dotted,-] {3} {4};
    \end{tikzpicture}}\\[0.1in]
      $h^{(a)}=h(\eta^{(a)}),\ a=1,\dots,A$\\
      \phantom{$a\in\A,j\in \M_a$}
\end{tabular}
\\[-0.2in]
    \caption{Caption}
    \label{fig:prior-hierarchy-overview}
\end{figure}

\begin{theorem} (Hierarchical Partial Order prior) \label{thm:Hpo-prior-gumbel-U} For $\alpha_\M=X\beta$ and $\Sigma_\rho$ as in Section~\ref{sec:PO-single-inference}, for $0<\tau\le 1$ and sets $\M_a\in\B_\M,\ a\in \A$, let $\M_0=\cup_{a=1}^A \M_a$ and
    \begin{align} 
    \u^{(0)}_{j,:}&\sim N(0,\Sigma_\rho),\quad\mbox{independent for each $j\in \M_0$,}\label{eq:Hpo-prior-U0}\\
    \u^{(a)}_{j,:}|\u^{(0)}_{j,:} &\sim N\left(\tau\u^{(0)}_{j,:},\, (1-\tau^2)\Sigma_\rho\right)\quad\mbox{independent for each $a\in\A$ and $j\in \M_a$,}\label{eq:Hpo-prior-Ua}\\ 
    \eta^{(a)}_{j,:}&=G^{-1}(\Phi(\u^{(a)}_{j,:}))+\alpha_j 1^T_K \quad\mbox{for $a=0,1,\dots,A$ and $j\in \M_a$ and}\label{eq:Hpo-prior-eta}\\
    h&=h(\eta(\u,\beta))\quad\mbox{for $\eta(\u,\beta)=(\eta^{(0)},\eta^{(1)},\dots,\eta^{(A)})$ and $h^{(a)}=h(\eta^{(a)})$.}\label{eq:Hpo-prior-h}
\end{align}
For $h\in \H_{\M_{0:A}}$ let $\pi_{\M_{0:A}}(h|\rho,\beta,\tau)=E_{\u}(\mathbb{I}_{h(\eta(\u,\beta))=h})$ be the resulting prior for the poset hierarchy $h$. The marginal prior for $h^{(a)}$ is $\pi_{\M_{0:A}}(h^{(a)}|\rho,\beta,\tau)$. 
\begin{enumerate}
    \item (single PO marginals) For $a=0,1,\dots,A$, $\pi_{\M_{0:A}}(h^{(a)}|\rho,\beta,\tau)=\pi_{\M_a}(h^{(a)}|\rho,\beta)$ where $\pi_{\M_a}$ is the single partial-order prior given in \eqref{eq:po-single-h-prior};
    \item (PL hierarchy at $K=1$) when $K=1$, 
    $h^{(a)}\sim PL(\alpha_{\M_a},\M_a)$ for $a\in\A$;
    \item (independence at $\tau=0$) when $\tau=0$, $\pi_{\M_{0:A}}(h|\rho,\beta,\tau)=\prod_{a=0}^A\pi_{\M_{a}}(h^{(a)}|\rho,\beta,\tau)$;
    \item (matching posets at $\tau=1$) for each $a\in \A$, $\displaystyle\lim_{\tau\to 1}\pi_{\M_{0:A}}(h^{(a)}|\rho,\beta,\tau)=\mathbb{I}_{h^{(a)}=h^{(0)}[\M_a]}$;
    \item (support) if $K\ge \lfloor M/2\rfloor$ then $\pi_{\M_{0:A}}(h|\rho,\beta,\tau)>0$ for all $h\in\H_{\M_{0:A}}$; 
    \item (marginal consistency) If $H\sim \pi_{\M^{A+1}}(\cdot|\rho,\beta,\tau)$ then $H[s_{0:A}]\sim \pi_{s_{0:A}}(\cdot|\rho,\beta,\tau)$ for every $s_{0:A}\in \B_\M^{A+1}$ with $s_0=\cup_{a=1}^A s_a$.
\end{enumerate}
\end{theorem}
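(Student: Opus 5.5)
The plan is to work entirely at the level of the latent Gaussian rows, because every claim about posets follows once we know the joint law of $\u$. The map $\u\mapsto h(\eta(\u,\beta))$ acts row by row: it applies $G^{-1}\circ\Phi$ entrywise, adds $\alpha_j 1_K^T$, and then forms the intersection order \eqref{eq:UtoPO}. Hence any statement of the form ``the rows of $\u^{(a)}$ have law $\mathcal{D}$'' transfers directly to $h^{(a)}$.

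\textbf{Step 1.} I first compute marginals. From \eqref{eq:Hpo-prior-U0} and \eqref{eq:Hpo-prior-Ua}, $\u^{(a)}_{j,:}=\tau\u^{(0)}_{j,:}+\sqrt{1-\tau^2}\,Z^{(a)}_{j,:}$ with $Z^{(a)}_{j,:}\sim N(0,\Sigma_\rho)$ independent of everything else. Its covariance is $\tau^2\Sigma_\rho+(1-\tau^2)\Sigma_\rho=\Sigma_\rho$. The rows $j\in\M_a$ are independent because both $\u^{(0)}_{j,:}$ and $Z^{(a)}_{j,:}$ are independent across $j$. So $\u^{(a)}$ has exactly the law in \eqref{eq:po-prior-gumbel-U} restricted to $\M_a$, and applying the common deterministic map gives item 1.

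\textbf{Step 2.} Item 2 then follows from item 1 together with Theorem~\ref{thm:po-prior-gumbel-U} at $K=1$, since there $h(\eta^{(a)})=h(\eta^{(a)}_{:,1})$ is a Plackett--Luce order.

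\textbf{Step 3.} For item 3, setting $\tau=0$ makes $\u^{(a)}=Z^{(a)}$, so the blocks $\u^{(0)},\dots,\u^{(A)}$ are independent. Posets computed blockwise are therefore independent, and the joint prior factorises into the single marginals of item 1.

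\textbf{Step 4.} For item 4, as $\tau\to1$ we have $\u^{(a)}_{j,:}-\u^{(0)}_{j,:}\to0$ in $L^2$, hence in probability. The poset map is constant on the open set where all entries within each column are distinct and the relevant strict inequalities hold. The complement of this set is Lebesgue-null, and $\u^{(0)}$ has a density. It follows that $P\bigl(h(\eta(\u^{(a)}))\neq h(\eta(\u^{(0)}))[\M_a]\bigr)\to0$, by a continuity-set or Portmanteau argument. I read the stated limit as convergence of the conditional probability given $h^{(0)}$, equivalently of the joint probability of agreement to one.

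\textbf{Step 5.} For item 5, the joint density of $\u$ is strictly positive on all of $\R^{\sum_a m_aK}$ when $\tau<1$. By Hiraguchi's theorem, invoked as in Corollary~\ref{cor:po-support}, each target $h^{(a)}$ is realised by some $\eta$-matrix, which is strictly increasing in $\u$ entrywise. Perturbing a realising configuration slightly preserves all strict inequalities, so each target is realised on an open set. The product of these open sets has positive probability. At $\tau=1$ positivity fails off the diagonal $h^{(a)}=h^{(0)}[\M_a]$, so I would state item 5 for $\tau<1$.

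\textbf{Step 6.} For item 6, suborders correspond to deleting rows: $h(\eta)[s]=h(\eta_{s,:})$ because the relations between items in $s$ depend only on their rows. Deleting rows $j\notin s_a$ from $\U^{(a)}$, and rows $j\notin s_0$ from $\U^{(0)}$, yields the generative model \eqref{eq:Hpo-prior-U0}--\eqref{eq:Hpo-prior-Ua} with index sets $s_{0:A}$. This uses the facts that each retained child row depends only on the parent row with the same index $j$, and that $s_a\subseteq s_0$. The latter is guaranteed by the condition $s_0=\cup_a s_a$.

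The main obstacle is item 4. The difficulty is making the limit rigorous, namely handling the measure-zero boundary where ties arise, and fixing the precise sense of the limit. The remaining items are essentially bookkeeping on the Gaussian construction.
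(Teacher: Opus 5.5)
Your proposal is correct and follows essentially the same route as the paper's proof: it uses the decomposition $u^{(a)}_{j,:}=\tau u^{(0)}_{j,:}+\epsilon^{(a)}_{j,:}$ to get $N(0,\Sigma_\rho)$ marginals (items 1--2), independence of the blocks at $\tau=0$ (item 3), convergence in probability of $u^{(a)}$ to $u^{(0)}$ as $\tau\to1$ (item 4), and reuse of the single-poset support and marginal-consistency arguments (items 5--6). Your additions are refinements of points the paper's proof glosses over, not a different approach: you handle the null tie set in item 4, you read its limit as the agreement probability tending to one, and you note that item 5 needs $\tau<1$ (as well as the $M\ge 4$ hypothesis inherited from Corollary~\ref{cor:po-support}).
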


 The relation between $\u^{(0)}_{j,:}$ and $\u^{(a)}_{j,:}$ is equivalent to running a multivariate Ornstein-Uhlenbeck process $dX_t=-X_t+V^{1/2}dW_t$ with $V=2\Sigma_\rho$ starting at $X_0=\u^{(0)}_{j,:}$ to get $X_t\sim \u^{(a)}_{j,:}$ at $t=-2\log(\tau)$. The stationary distribution of this process is $N(0_K,\Sigma_\rho)$ so if we run the process for no time at all ($\tau=1$) we get back $\u^{(0)}_{j,:}$ and if we run it for an infinite amount of time ($\tau\to 0$) we get an independent draw from $N(0_K,\Sigma_\rho)$.


\subsection{Observation Model and posterior distribution} \label{sec:Hpo-labeled-lkd-post}

The parameters of the poset-prior are $\rho,\beta,\tau$ and $\u$. The noise-free likelihood in \eqref{eq:lkd-noise-free} adds no further parameters, so the dependence relations in the prior and likelihood are
\[\scalebox{1}{    \begin{tikzpicture}[thin,scale=1, every node/.style={scale=1,font=\large}]
       \node[] (1) at (0, 2) {$\rho$};
       \node[] (2) at (0, 0) {$\tau$};
       \node[] (3) at (2, 2) {$\u^{(0)}$};
       \node[] (4) at (2, 0) {$\u^{(a)}$};
       \node[] (5) at (4, 2) {$\beta$};
       \node[] (6) at (4, 0) {$\eta$};
       \node[] (7) at (6, 0) {$h$};
       \node[] (8) at (8, 0) {$Y$};
       \edge [thin,->] {1} {3};
       \edge [->] {1} {4};
       \edge [->] {2} {4};
       \edge [->] {3} {4};
       \edge [dashed,->] {3} {6};
       \edge [dashed,->] {4} {6};
       \edge [dashed,->] {5} {6};
       \edge [dashed,->] {6} {7};
       \edge [->] {7} {8};
    \end{tikzpicture}}\]
Deterministic steps given by \eqref{eq:Hpo-prior-eta} and \eqref{eq:Hpo-prior-h} are shown with dashed lines.
The priors for $\rho$ and $\beta$ are unchanged from Section~\ref{sec:PO-single-inference}. The prior for $\tau$ is $\tau\sim U(0,1)$. The prior for $\u=(\u^{(0)},\u^{(1)},\dots,\u^{(A)})$ is given in \eqref{eq:Hpo-prior-U0} and \eqref{eq:Hpo-prior-Ua}. 

The likelihood for these parameters given the grouped data is
\begin{align}
    p_S(Y|\beta,\u)=\prod_{a=1}^A \prod_{i=1}^{N_a} p_{S_{a,i}}(Y_{a,i}|h^{(a)}[S_{a,i}]) 
    \nonumber
\end{align}
with $p_{S_{a,i}}(Y_{a,i}|h^{(a)}[S_{a,i}])$ given in \eqref{eq:lkd-noise-free} and $h^{(a)}[S_{a,i}]$ a suborder depending only on $\beta$ and $\u$ through $h=h(\eta)$ and $\eta=\eta(\u,\beta)$. 
The posterior is
\begin{align}
    \pi_S(\rho,\beta,\tau,\u|Y)&\propto \pi_R(\rho)\pi_B(\beta)\pi_T(\tau)\pi(\u|\rho,\tau)p_S(Y|h(\u,\beta))\\
    &=
    \pi_R(\rho)\pi_B(\beta)\pi_T(\tau) \prod_{j\in\M_0} N\left(\u^{(0)}_{j,:};0_{K},\Sigma_\rho\right)\times \nonumber\\
    &\quad \prod_{a=1}^A \left[p_{S_{a}}(Y_{a}|h(\eta(\u^{(a)},\beta))) \prod_{j\in\M_a} N\left(\u^{(a)}_{j,:};\tau\u^{(0)}_{j,:},(1-\tau^2)\Sigma_\rho\right)\right].\label{eq:Hpo-posterior}
\end{align}
As in the single-poset analysis in Section~\ref{sec:PO-single-inference}, 
the latent dimension $K$ is inferred jointly with the partial order, 
following the varying-$K$ posterior in \eqref{eq:po-posterior-vary-K}.

\section{Clustering unlabeled orders and clustering assessors}
\label{sec:Hpo-cluster}
\def\N{{\mathcal{N}}}
\subsection{Clustering unlabeled orders}\label{sec:cluster-lists}
In Section~\ref{sec:Hpo-labeled} the data were $A$ groups $Y_a,\ a\in \A$ of orders. If we simply have a collection of $N$ lists $Y=(Y_1,\dots,Y_N)$ ordering $N$ choice sets $S=(S_1,\dots,S_N)$ and we think there may be some latent group structure, we add a partition-parameter $C=(C_1,\dots,C_G)$ clustering $\N=\{1,\dots,N\}$ into $G$ groups of size $n_g=|C_g|,\ g=1,\dots,G$. Let $\Xi_\N$ be the set of all partitions of $\N$. We give $C$ a Pitman-Yor prior with discount parameter $d$, concentration $\vartheta$ and $\gamma=(d,\vartheta)$,
\begin{equation}\label{eq:Hpo-cluster-c-prior}
      \pi_{\gamma,\N}(C)=\frac{\Gamma(\vartheta)}{\Gamma(\vartheta+N)}
                        \frac{d^{G}\Gamma(\vartheta/d+G)}{\Gamma(\vartheta/d)}
                        \frac{\prod_{g=1}^G \Gamma(n_g-d)}{\Gamma(1-d)},\quad C\in\Xi_\N.
\end{equation}
The main change here is that the number of groups $G$ and the orders $Y_i,\ i\in C_g$ associated with group $g=1,\dots,G$ are now random. Each group has matrix of preference weights $\U^{(g)}\in\R^{M\times K}$ and since $G$ is random the dimension $G\times M\times K$ of $\U=(\U^{(0)},\U^{(1)},\dots,\U^{(G)})$ is random. Notice that we don't drop unranked items from individual groups. If we took $\M_g=\cup_{i\in C_g} S_i$ and worked with $\u^{(g)}=\U^{(g)}_{\M_g,:}$ then the row-dimension of $\u^{(g)}$ would be random. This would lead to the same model, as marginal consistency still holds. To keep things simple in the MCMC we set $\M_g=\M$ for each group $g=1,\dots,g$; the row-dimension of $\U^{(g)}$ is always $M$ and the posets $h^{(g)}\in\H_\M$ all order the universe of choices (we assume any items missing from all orders were removed at the start, so $\M=\M_0$). We are leaving parameters in the model which we could marginalise exactly. If we are estimating $K$ using reversible jump MCMC then this adds another element of randomness to the overall parameter dimension. 

The likelihood now depends on the partition $C$, that is
\begin{align}\label{eq:Hpo-cluster-c-lkd}
    p_S(Y|C,h(\U,\beta))=\prod_{g=1}^G \prod_{i\in C_g} p_{S_i}(Y_i|h^{(g)}[S_i]),
\end{align}
with $h^{(g)}=h(\eta(\U^{(g)},\beta))$. The posterior becomes
\begin{align}
    \pi_S(\rho,\beta,\tau,\U,C|Y)\propto \pi_R(\rho)\pi_B(\beta)\pi_T(\tau)\,\pi(\U|\rho,\tau)\,\pi_{\gamma,\N}(C)\,p_S(Y|C,h(\U,\beta)),\label{eq:Hpo-cluster-posterior}
\end{align}
with $\pi_{\gamma,\N}$ given in \eqref{eq:Hpo-cluster-c-prior} and $p_S(Y|C,h(\U,\beta))$ in \eqref{eq:Hpo-cluster-c-lkd} and otherwise as \eqref{eq:Hpo-posterior}. As for the single poset analysis in Section~\ref{sec:PO-single-inference} and the hierarchical model for labeled lists in Section~\ref{sec:Hpo-labeled}, the likelihood may be extended to allow for noise using the models in the next section, and $K$ may also be variable in the posterior, as in \eqref{eq:po-posterior-vary-K}. 

\subsection{Clustering assessors}\label{sec:cluster-assessors}
When the number of assessors is large (as in the sound data, with $A=46$ assessors) it may be useful to cluster them in groups of assessors with similar preferences.
The setup is similar to the previous section.
The partition $C\in \Xi_\A$ now clusters $\A=\{1,\dots,A\}$ into $G$ groups with $n_g=|C_g|,\ g=1,\dots,G$ as before. The Pitman-Yor prior $\pi_{\gamma,\A}(C)$ for $C$ is unchanged. 

Recall that the data associated with assessor $a\in\A$ is $N_a$ orders $Y_a=(Y_{a,1},\dots,Y_{a,N_a})$. The data in cluster $g$ are now $\{Y_a: a\in C_g\}$, so $\sum_{a\in C_g} N_a$ orders in all. In the MCMC, when we move an assessor from one cluster to another, \emph{all} their orders travel with them, rather than moving one order at a time as in the previous section. In other respects the parameterisation and priors are the same.

The likelihood becomes
\begin{align}\label{eq:Hpo-assessor-cluster-c-lkd}
    p_S(Y|C,h(\U,\beta))=\prod_{g=1}^G \prod_{a\in C_g} \prod_{i=1}^{N_a} p_{S_{a,i}}(Y_{a,i}|h^{(g)}[S_{a,i}]),
\end{align}
with $h^{(g)}=h(\eta(\U^{(g)},\beta))$. The posterior for assessor-clustering in given in \eqref{eq:Hpo-cluster-posterior} now uses the assessor-clustering likelihood in \eqref{eq:Hpo-assessor-cluster-c-lkd} (and we cluster assessors, not lists, so $C\in\Xi_\A$ and and the $C$-prior is $\pi_{\gamma,\A}(C)$).

\section{Extending the observation model to allow for noise}\label{sec:lkd-noisy}

In this section we give some further observation models. For ease of exposition, we drop the assessor labels and return to the ``single-partial-order'' setup of Sections~\ref{sec:po-foundation-models} and \ref{sec:PO-single-inference}.

\subsection{Repeated selection models}\label{sec:sequential-choice-models}

A draw $y\sim p_S(\cdot|h)$ from the model in \eqref{eq:lkd-noise-free} can be realised sequentially, building up the list one element at a time. Models of this kind are called {\it repeated selection} models. Many models for preference orders are repeated selection models: all our observation models, the Plackett-Luce model and some Mallows-$\phi$ models. 

Let $y_{i:m}=(y_i,\dots,y_m)$ so that $h[y_{i:m}]=(y_{i:m},\succ_h)$ is the suborder of $h$ restricted to $y_{i:m}$.
For $j\in S$ let 
\[
\L_j[h] = \{\ell \in \L[h]: \max(\ell) = j\}
\]
be the set of all linear extensions of $h$ with maximal element $j$.
We can write $p_S(y|h)$ in \eqref{eq:lkd-noise-free} as a telescoping product over suborders
\begin{align}\label{eq:po_lkd_telescope}
    p_S(y|h)
    =\prod_{i=1}^{m-1} q_{y_{i:m}}(y_i|h[y_{i:m}]),
\end{align}
where
\begin{equation}\label{eq:po-lkd-select-first}
    q_{y_{i:m}}(y_i|h[y_{i:m}])=\frac{|\L_{y_i}[h[y_{i:m}]]|}{|\L[h[y_{i:m}]]|}
\end{equation}
is the probability $y_i$ is selected next from the remaining choices. 
The product in \eqref{eq:po_lkd_telescope} is equal to $p_S(y|h)$ in \eqref{eq:lkd-noise-free} because the number of LEs $|\L_{y_i}[h[y_{i:m}]]|$ headed by $y_i$ is the number of LEs $=|\L[h[y_{i+1:m}]]|$ in the suborder that remains after $y_i$ is removed. 
The likelihood for top-$k$ data truncates the product in \eqref{eq:po_lkd_telescope} at $k$.


\subsection{Queue jumping noise}\label{sec:noise-qj}

\cite{nicholls25AOAS} allow for noise in the model for observed orders by allowing individuals to ``jump the queue'' in a repeated selection model. 
Before the $i$'th entry of $y_{1:m}$ is chosen, there are $m-i+1$ elements of $S$, with labels $y_{i:m}$, yet to be placed. With probability $p$ the next entry (i.e., $y_i$) is chosen at random, ignoring any order constraints and otherwise $y_i$ is  $y_i$ is chosen next with probability $q_{y_{i:m}}(y_i|h[y_{i:m}])$ in \eqref{eq:po-lkd-select-first}. 
Working from the top down,
\begin{align}\label{eq:lkd-noisy-down-one-list}
p_S(y_{1:m}|h,p)&=\prod_{i=1}^{m-1}q(y_{i:m}|h[y_{i:m}],p)\\
\intertext{where}
q(y_{i:m}|h[y_{i:m}],p)&=\frac{p}{m-j+1}+(1-p)\,q_{y_{i:m}}(y_i|h[y_{i:m}]).
\label{eq:lkd-noisy-down-one-list-q}
\end{align}
When we use this model we replace the likelihood $p_{S_{i}}(Y_{i}|h[S_{i}])$ in \eqref{eq:po_lkd_full_noisefree} with $p^{(D)}_{S_{i}}(Y_{i}|h[S_{i}],p)$ in \eqref{eq:lkd-noisy-down-one-list}. The posteriors \eqref{eq:po-posterior-marginal}, \eqref{eq:Hpo-posterior} and \eqref{eq:Hpo-cluster-posterior} add a parameter $p\in [0,1]$ with a uniform prior.

\subsection{$\eta$-weighted queue-jumping}\label{sec:eta-weighted-QJ}



The queue-jumping model has the weakness that it doesn't take into account how many places an item jumps, or the number of preference relations which are contradicted by the error.
Referring to \eqref{eq:lkd-noisy-down-one-list-q}, an error occurs with probability $p$. When this happens the next element in the list is chosen uniformly at random from this that remain. We let this choice depend on $\eta(\U,\beta)$:
for $j\in \M$ let $\bar\eta_j=K^{-1}\sum_{k=1}^K \eta_{j,k}$; when an error occurs, the next element $y_i$ in the list is chosen with probability proportional to $\exp(\bar\eta_{y_i})$ as in \eqref{eq:PL-def-one-step}; when there is no error the next element is chosen according to the noise-free partial-order model in \eqref{eq:po-lkd-select-first}; the modified queue-jumping likelihood is
\begin{align}\label{eq:lkd-noisy-down-one-list-eta-weighted}
p_S(y_{1:m}|\eta,p)&=\prod_{i=1}^{m-1}q(y_{i:m}|\eta_{y_{i:m},:},p)\\
\intertext{where}
q(y_{i:m}|\eta_{y_{i:m}},p)&=p\,q(y_i|\bar\eta_{y_{i:m}})+(1-p)\,q_{y_{i:m}}(y_i|h(\eta_{y_{i:m}}))
\label{eq:lkd-noisy-down-one-list-eta-weighted-q}
\intertext{and}
q(y_i|\bar\eta_{y_{i:m}})&=\frac{e^{\bar \eta_{y_i}}}{\strut\sum_{j=i}^m e^{\bar \eta_{y_{j}}}}
\end{align}
is the one-step Plackett-Luce probability to select $y_i$ from $y_{i:m}$ given weights $\bar\eta_{y_{i:m}}$. The choice probability for the next element in the list is a mixture of Plackett-Luce and the noise-free partial order observation model \eqref{eq:lkd-noise-free}. 

This captures the idea that errors become less likely as the distance between $j_1$ and $j_2$ on $h$ increases. Let $h=h(\U)$ be a poset on $\M$, $s=(j_1,j_2)$ a choice set with $j_1\succ_h j_2$ and let $\delta=|\{j\in \M: j_1\succ_h j\succ_h j_2\}|$ be the number of items between $j_1$ and $j_2$. If the data are $y=(j_2,j_1)$, which is only possible if there was a jump, then
\[
p_S(y|\eta,p)=p\frac{1}{1+e^{\bar\eta_{j_1}-\bar\eta_{j_2}}}.
\]
If $\delta$ is large then $\bar\eta_{j_1}\gg \bar\eta_{j_2}$ (typically, as $\eta_{j_1,k}>\eta_{j,k}>\eta_{j_2,k}$ for each $k$ and many $j$).
The probability for an error goes to zero as $\bar\eta_{j_1}-\bar\eta_{j_2}\to\infty$. If $\delta$ is small then $\bar\eta_{j_1}-\bar\eta_{j_2}$ is typically small (similar reasoning) and in this case the probability for error approaches $p/2$.

When $p=1$ the mixture in \eqref{eq:lkd-noisy-down-one-list-eta-weighted} reduces to Plackett-Luce in \eqref{eq:PL-def-one-step} with weights $\alpha_j=\bar\eta_j$ and in particular when $K=1$ and $\eta_{j,:}=\lambda_j+x_{j}\beta$ with $\lambda_j=G^{-1}(\Phi(\U_{j,:}))$ we have Plackett-Luce in familiar form. This gives an alternative way to nest Plackett-Luce within our model. 

\subsection{Frontier-softmax likelihood}
\label{sec:front_softmax_qj}
This is a variant of the queue-jumping likelihood, proposed in \cite{li26delinearizing} to model execution traces from LLMs. Consider a poset $h\in\H_S$ and an order $y_{1:m}$. The $i$'th item $y_i$ is chosen from the \emph{frontier} 
$\max(h[y_{i:m}])$ with probability weighted by the number of items
\[
D(y_i; h[y_{i:m}])=1+|\{j\in y_{i:m}: y_i\succ_h j\}|
\]
which $y_i$ dominates (counting itself). This expresses the idea that a well-trained agent is biased toward resolving the hardest paths first, because it’s a waste of resources to work on the easy part if the hard part fails. In \cite{li26delinearizing} the utility for choosing $j\in y_{i:m}$ is
\begin{equation}
Q(j;h[y_{i:m}])=
\begin{cases}
\log(D(j;h[y_{i:m}])), & j\in\max(h[y_{i:m}]),\\
-\infty, & \text{otherwise}.
\end{cases}
\label{eq:q_succ}
\end{equation}
and the noise-free selection probability is
\begin{equation}\label{eq:lkd-noisy-frontier}
q_{y_{i:m}}(y_i|h[y_{i:m}],\varphi)=\frac{\exp\!\left(\varphi \, Q(y_i;h[y_{i:m}])\right)}
{\sum_{j\in \max(h[y_{i:m}])} \exp\!\left(\varphi\, Q(j;h[y_{i:m}])\right)}
\end{equation}
with $\varphi\ge 0$ a parameter controlling how sharply the distribution
concentrates on high-scoring items in the frontier.
This is embedded in a queue-jumping model for the probability $p_S(y_{1:m}|h,p,\varphi)$ to realise $y_{1:m}$ by replacing $q_{y_{i:m}}(y_i|h[y_{i:m}])$ in \eqref{eq:lkd-noisy-down-one-list-q} with $q_{y_{i:m}}(y_i|h[y_{i:m}],\varphi)$ in \eqref{eq:lkd-noisy-frontier}.
\cite{li2026delinearizing} call this the \emph{frontier-softmax model}. Although all our likelihoods are in effect defining a utility like $Q$ for choosing $j$ when $y_{i:m}$ remain, they are all generative models. Frontier-softmax approaches modeling $via$ utility. For a wider discussion of random utility models for orders and background literature see \cite{seshadri21}.

The frontier-softmax likelihood is computationally attractive: given the
frontier and dominance counts, each step only requires a softmax over the
frontier. It is fast to evaluate, while still providing a simple and useful generative model for partially ordered data.

\section{Experiments}
\label{sec:experiments}

This section illusterates and evaluates the models introduced in Sections~\ref{sec:Hpo-labeled}--\ref{sec:lkd-noisy} in four ways.
First, we compare the computational cost of the three observation models.
Second, on synthetic data, we study when HPO and HCPO recover the latent partial-order structure and (for HCPO) the latent clustering.
Third, on agent-trace data, we ask whether hierarchical pooling improves recovery of executable dependency graphs.
Fourth, on human preference data, we compare HCPO with total-order baselines in terms of predictive fit and interpretability.
Unless otherwise stated, structural recovery is evaluated on the transitive closure of the inferred posets; definitions of precision, recall, F1, feasibility, and incomparability-pair coverage (IP-Cov) are given in Appendix~\ref{sec:metrics} and Appendix~\ref{app:metrics_ipcov}.
Throughout this section we use \emph{frontier-softmax} as shorthand for the frontier-softmax likelihood of Section~\ref{sec:front_softmax_qj}. We set the penalty parameter $\varphi=1$ in \eqref{eq:lkd-noisy-frontier}.

\subsection{Likelihood speed benchmarks}
\label{app:likelihood_speed_benchmark}

Before comparing statistical performance, we benchmark the cost of the three likelihoods from Section~\ref{sec:lkd-noisy}.
The goal is not to optimize implementations, but to compare the two exact linear-extension-based likelihoods (queue-jump and weighted queue-jump) with frontier-softmax. 

Because frontier-softmax has better scaling with list-length, and the same support on linear extensions, it provides a computationally attractive alternative to the queue-jump likelihood. We report per-iteration MCMC runtime as a function of the number of items $M$ in both an HPO setting and a clustered HCPO setting.

In the single-poset benchmark we use synthetic data generated at $\tau=0.5$ with $A=5$ assessors, $N_a=10$ full-length lists per assessor, so $S_{a,i}=\M_a=\M_0$ and $m_{a,i}=M$ for $a=1,\dots,A$ and all $i=1,\dots,N_a$, with $M$ varying from $5$ to $50$. Runtime will grow linearly with $A$ and $N_a$, hence the focus on varying list-length. We run MCMC targeting the posterior, averaging over 1000 MCMC iterations and 3 simulated data sets. In Table~\ref{tab:likelihood_speed_benchmark_hpo}, queue-jump and weighted queue-jump have similar runtimes, while frontier-softmax is substantially faster at every problem size and is the only computable likelihood at $M=50$.

\begin{table}[h]
\centering
\small
\caption{\textbf{Per-iteration (milliseconds) runtime benchmark by number of nodes $M$ in the single-poset (HPO-style) setting.}}
\label{tab:likelihood_speed_benchmark_hpo}
\setlength{\tabcolsep}{6pt}
\renewcommand{\arraystretch}{1.05}
\begin{tabular}{rccc}
\toprule
\textbf{Nodes $M$} & \textbf{Queue-jump} & \textbf{Weighted queue-jump} & \textbf{frontier-softmax} \\
\midrule
5  & $1.09 \pm 0.23$   & $1.19 \pm 0.28$   & $0.80 \pm 0.03$ \\
10 & $11.30 \pm 0.76$   & $11.3 \pm 2.2$   & $1.47 \pm 0.07$ \\
20 & $59 \pm 21$   & $66 \pm 22$   & $3.03 \pm 0.35$ \\
30 & $100 \pm 12$ & $103 \pm 11$ & $4.82 \pm 0.74$ \\
50 & \texttt{timeout}  & \texttt{timeout}  & $11.4 \pm 2.9$ \\
\bottomrule
\end{tabular}

\vspace{0.25em}
\footnotesize
\end{table}

The same qualitative conclusion holds in HCPO.
Here we use $G=2$ latent groups, $A=5$ assessors, $N_a=10$ lists per assessor, and two representative coupling values, $\tau\in\{0.2,0.8\}$.
Results for clustering runs are shown in Table~\ref{tab:likelihood_speed_benchmark_hcpo}.
The two exact likelihoods time out at large $M$, whereas frontier-softmax stays below 20 ms per iteration even at $M=50$.

\begin{table}[t]
\centering
\small
\caption{\textbf{Per-iteration MCMC runtime (milliseconds) for the clustered HCPO setting under two representative coupling regimes.}
Benchmarks use $G=2$ clusters, $A=5$ assessors, and 50 lists.}
\label{tab:likelihood_speed_benchmark_hcpo}
\setlength{\tabcolsep}{6pt}
\renewcommand{\arraystretch}{1.05}
\begin{tabular}{crrr}
\toprule
$M$ & \textbf{Queue-jump} & \textbf{Weighted queue-jump} & \textbf{Frontier-softmax} \\
\midrule
\multicolumn{4}{l}{\textit{Coupling }\boldmath$\tau=0.2$} \\
5  & $1.20 \pm 0.36$   & $1.56 \pm 0.49$   & $0.98 \pm 0.11$ \\
10 & $23.6 \pm 10.9$   & $29.3 \pm 11.4$   & $2.19 \pm 0.15$ \\
20 & $134.9 \pm 43.8$  & $99.7 \pm 22.4$   & $3.51 \pm 0.41$ \\
30 & $457 \pm 284$     & $265 \pm 100$     & $11.6 \pm 8.7$ \\
50 & \texttt{timeout} & \texttt{timeout} & $14.8 \pm 2.5$ \\
\midrule
\multicolumn{4}{l}{\textit{Coupling }\boldmath$\tau=0.8$} \\
5  & $1.05 \pm 0.34$   & $1.19 \pm 0.32$   & $0.84 \pm 0.01$ \\
10 & $19.7 \pm 5.4$    & $18.2 \pm 3.6$    & $1.68 \pm 0.07$ \\
20 & $80.1 \pm 26.4$   & $78.8 \pm 18.4$   & $3.53 \pm 0.50$ \\
30 & $169 \pm 45$      & $179 \pm 62$      & $6.44 \pm 1.05$ \\
50 & \texttt{timeout} & \texttt{timeout} & $16.7 \pm 3.1$ \\
\bottomrule
\end{tabular}

\vspace{0.25em}
\end{table}

\subsection{Experiment~A: Synthetic HPO study}
\label{sec:synth_hpo_core}

Experiment~A asks when HPO can recover a latent hierarchy of posets when the group labels are observed.
We generate data with $M=10$ items and $A=5$ assessors, and vary five design factors: coupling $\tau\in\{0,0.5,0.9\}$, noise probability $p\in\{0.01,0.1\}$, number of lists per assessor $L\in\{5,10,20\}$, choice-set sizes/list lengths $m\in\{2,5,10\}$, and observation model (queue-jump, weighted queue-jump, or frontier-softmax).
For each cell of the resulting $3\times2\times3\times3\times3=162$-configuration grid we fit the likelihood-matched HPO model, so the fitted model is always well specified.
All runs use $5\times 10^5$ MCMC iterations with 50\% burn-in; Appendix Table~\ref{tab:synth_runtime} reports runtimes. For each fitted posterior we consider four poset-estimators: the marginal posterior-mode poset (MPM-poset, which \cite{li2026delinearizing} show is the Bayes estimator for a loss counting false positive and false negative edges) and consensus posets \citep{nicholls25AOAS} which keep edges with marginal posterior probabilities above a threshold $t\in\{0.33,0.50,0.70\}$. In the main text we always report MPM-posets. 

Recovering the true hierarchy of posets is decided mainly by how much local ordering information each observation contains.
Figure~\ref{fig:blocka_precision_fpr} plots the precision against the false-positive rate for poset edges in the recovered hierachy: reconstructions with $m=2$ are noisy and variable, while those with $m=10$ in the top left corner have high precision and low false-positive rates. In Tables~\ref{tab:blocka_marginal_means} and~\ref{tab:blocka_effect_sizes}, increasing $m$ from 2 to 10 raises the mean F1 from $0.457$ to $0.811$, a larger change than the corresponding marginal effects of $\tau$, $L$, likelihood choice, or noise probability.
Stronger coupling and more lists also help, but their impact is secondary.

Performance across likelihood in Table~\ref{tab:blocka_marginal_means} shows they provide similar levels of information in these well-specified analyses, though the extension-based likelihoods dominate at large list lengths (see Figure~\ref{fig:blocka_precision_fpr}, where Frontier-softmax has few points at top left). Appendix Figure~\ref{fig:blocka_threshold_sensitivity} shows that these conclusions are unchaged if we replace the MPM-poset estimator with posets estimated by thresholding posterior edge probabilities. Appendix Figures~\ref{fig:blocka_true_vs_inferred_poset} and~\ref{fig:blocka_posterior_traces}  show a
representative high-signal run. The inferred global and assessor-specific
posets recover most of the edges, but also exhibit
local discrepancies from the truth. The accompanying posterior traces
provide a diagnostic check on mixing after burn-in.

\begin{table}[t]
\centering
\small
\caption{\textbf{Marginal mean recoverability metrics by design factor in Experiment~A.}
Precision, recall, and F1 are computed on ordered comparable pairs in the transitive closure of the estimated and ground-truth posets, and then averaged over all remaining dimensions of the synthetic design grid. See the definition in Appendix \ref{sec:metrics}}
\label{tab:blocka_marginal_means}
\setlength{\tabcolsep}{8pt}
\renewcommand{\arraystretch}{1.05}
\begin{tabular}{llccc}
\toprule
\textbf{Factor} & \textbf{Level} & \textbf{Precision} & \textbf{Recall} & \textbf{F1} \\
\midrule
\multirow{3}{*}{task size}
  & 2  & 0.343 & 0.684 & 0.457 \\
  & 5  & 0.649 & 0.669 & 0.659 \\
  & 10 & 0.809 & 0.814 & 0.811 \\
\midrule
\multirow{3}{*}{$L$}
  & 5  & 0.441 & 0.667 & 0.531 \\
  & 10 & 0.527 & 0.734 & 0.614 \\
  & 20 & 0.635 & 0.751 & 0.688 \\
\midrule
\multirow{3}{*}{$\tau$}
  & 0.0 & 0.426 & 0.648 & 0.514 \\
  & 0.5 & 0.511 & 0.668 & 0.579 \\
  & 0.9 & 0.627 & 0.810 & 0.707 \\
\midrule
\multirow{2}{*}{$p$}
  & 0.01 & 0.532 & 0.751 & 0.623 \\
  & 0.10 & 0.519 & 0.683 & 0.590 \\
\midrule
\multirow{3}{*}{likelihood}
  & \texttt{queue\_jump}                  & 0.536 & 0.797 & 0.641 \\
  & \texttt{weighted\_queue\_jump}        & 0.570 & 0.686 & 0.622 \\
  & \texttt{frontier\_softmax\_queue\_jump} & 0.486 & 0.702 & 0.574 \\
\bottomrule
\end{tabular}
\end{table}

\begin{table}[t]
\centering
\small
\caption{\textbf{Effect sizes for Experiment~A design factors.}
For each factor, the effect size is defined as the difference between the largest and smallest marginal mean across its levels.}
\label{tab:blocka_effect_sizes}
\setlength{\tabcolsep}{8pt}
\renewcommand{\arraystretch}{1.05}
\begin{tabular}{lccc}
\toprule
\textbf{Factor} & \textbf{$\Delta \mathrm{F1}$} & \textbf{$\Delta$ Precision} & \textbf{$\Delta$ Recall} \\
\midrule
task size  & 0.354 & 0.465 & 0.144 \\
$\tau$     & 0.193 & 0.202 & 0.162 \\
$L$        & 0.157 & 0.194 & 0.083 \\
likelihood & 0.067 & 0.084 & 0.112 \\
$p$        & 0.033 & 0.013 & 0.068 \\
\bottomrule
\end{tabular}
\end{table}

\begin{figure}[h]
\centering
\includegraphics[width=0.9\linewidth]{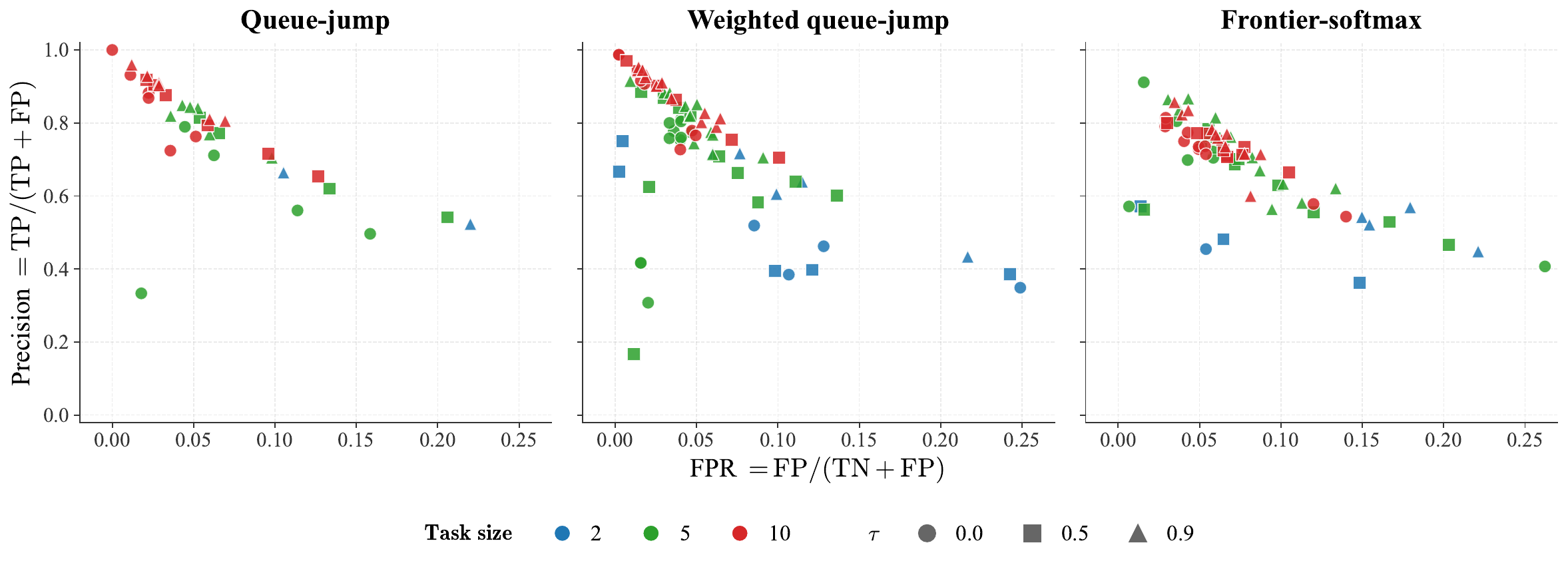}
\caption{\textbf{Experiment~A: precision versus false-positive rate for the posterior-mode estimator.}
Each point corresponds to one fitted configuration from the core HPO grid. Colors indicate the observation model,
and marker shapes indicate the choice-set size $s$. The upper-left corner corresponds to the most accurate structural
recovery (high precision, low false-positive rate).}
\label{fig:blocka_precision_fpr}
\end{figure}

In summary, Experiment~A shows that longer observed lists and stronger coupling matter more than modest differences in noise level, as these factors determine the amount of information in the data.
When the observed choice sets are informative, both exact likelihoods recover the latent structure well; frontier-softmax remains useful in lower-signal regimes, but its main advantage is computational rather than statistical.

\subsection{Experiment B: Synthetic HCPO study with clusters}
\label{sec:expB}

We now consider unlabeled data and ask two questions: can HCPO in Section~\ref{sec:Hpo-cluster} recover latent group structure, and does modeling that structure improve poset recovery relative to a forced single-poset fit using the posterior in \eqref{eq:po-posterior-vary-K}. We use a $2\times2\times2\times3$ design with coupling $\tau\in\{0.2,0.9\}$, likelihood in \{weighted queue-jump, frontier-softmax\}, clustering mode in \{list,assessor\} (the models in Sections~\ref{sec:cluster-lists} and \ref{sec:cluster-assessors} respectively), and three independent replicates per design cell.
Each of the 24 synthetic datasets has $M=10$ items, $A=10$ assessors, $G_{\mathrm{true}}=3$ true latent groups, noise probability $p=0.05$, $L=10$ lists per assessor, and random choice sets so list lengths are random from 5 and 10. Noise is low and lists are quite long relative to poset-size to these data are fairly informative in the sense of Experiment~A. The true model is always HCPO. After fitting HCPO, we extract a VI-optimal partition using the estimator of \cite{WadeGhahramani2018} (see Appendix~\ref{sec:VI-lower-bound-define}) and refit HPO conditional on that fixed partition to summarize the cluster-specific posets.

Figure~\ref{fig:blockb_precision_fpr} shows that both likelihoods recover clustered partial-order structure with high precision and low FPR, and many runs recover the correct number of clusters. Each plot has 12 points for two $\tau$-levels, assessor or list clustering and three replicates. The greatest variation across replicates is seen in the most heterogenous setting, weighted queue-jump with $\tau=0.2$ and list-level clustering, but is still small (all points FPR in $(0,0.05)$ and precision in $(0.9,1)$). Table~\ref{tab:block_b_results_main} shows the gains over the single-poset fit.
When coupling is weak ($\tau=0.2$), the benefit of clustering is substantial, because there is alot of heterogeneity across clusters. HCPO improves over the single-poset baseline by between $0.163$ and $0.529$ in cluster-weighted F1.
When coupling is strong ($\tau=0.9$), the groups become more similar and the gain from clustering is smaller, though still positive in every setting.
This is the regime in which one would expect a non-hierarchical model to be competitive.

\begin{figure}[t]
\centering
\includegraphics[width=0.8\linewidth]{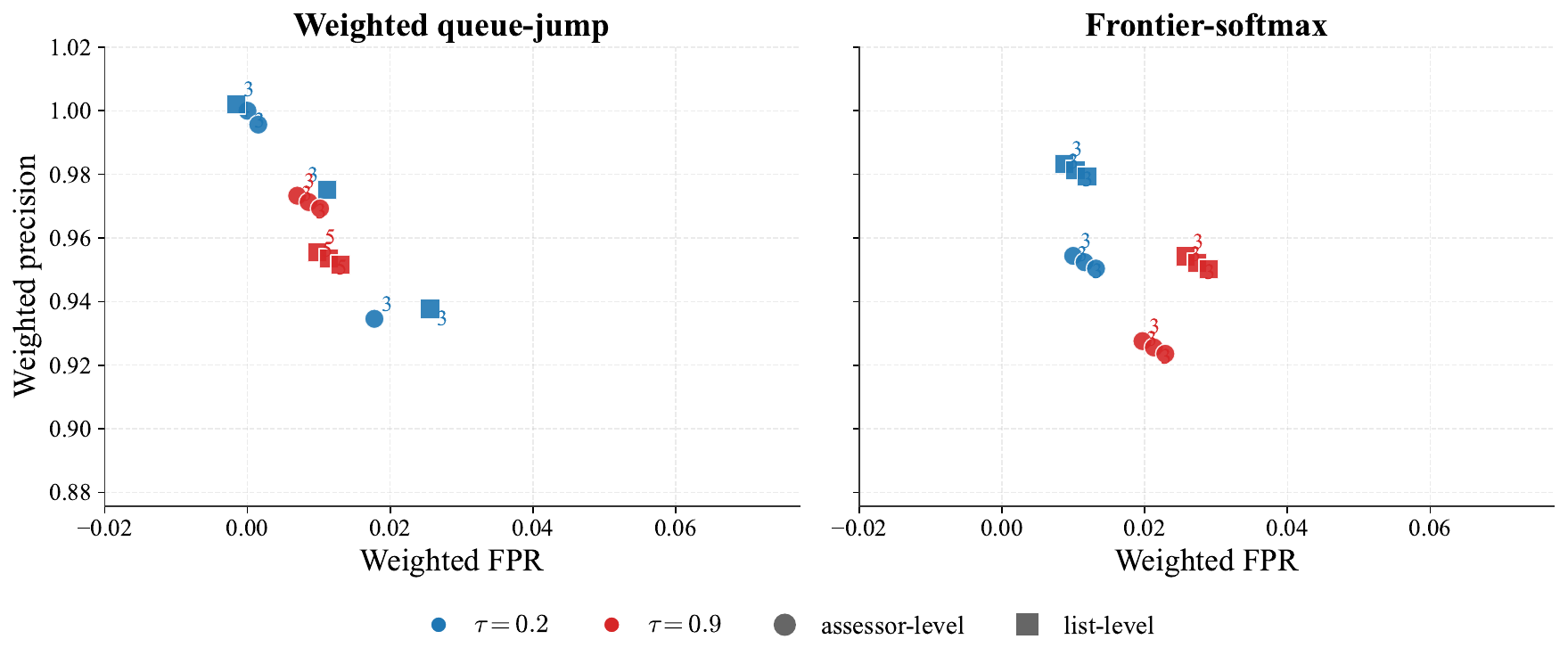}
\caption{\textbf{Experiment B: weighted precision versus weighted false-positive rate for HCPO.}
Each point corresponds to one fitted configuration in the Experiment~B grid. Panels correspond to the observation model,
and point labels indicate the inferred number of clusters. Configurations closer to the upper-left corner correspond
to more accurate structural recovery.}
\label{fig:blockb_precision_fpr}
\end{figure}

In real applications, the clustering unit is not something we get to choose - it depends on the data-lists having assessor labels.
List clustering gives slightly higher mean cluster-weighted F1 in Table~\ref{tab:block_b_results_main} and a larger average improvement over the single-poset baseline. Assessor clustering is easier as lists move in groups so there are fewer configurations: it recovers the correct number of groups $G_{\mathrm{true}}=3$ more often and is computationally cheaper on average.
Figure~\ref{fig:blockb_representative_example} illustrates a representative weighted queue-jump fit in assessor mode, where HCPO recovers both the true partition and the corresponding cluster-specific posets almost perfectly.
As a robustness check, Table~\ref{tab:blockb_vilb_vs_medoid_summary} compares the standard n-estimator in Appendix~\ref{sec:VI-lower-bound-define} with the frequency-weighted VI medoid from Section~\ref{sec:vi_freq_method}. When $\widehat{VI}_{UB}=0$ the partition is recovered exactly. When $\hat G=3$ but $\widehat{VI}_{UB}>0$, the number of groups is correct but the assignment of lists to clusters does (quite) match the truth. The medoid tends to select a slightly finer partition, whereas the VI upper bound estimator favors a more aggregated summary based on pairwise co-clustering structure. The substantive conclusions are unchanged.

\begin{table}[h]
\centering
\small
\caption{\textbf{Experiment~B: VI upper bound estimator (Appendix~\ref{sec:VI-lower-bound-define}) versus frequency-weighted VI medoid (Appendix~\ref{sec:vi_freq_method})}. 
Results (for weighted queue-jump only) are averaged over the three synthetic replications for each combination of coupling $\tau$ and clustering mode.}
\label{tab:blockb_vilb_vs_medoid_summary}
\setlength{\tabcolsep}{6pt}
\renewcommand{\arraystretch}{1.05}
\begin{tabular}{ccrrrrrr}
\toprule
$\tau$ & \textbf{Mode} & \textbf{VI-UB $\hat G$} & \textbf{Medoid $\hat G$} &
\textbf{$VI(\hat c_{\mathrm{VI}}, c_{\mathrm{true}})$} &\\
\midrule
0.2 & assessor & 3.00 & 3.33 & 0.00 & \\
0.2 & list     & 3.00 & 3.67 & 1.16 & \\
0.9 & assessor & 3.00 & 3.00 & 0.00 & \\
0.9 & list     & 5.00 & 5.00 & 1.10 &  \\
\bottomrule
\end{tabular}
\end{table}

\begin{figure}[t]
\centering
\includegraphics[width=\linewidth]{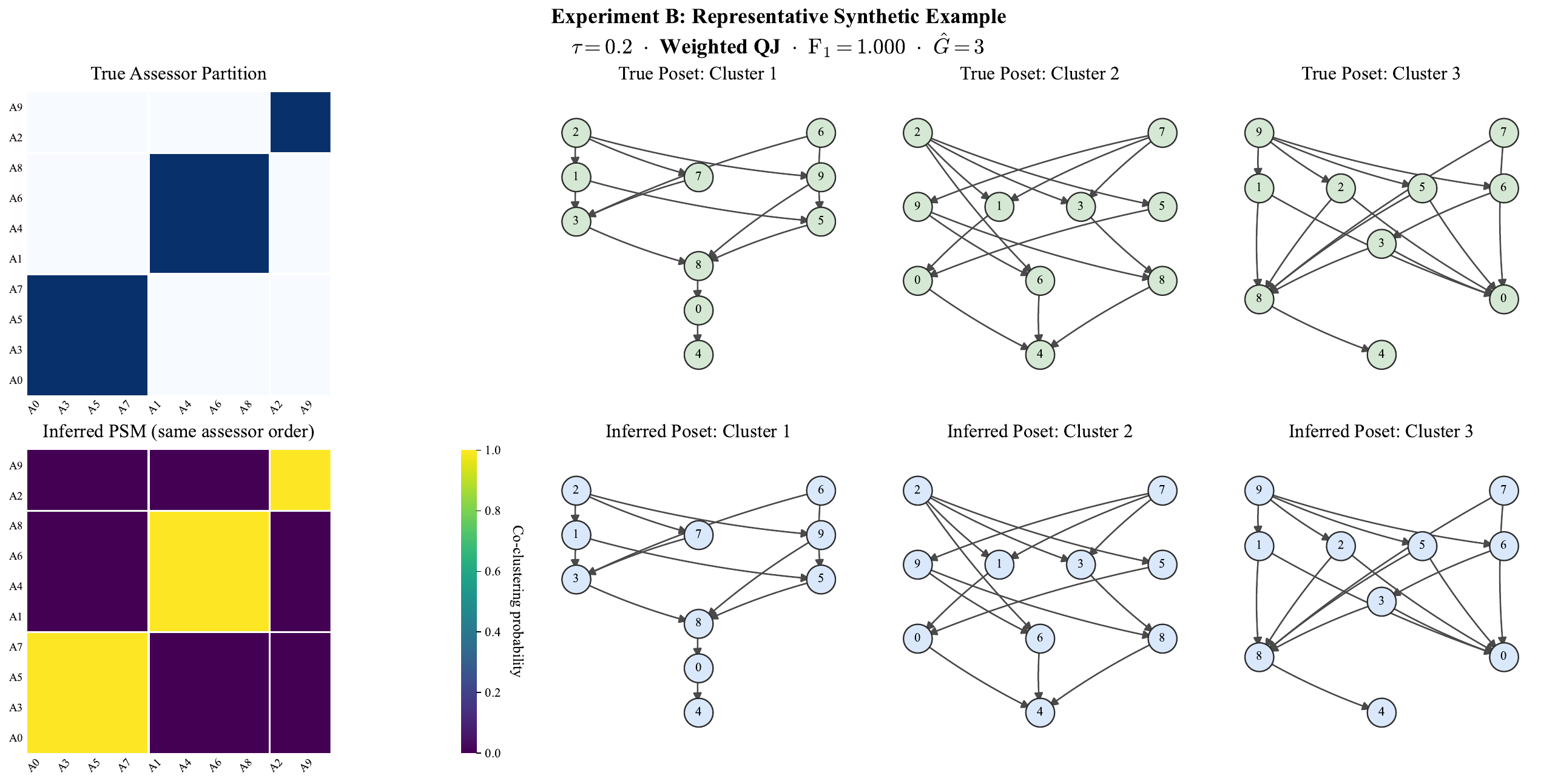}
\caption{\textbf{Representative synthetic example from Experiment B.}
Top left: true assessor partition. Bottom left: inferred posterior similarity matrix (PSM, defined in \eqref{eq:psm-define} in Appendix~\ref{sec:VI-lower-bound-define}) in the same assessor
order. The remaining panels compare the true and inferred cluster-specific partial orders for the three latent
groups. In this configuration (weighted queue-jump, assessor clustering, $\tau=0.2$), the inferred partition
recovers the true three-group structure and the inferred posets closely match the ground truth.}
\label{fig:blockb_representative_example}
\end{figure}

\begin{table}[t]
\centering
\small
\caption{\textbf{Experiment B: cluster recoverability and structure recovery under heterogeneous assessors.}
Values are means over three independently generated synthetic datasets per setting.
All runs use $G_{\text{true}}=3$ latent clusters.
$\hat{G}$ is the mean inferred number of clusters,
$F_1^{\mathrm{HCPO}}$ is HCPO cluster-weighted edge $F_1$,
$F_1^{\mathrm{PO}}$ is the single-poset baseline,
and $\Delta F_1 = F_1^{\mathrm{HCPO}} - F_1^{\mathrm{PO}}$.}
\label{tab:block_b_results_main}
\setlength{\tabcolsep}{5pt}
\renewcommand{\arraystretch}{1.05}
\begin{tabular}{cccrrrr}
\toprule
\textbf{$\tau$} & \textbf{Likelihood} & \textbf{Mode} & \textbf{$\hat{G}$} & \textbf{$F_1^{\mathrm{HCPO}}$} & \textbf{$F_1^{\mathrm{PO}}$} & \textbf{$\Delta F_1$} \\
\midrule
\multicolumn{7}{l}{\textit{Lower coupling: $\tau=0.2$}} \\
0.2 & Weighted QJ & Assessor & 3.33 & 0.966 & 0.629 & 0.338 \\
0.2 & Weighted QJ & List     & 3.00 & 0.959 & 0.430 & 0.529 \\
0.2 & Frontier-softmax QJ & Assessor & 3.00 & 0.850 & 0.687 & 0.163 \\
0.2 & Frontier-softmax QJ & List     & 3.00 & 0.979 & 0.735 & 0.243 \\
\midrule
\multicolumn{7}{l}{\textit{Higher coupling: $\tau=0.9$}} \\
0.9 & Weighted QJ & Assessor & 3.00 & 0.970 & 0.848 & 0.122 \\
0.9 & Weighted QJ & List     & 5.00 & 0.912 & 0.848 & 0.063 \\
0.9 & Frontier-softmax QJ & Assessor & 3.00 & 0.935 & 0.884 & 0.051 \\
0.9 & Frontier-softmax QJ & List     & 3.00 & 0.972 & 0.885 & 0.087 \\
\bottomrule
\end{tabular}
\end{table}

Experiment~B shows HCPO can recover latent clustering and cluster-specific partial orders with high accuracy under both likelihoods in this experiment with informative data. The statistical value of clustering is largest when the assessor groups are genuinely different, that is, when coupling between the latent group posets is weak.

\subsection{Experiment C: Aliyun cloud provisioning: Cloud-IaC-6 for the HPO model}
\label{sec:exp_aliyun}

We next consider agent traces, which motivate the partial-order framework from a workflow perspective rather than a preference-learning one.
Cloud-IaC-6 is a benchmark of six cloud-provisioning scenarios derived from an internal Aliyun platform, with 5--12 actions per scenario and 60 successful traces in total (54 LLM-generated traces plus 6 expert traces); see Table~\ref{tab:cloud_iac_6} and Appendix~\ref{app:cloud_iac_6_dataset}. There are in all $M=15$ different actions appearing across these traces. Some (like \texttt{CreateVpc}, \texttt{CreateSG}, \texttt{AuthorizeSG}, \texttt{CreateVSwitch}, \texttt{RunInstances}) appear across scenarios. Others (like \texttt{EIP}, \texttt{SLB}, \texttt{RDS}, \texttt{Redis}) are scenario-specific. These actions are the list items which we wish to order. The problem is a natural application of HPO because each task/scenerio is a group label,
and traces generated under the same scenerio are more similar to each other than to traces generated under other scenerios.
For each scenario a ground-truth dependency graph is available, so we can evaluate structural recovery directly.
The six scenarios are treated as related groups in an HPO fit, and we compare HPO with an independent single-PO baseline obtained by setting $G=1$ and $\tau=0$.

We study two trace regimes distinguished by IP-Cov, a measure of data information content defined in \eqref{eq:IP-Cov-define} in Appendix~\ref{app:metrics_ipcov}.
The first (reported in Figure~\ref{fig:cloud_iac_per_scenario}) uses the observed traces exactly as collected, where empirical IP-Cov varies widely across scenarios.
The second (reported in Figure~\ref{fig:aliyun_qualitative_recovery}) augments each scenario with additional sampled linear extensions until full incomparability coverage is reached.
This separation lets us distinguish failure caused by poor trace coverage from failure caused by the model itself.
We evaluate cover-edge F1, critical-pair F1, and trace feasibility (would poset-extensions complete the task); the formal definitions are given in Appendix~\ref{app:metrics_ipcov}.

Figure~\ref{fig:cloud_iac_per_scenario} reports the per-scenario quantitative comparison on Cloud-IaC-6 for the primary weighted queue-jump run with $10^6$ iterations. HPO matches or improves upon the single-PO baseline on nearly all scenario--metric pairs and achieves uniformly near-perfect recovery across the benchmark. The clearest gains appear in cover-edge F1 and trace feasibility on scenarios where the single-PO baseline under-recovers key ordering constraints. Overall, these results indicate that hierarchical pooling yields more reliable recovery of the workflow backbone across heterogeneous cloud-provisioning tasks.

\begin{figure}[t]
    \centering
    \includegraphics[width=\linewidth]{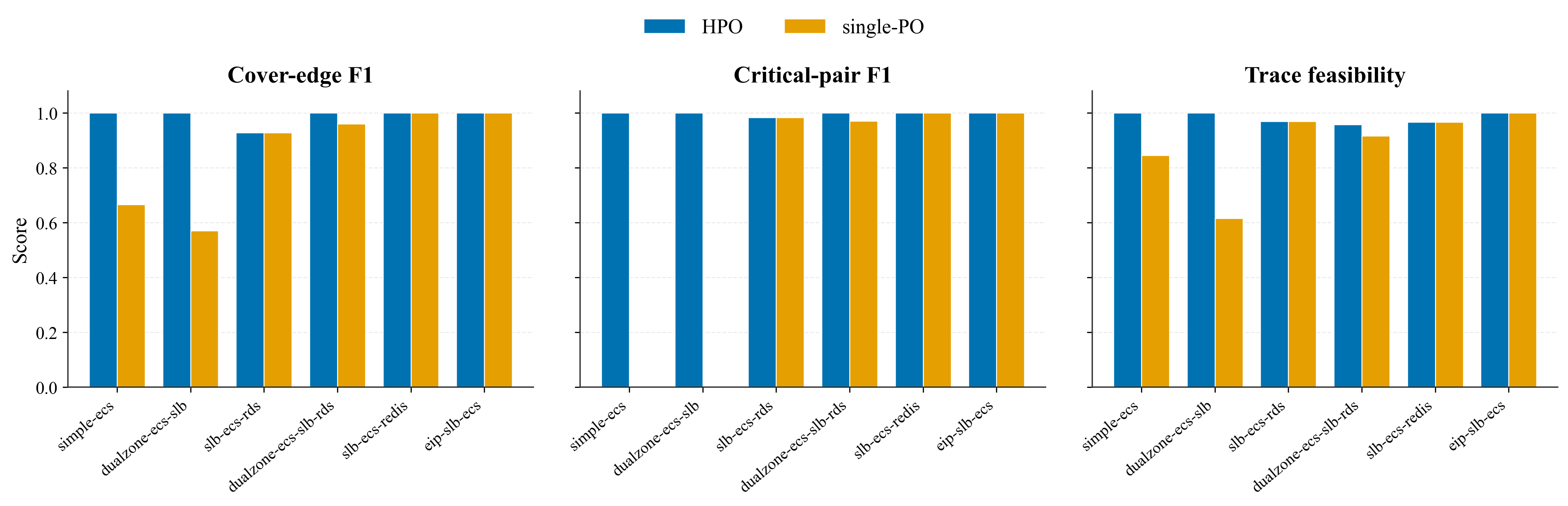}
    \caption{\textbf{Per-scenario comparison on Cloud-IaC-6 for the primary $10^6$-iteration weighted queue-jump run.}
    Comparison of HPO and an independent single-PO baseline across six Aliyun cloud-provisioning scenarios using cover-edge F1, critical-pair F1, and trace feasibility. The observed trace sets for these scenarios span empirical IP-Cov values from 0\% to 100\% (Table~\ref{tab:cloud_iac_6}). HPO matches or improves upon the single-PO baseline on nearly all scenario--metric pairs and achieves near-perfect recovery throughout.}
    \label{fig:cloud_iac_per_scenario}
\end{figure}

Figure~\ref{fig:aliyun_qualitative_recovery} provides the corresponding qualitative comparison under full incomparability-pair coverage ($\mathrm{IP\text{-}Cov}=1.0$). With the weighted queue-jump likelihood and a $10^6$-iteration inference budget, HPO recovers the dependency structure almost exactly: five of the six scenarios are recovered without error, and the remaining scenario differs only by a minor local discrepancy. This shows that, given sufficient ordering evidence, the hierarchical model reconstructs the precedence backbone of these workflows essentially perfectly.

\begin{figure}[t]
    \centering
    \includegraphics[width=\textwidth]{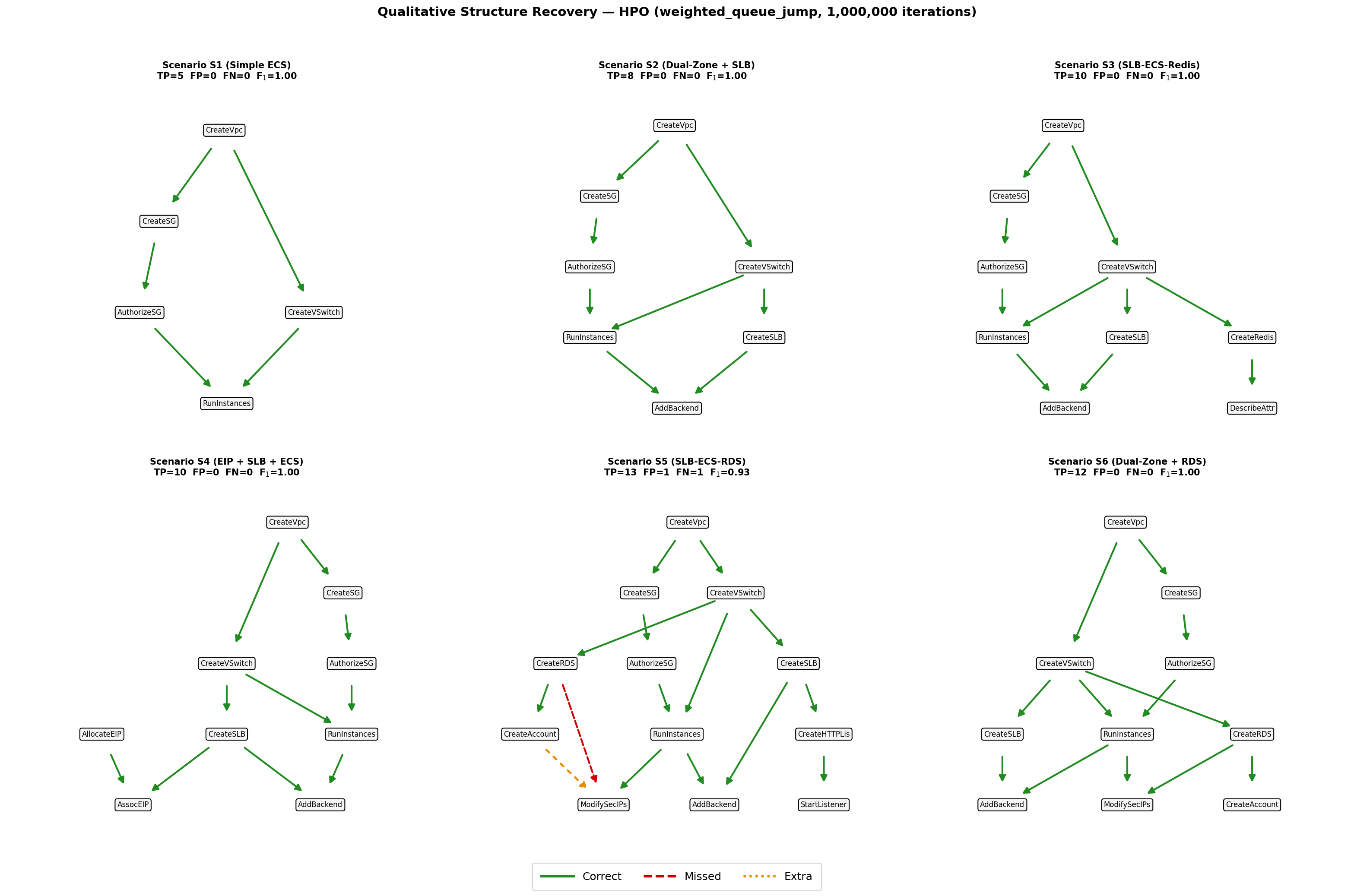}
    \caption{\textbf{Qualitative structure recovery on Cloud-IaC-6.}
    Comparison between the HPO-estimated cover graph and the ground-truth dependency graph for all six Aliyun cloud-provisioning scenarios under full incomparability-pair coverage ($\mathrm{IP\text{-}Cov}=1.0$), using the weighted queue-jump likelihood after $10^6$ iterations. Green solid edges denote correctly recovered cover relations, red dashed edges denote missed ground-truth edges, and orange dotted edges denote extra inferred edges. Each panel reports the numbers of true positives (TP), false positives (FP), false negatives (FN), and the resulting cover-edge F1 score. HPO achieves perfect recovery on five scenarios and near-perfect recovery on the remaining one.}
    \label{fig:aliyun_qualitative_recovery}
\end{figure}

The appendix compares all three likelihoods at the same $10^6$-iteration budget.
Both exact queue-jump variants attain essentially identical top performance, whereas frontier-softmax cuts runtime substantially (from 14.6 hours to 5.9 hours) but yields weaker structural recovery and lower feasibility. In this application the extension-based likelihoods are preferable when accurate workflow reconstruction is the main objective and traces are short enough to allow us to fit the Queue-Jump models. Overall, Experiment~C shows that, with sufficient ordering evidence, the extension-based likelihoods can recover the latent dependency structure with
high accuracy. The HCPO fit provides a coherent pooled summary across related scenarios.

\subsection{Experiment D: 3D bow-stroke sound preferences for the HCPO model}
\label{sec:sound}

The sound experiment returns to human preference data.  Listeners compared twelve short sound stimuli generated by controlled bow-stroke motions and rendered in 3D Ambisonics.
Each of the 46 assessors contributed 30 pairwise judgments, giving 1,380 A/B comparisons in total.
The dataset was introduced by \citet{crispino19}, who analyzed it with a Bayesian Mallows model; approximately 80\% of listeners exhibit cyclic pairwise patterns, so a model that permits incomparabilities is well motivated. For further details of the data see Appendix~\ref{subsec:data-sound}.

We fit HCPO with assessor-clustering under two observation models: weighted queue-jump and frontier-softmax.
Assessor heterogeneity is modeled with a Dirichlet-process prior with concentration $\vartheta=0.8$ and $d=0$ in \eqref{eq:Hpo-cluster-c-prior}.
Each chain runs for $3\times 10^6$ iterations, with the first half discarded as burn-in and thinning by 100 for posterior summaries.
The weighted queue-jump fit mixes more cleanly for the key structural and noise parameters (Appendix Table~\ref{tab:sound_obsmodel_diag}) and gives a better VI objective ($\widehat{VI}_{\mathrm{UB}}(\hat{c})$ in \eqref{eq:VI-objective}) in Table~\ref{tab:sound_partition_summary}, so we use it for the main structural summaries.
Its VI-optimal partition has $\widehat G_{\mathrm{VI}}=5$ clusters with sizes $(18,14,7,4,3)$, while frontier-softmax yields $\widehat G_{\mathrm{VI}}=6$ clusters with sizes $(16,14,5,4,4,3)$.
Figure~\ref{fig:sound_cluster_count} summarizes the posterior distribution of the number of clusters, Figure~\ref{fig:sound_psm} shows the posterior
similarity matrix with rows and columns reordered by the VI point estimate of the partition to make the block structure visible, and Figure~\ref{fig:sound_loglik_trace} shows the post-burn-in log-likelihood
traces.

For predictive comparison we use pairwise WAIC, treating each A/B judgment as one observation.
We compare HCPO with two total-order baselines, both fitted with three mixture components following \citet{crispino19}: a Plackett--Luce mixture and a Mallows mixture.
See Appendix~~\ref{sec:sound-model-comp-detail} for further details of the comparison setup and discussion of results. Appendix Figure~\ref{fig:sound_pairwise_waic} shows that HCPO dominates both baselines across a range of cluster counts $G$.
For the final fits, the weighted queue-jump HCPO model achieves $\mathrm{WAIC}=1593.52$, compared with $1628.44$ for frontier-softmax, $2046.72$ for the PL mixture, and $3182.48$ for the Mallows mixture.

Because cluster occupancy is sparse in some posterior draws, we also refit HPO conditional on the inferred cluster assignments using four chains (seeds 42/143/244/345).
Appendix Table~\ref{tab:sound_refit_restart_ess} shows satisfactory convergence for these refits, and the seed-345 refit, used for the Hasse summaries in Figure~\ref{fig:sound_final_inference}, attains a slightly lower WAIC of $1557.29$. This is consistent with a good choice of conditioning partition. See the Figure \ref{fig:sound_refit_posterior_diagnostics} for MCMC traces in this analysis.
After alignment with the PL and Mallows baselines, the HCPO summaries in Figure~\ref{fig:sound_final_inference} reveal a feature that total-order mixtures cannot represent directly: several clusters contain broad top layers together with only a small number of clearly inferior stimuli.
In other words, the model uses incomparability where the total-order baselines are forced to invent a strict within-cluster ranking.

\begin{figure}[H]
\centering
\includegraphics[width=\linewidth]{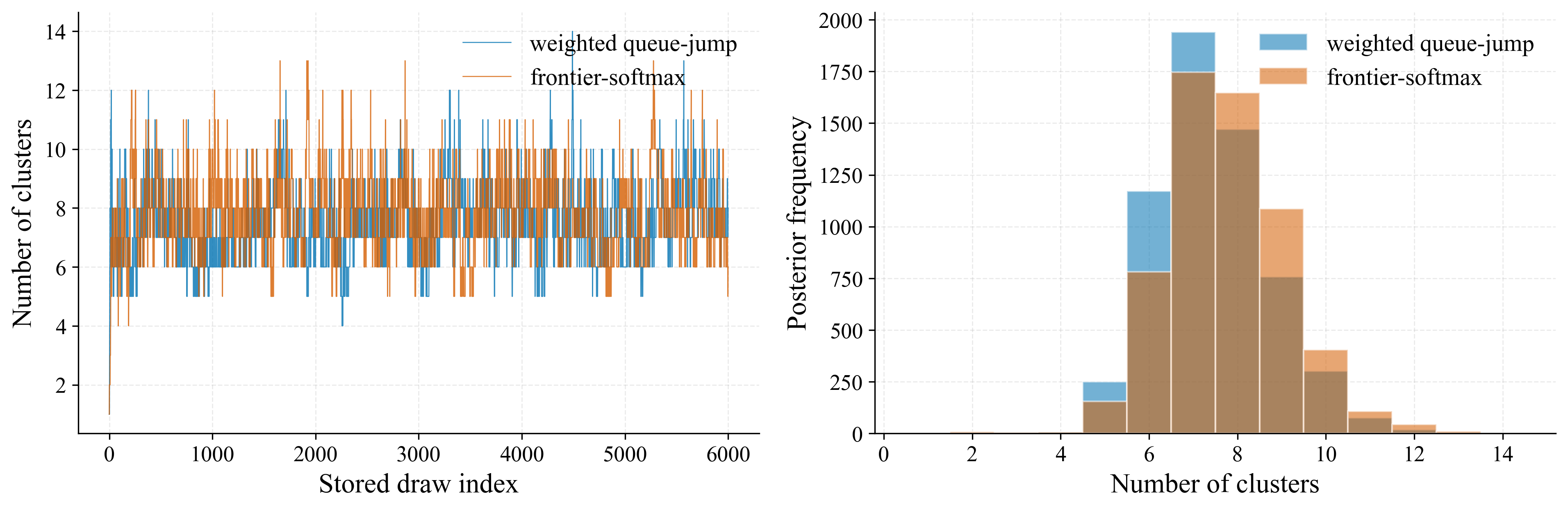}
\caption{\textbf{Sound data: posterior cluster count diagnostics.}
(\emph{Left}) Trace of the number of clusters across retained post burn-in draws fitting HCPO.
(\emph{Right}) Histogram of the posterior distribution of the number of clusters.
Blue: weighted queue-jump; orange: log-assessor/Frontier-softmax.}
\label{fig:sound_cluster_count}
\end{figure}

\begin{figure}[t]
\centering
\begin{subfigure}[t]{0.49\linewidth}
  \centering
  \includegraphics[width=\linewidth]{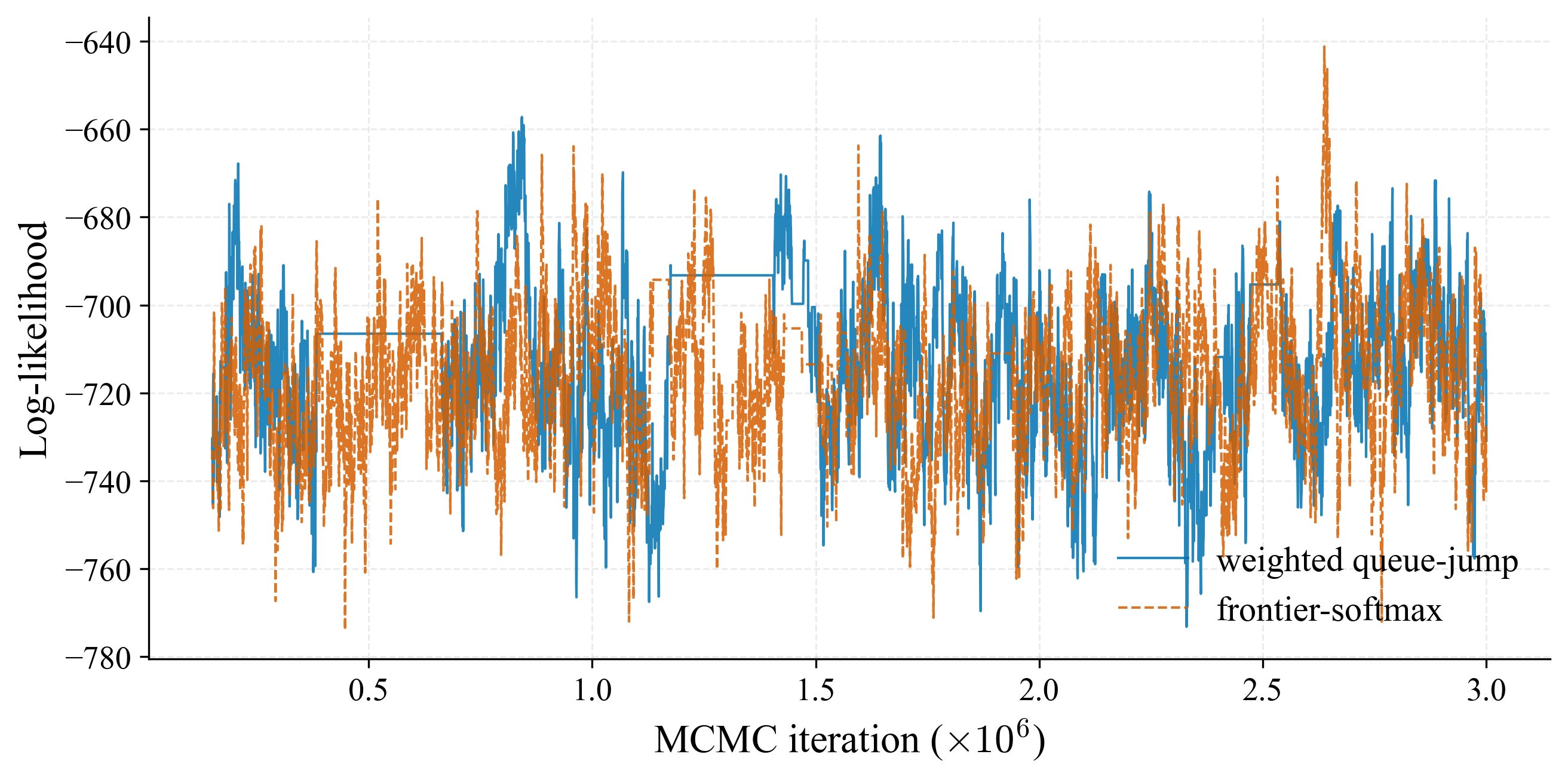}
  \caption{\textbf{Sound data: post burn-in log-likelihood traces.}
  Log-likelihood traces for the HCPO model under the two observation models
  (weighted queue-jump vs.\ log-assessor/Frontier-softmax). These traces provide a basic stationarity check
  and indicate that both likelihood variants explore similar fit regimes.}
  \label{fig:sound_loglik_trace}
\end{subfigure}\hfill
\begin{subfigure}[t]{0.49\linewidth}
  \centering
  \includegraphics[width=\linewidth]{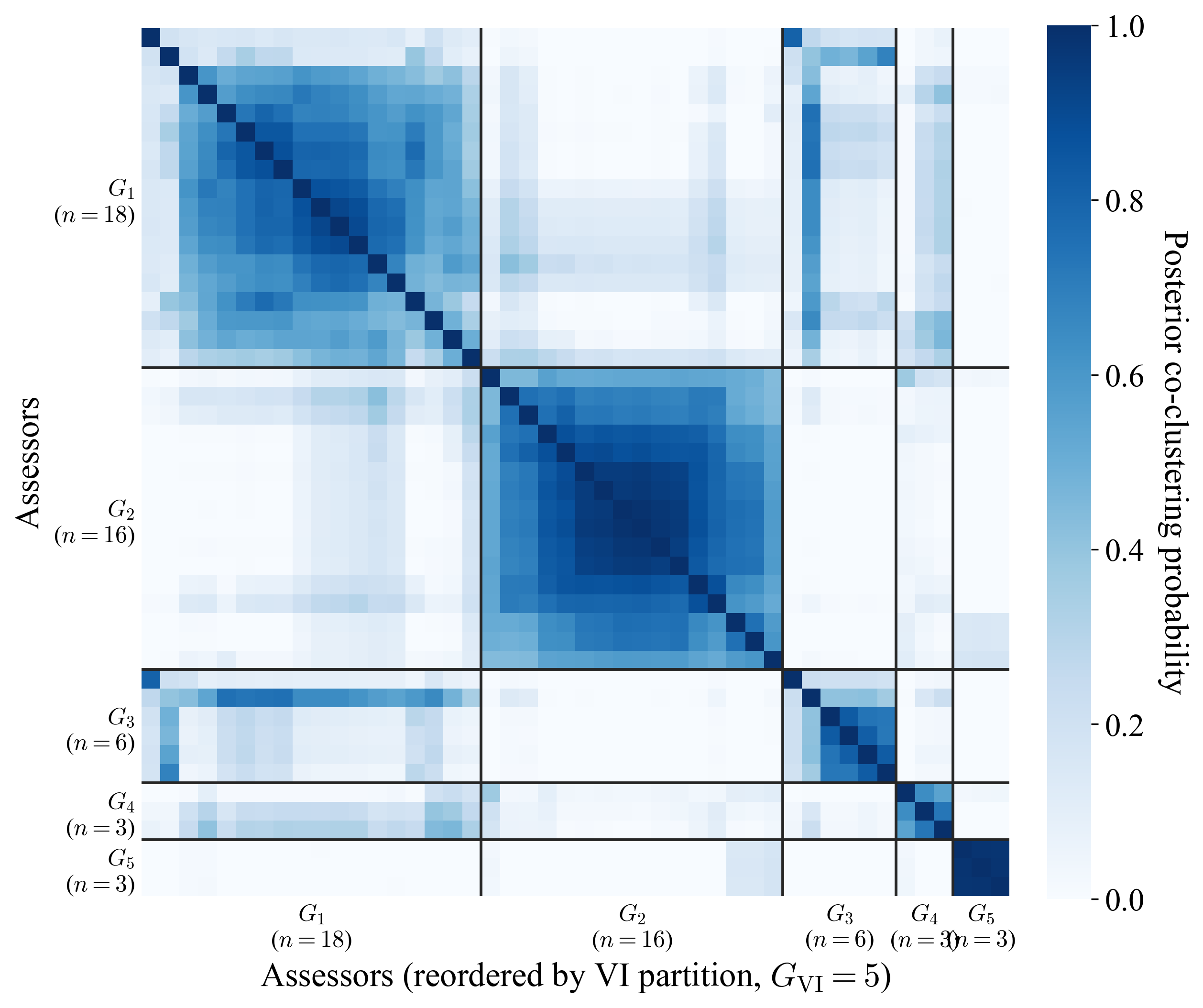}
  \caption{\textbf{Sound data: posterior similarity matrix (PSM).}
  Assessors reordered by the VI-optimal partition; darker cells indicate higher posterior co-clustering.}
  \label{fig:sound_psm}
\end{subfigure}
\end{figure}

\begin{figure}[t]
\centering
\includegraphics[width=\linewidth]{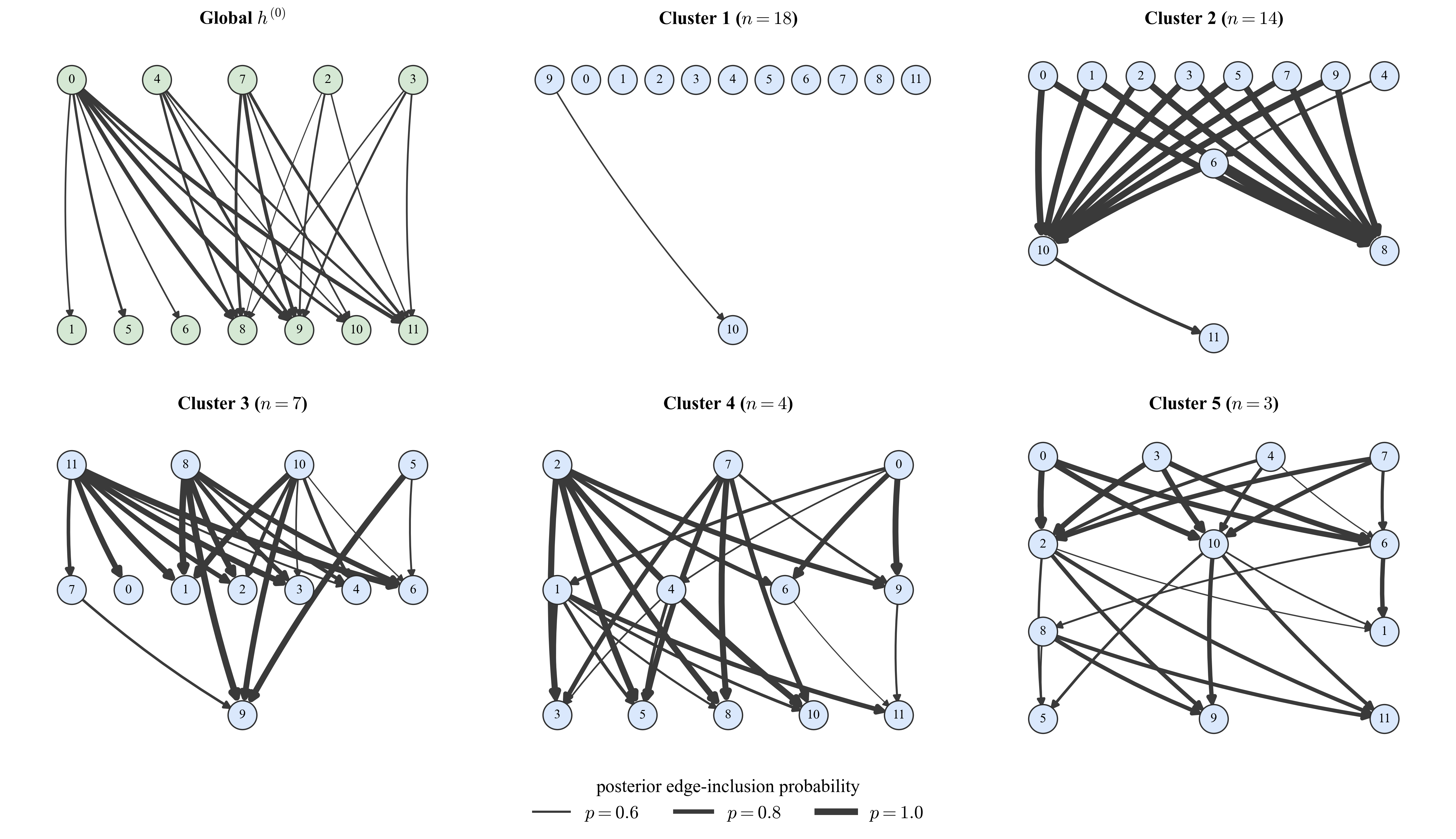}
\caption{\textbf{Sound data: final inferred cluster-level structure (seed 345 refit).}
Final HCPO summaries reported in the main text are computed from the seed 345 refit run; see Appendix Table~\ref{tab:sound_refit_restart_ess} for the corresponding multi-chain convergence/mixing diagnostics. Edge thickness encodes posterior inclusion probability.}
\label{fig:sound_final_inference}
\end{figure}

The sound experiment therefore highlights the main modeling advantage of HCPO on preference data: it improves predictive fit relative to total-order baselines while producing cluster summaries that are easy to interpret when listeners do not behave as if they were following strict global rankings.

\subsection{Experiment E: Ghana sweet-potato tricot preferences for the HCPO model}
\label{sec:ghana-sweet-potato}

Our final experiment studies the Ghana sweet-potato tricot data from the \texttt{SFPL} package \citep{hermes2024heterogeneous}.
The study involves $A=111$ participants (64 male and 47 female) and 13 local varieties.
Each participant reports rankings within small tricot choice sets, so the data are globally sparse but still informative about robust local comparisons.
This is a natural test case for HCPO: the earlier analyses of \citet{hermes2024heterogeneous} and \citet{MollicaTardella2017} both point to substantial heterogeneity, but they summarize that heterogeneity in terms of total-order or worth-based models.

We fit HCPO using a split--merge warm start.
Two short pre-training chains are run on the male and female subsets separately, their terminal states are merged, and the resulting pooled state initializes a $3\times 10^6$-iteration chain on the full dataset. 
We also use a Pitman--Yor partition prior with concentration parameter
$\vartheta=0.8$ and discount $d=0$, corresponding to the Dirichlet-process. We compare queue-jump, weighted queue-jump, and frontier-softmax. The log-likelihood traces and posterior cluster-count diagnostics are shown in Appendix Figures~\ref{fig:ghana_blockd_loglik} and~\ref{fig:ghana_blockd_cluster_count}; Appendix Table~\ref{tab:ghana_blockd_mcmc_diag} reports the corresponding posterior means and ESS values.
Across these diagnostics, weighted queue-jump mixes better and attains the highest log-likelihood values, so we use it as the primary model for interpretation.

Predictive comparison again favours HCPO over a single-poset baseline (PO). Table~\ref{tab:ghana_hcpo_hpo_vi_waic_compact} reports WAIC using the participant as the observation unit: each contribution is the likelihood of the full tricot response recorded for one participant, giving $n_{\mathrm{obs}}=111$.

HCPO improves on PO under all three likelihoods, with the largest gain under weighted queue-jump ($\Delta\mathrm{WAIC}=74.09$), followed by queue-jump ($37.15$) and frontier-softmax ($28.10$).
Only weighted queue-jump supports a non-trivial VI partition: it yields $G_{\mathrm{VI}}=3$ clusters of sizes $(63,40,8)$, whereas the other two likelihoods give $G_{\mathrm{VI}}=1$ (all 
$A=111$ assessors grouped together).
The corresponding PSM, shown in Appendix Figure~\ref{fig:app_ghana_psm_weighted_vi3}, shows a weakly supported but clear block structure after VI reordering.
Figure~\ref{fig:ghana_triad_waic_curves} shows WAIC estimates including comparisons with hierarchical Mallows and Plackett-Luce models. The conclusions are qualitatively similar to those reported in Section~\ref{sec:sound}.

\begin{table}[t]
\centering
\small

\caption{\textbf{Ghana: predictive fit (HCPO vs.\ PO) and VI partition summaries.}
WAIC is computed on $n_{\mathrm{obs}}=111$ assessor-level units using
$n_{\mathrm{draws}}=800$ draws. For each likelihood, PO and HCPO use the same
observation model, and positive
$\Delta=\mathrm{WAIC}_{\mathrm{PO}}-\mathrm{WAIC}_{\mathrm{HCPO}}$ favours
HCPO. $\widehat{VI}_{\mathrm{UB}}(\hat{c})$  is the Wade--Ghahramani upper bound for the selected partition;
smaller values are preferred within a likelihood, but are not used to compare
likelihoods.}
\label{tab:ghana_hcpo_hpo_vi_waic_compact}
\setlength{\tabcolsep}{5pt}
\renewcommand{\arraystretch}{1.05}
\begin{tabular}{l r r r c l r}
\toprule
\textbf{Lik.} &
\textbf{WAIC$_{\mathrm{HCPO}}$} &
\textbf{WAIC$_{\mathrm{PO}}$} &
\textbf{$\Delta$} &
$G_{\mathrm{VI}}$ &
\textbf{VI sizes} &
\textbf{$\widehat{VI}_{\mathrm{UB}}(\hat{c})$} \\
\midrule
WQJ  & 313.71 & 387.80 & 74.09 & 3 & 63,40,8 & $-206.17$ \\
QJ   & 357.47 & 394.62 & 37.15 & 1 & 111     & $-339.14$ \\
FSM & 365.97 & 394.08 & 28.10 & 1 & 111     & $-333.55$ \\
\bottomrule
\end{tabular}

\vspace{0.25em}
\footnotesize
Abbreviations: WQJ = weighted queue-jump; QJ = queue-jump; FSM = Frontier-softmax.
\end{table}

The inferred partition is not explained by gender.
A $\chi^2$ test finds no evidence of association between gender and cluster membership ($\chi^2=0.054$, $p=0.816$), and the cluster-specific male proportions are close to the overall cohort proportion.
Thus the latent groups identified by HCPO cut across the male/female split emphasized in earlier analyses.

After selecting weighted queue-jump, we refit HPO conditional on the VI partition to obtain cleaner cluster-level structural summaries.
The fixed-partition refit further improves predictive fit, reaching $\mathrm{WAIC}=236.49$; see Table~\ref{tab:ghana_refit_waic} and Appendix Table~\ref{tab:ghana_refit_waic_original}.
The refit diagnostics in Appendix Table~\ref{tab:ghana_refit_ess} and Figure~\ref{fig:app_ghana_refit_loglik_compare} show adequate mixing for these summary runs.
Figure~\ref{fig:ghana_hcpo_posets} displays the resulting global and cluster-specific partial orders.
The two larger clusters share a common backbone but differ in their top layers, while the smallest cluster is more weakly resolved and exhibits several competing top candidates.

\begin{table}[h]
\centering
\small
\caption{\textbf{Ghana: WAIC for fixed-$G$ HPO refits (post-selection).}
WAIC is computed on $A=111$ assessor-level units using $n_{\mathrm{draws}}=800$ draws. Lower is better.}
\label{tab:ghana_refit_waic}
\setlength{\tabcolsep}{6pt}
\renewcommand{\arraystretch}{1.05}
\begin{tabular}{l l r r r r}
\toprule
\textbf{Likelihood} & \textbf{Refit model} & \textbf{WAIC} & \textbf{elpd$_{\mathrm{WAIC}}$} & \textbf{$p_{\mathrm{WAIC}}$} & \textbf{s.e.(elpd)} \\
\midrule
weighted queue-jump       & HPO refit ($G=3$) & 236.49 & $-118.25$ & 32.10 & 11.87 \\
frontier-softmax & HPO refit ($G=1$) & 369.10 & $-184.55$ & 16.28 & 6.71 \\
queue-jump                & HPO refit ($G=1$) & 371.76 & $-185.88$ & 18.51 & 6.77 \\
\bottomrule
\end{tabular}
\end{table}

Although the Weighted Queue Jump model out-performed all others by WAIC, the reconstructed partial orders in Figure~\ref{fig:ghana_hcpo_posets} are close to being total orders. This, and the model complexity, make it surprising that the model fits better by a predictive criterion.

\begin{figure}[t]
\centering
\includegraphics[width=\linewidth]{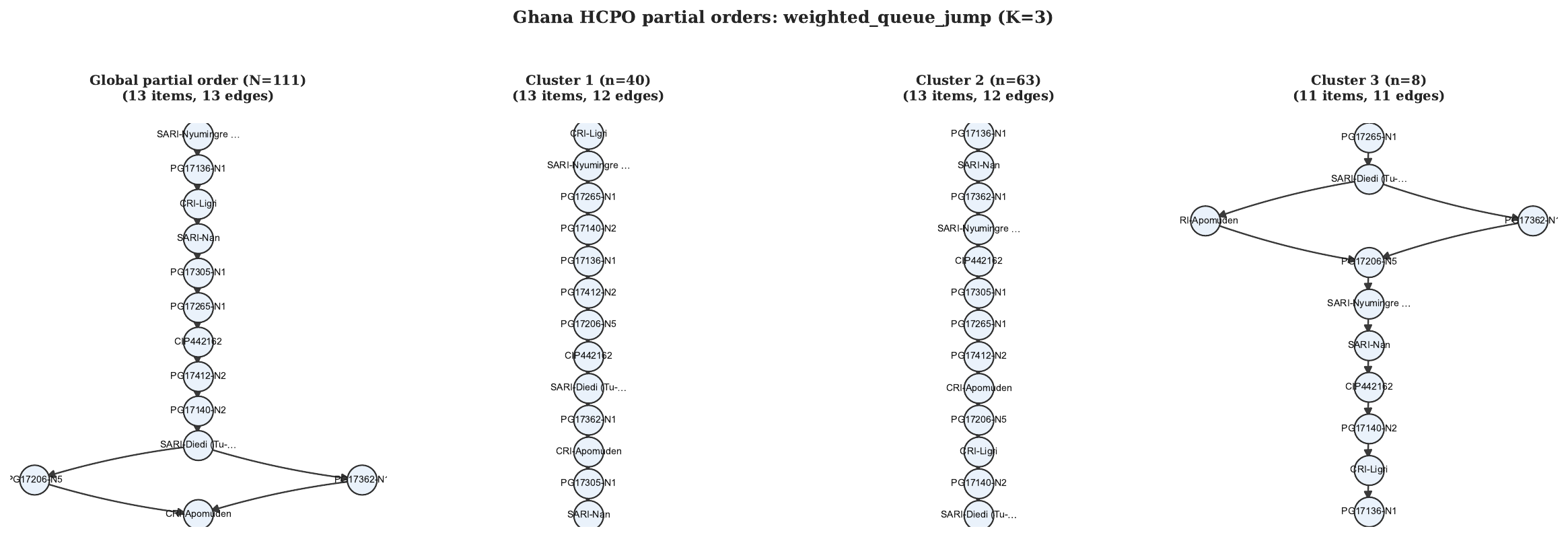}
\caption{\textbf{Ghana (weighted queue-jump): global and cluster-level HCPO partial orders.}
Global partial order and VI-optimal cluster-specific partial orders under the weighted queue-jump likelihood
($G_{\mathrm{VI}}=3$, cluster sizes $(63,40,8)$).}
\label{fig:ghana_hcpo_posets}
\end{figure}

\section{Conclusions}

We have introduced HPO and HCPO, hierarchical Bayesian models for rank data in which the latent consensus objects are partial orders rather than total orders.
The framework covers grouped data with observed labels, extends naturally to latent clustering when labels are unknown, and nests Plackett--Luce as a special case.
This makes it possible to borrow strength across related groups while still allowing genuine incomparabilities in the latent preference structure.

The experiments support three main conclusions.
First, recoverability depends primarily on the information content of the observed lists: longer choice sets and stronger coupling improve reconstruction much more than modest changes in noise level.
Second, explicit clustering is most helpful when latent groups are genuinely different; when coupling is strong and heterogeneity is low, a simple single-poset model can do well.
Third, on real data the partial-order formulation is not only statistically competitive but also practically useful: on Cloud-IaC-6 it improves recovery of executable workflow graphs, and on the sound and Ghana data it yields better predictive fit than total-order baselines.

The experiments also clarify the role of the observation models.
Queue-jump and weighted queue-jump are the most faithful to the underlying linear-extension semantics and usually give the strongest structural recovery, but they are computationally expensive.
Frontier-softmax is much faster and remains useful in larger problems, although the speed gain typically comes with some loss of structural accuracy.

Several limitations remain.
Inference using our generally favored linear-extension-based Queue-Jumping models is intractable when the observed list length becomes large. In an MCMC setting with millions of likelihood evaluations this caps the list lengths at about 20 depending on other details of the problem. However, looking across published data, long lists are exceptional. In Table~\ref{tab:intro-data-summary} we give data set sizes for the work cited in the introduction. These problems are all tractable. Other issues are that recovery degrades sharply when the data provide little local ordering information, and cluster summaries can require post-processing or conditional refits when posterior occupancy is diffuse.
Future work is focusing on new applications for animal dominance hierarchies, richer covariate structures, integration with differentiable relaxations for fast posterior approximation on long traces, and broader comparisons with alternative structured ranking models.

\acks{We thank Valeria Vitelli and the Oslo group for helpful conversations.}

\bibliography{ref} 

\appendix

\section{Notation and Definitions}
\label{sec:notation}

Let \(\mathcal{M}\) denote the universe of items, and let
\(\mathcal{M}_0 \subseteq \mathcal{M}\) be the observed item set, with
\[
m_0 \coloneqq |\mathcal{M}_0|.
\]
Assessors are indexed by \(a=1,\dots,A\).

\subsubsection*{Observed data}

\begin{description}[leftmargin=4.2cm,labelwidth=3.8cm,style=multiline]

    \item[\(A\)]
        Number of assessors.

    \item[\(\mathcal{M}\), \(\mathcal{M}_0\)]
        Full item universe and observed item set, respectively.

    \item[\(\mathcal{B}_{\mathcal{M}_0}\)]
        Collection of all non-empty subsets of \(\mathcal{M}_0\):
        \[
        \mathcal{B}_{\mathcal{M}_0}
        \coloneqq
        \{B \subseteq \mathcal{M}_0 : B \neq \emptyset\}.
        \]

    \item[\(L_a\)]
        Number of observed lists contributed by assessor \(a\).
        In balanced designs, we may write \(L_a = L\).

    \item[\(S_{a\ell}\)]
        Choice set presented to assessor \(a\) in task/list \(\ell\),
        where \(S_{a\ell} \in \mathcal{B}_{\mathcal{M}_0}\).

    \item[\(s_{a\ell}\)]
        Size of the choice set \(S_{a\ell}\), i.e.
        \[
        s_{a\ell} \coloneqq |S_{a\ell}|.
        \]

    \item[\(y_{a\ell}\)]
        Observed complete order returned by assessor \(a\) on task \(\ell\):
        \[
        y_{a\ell} = (y_{a\ell 1},\dots,y_{a\ell s_{a\ell}})
        \in \mathcal{L}(S_{a\ell}).
        \]

    \item[\(\mathcal{M}_a\)]
        Set of items seen by assessor \(a\):
        \[
        \mathcal{M}_a \coloneqq \bigcup_{\ell=1}^{L_a} S_{a\ell}.
        \]

    \item[\(\mathcal{L}(B)\)]
        Set of all complete (linear) orders on a finite set \(B\).

    \item[\(\mathcal{D}\)]
        Full observed dataset:
        \[
        \mathcal{D} \coloneqq \{(S_{a\ell}, y_{a\ell}) :
        a=1,\dots,A,\ \ell=1,\dots,L_a\}.
        \]

\end{description}

\subsubsection*{Latent partial-order structure}

\begin{description}[leftmargin=4.2cm,labelwidth=3.8cm,style=multiline]

    \item[\(K\)]
        Latent dimension of the utility representation.

    \item[\(\mathbf{U}^{(0)}_j \in \mathbb{R}^K\)]
        Global latent utility vector for item \(j \in \mathcal{M}_0\).

    \item[\(\mathbf{U}^{(0)} \in \mathbb{R}^{m_0 \times K}\)]
        Global latent utility matrix, obtained by stacking
        \(\mathbf{U}^{(0)}_j\) over \(j \in \mathcal{M}_0\).

    \item[\(\mathbf{U}^{(a)}_j \in \mathbb{R}^K\)]
        Assessor-specific latent utility vector for item \(j\).

    \item[\(\mathbf{U}^{(a)} \in \mathbb{R}^{m_0 \times K}\)]
        Assessor-specific latent utility matrix.

    \item[\(\mathcal{H}(B)\)]
        Set of all partial orders on a finite set \(B\).

    \item[\(\Phi(\mathbf{U})\)]
        Mapping from a latent utility representation \(\mathbf{U}\) to the
        induced partial order on \(\mathcal{M}_0\).

    \item[\(h^{(0)}\)]
        Global / consensus partial order:
        \[
        h^{(0)} \coloneqq \Phi(\mathbf{U}^{(0)}) \in \mathcal{H}(\mathcal{M}_0).
        \]

    \item[\(h^{(a)}\)]
        Assessor-specific partial order:
        \[
        h^{(a)} \coloneqq \Phi(\mathbf{U}^{(a)}) \in \mathcal{H}(\mathcal{M}_0).
        \]

    \item[\(\mathbf{A}(h)\)]
        Adjacency matrix of a partial order \(h\), with entries
        \[
        A_{ij}(h) = \mathbf{1}\{i \succ_h j\}.
        \]

    \item[\(\mathrm{LE}(h)\)]
        Set of linear extensions of the partial order \(h\).

    \item[\(N_{\mathrm{LE}}(h)\)]
        Number of linear extensions of \(h\):
        \[
        N_{\mathrm{LE}}(h) \coloneqq |\mathrm{LE}(h)|.
        \]

    \item[\(N_{\mathrm{LE},j}(h)\)]
        Number of linear extensions of \(h\) in which item \(j\) appears first.

\end{description}

\subsubsection*{Likelihood-related quantities}

\begin{description}[leftmargin=4.2cm,labelwidth=3.8cm,style=multiline]

    \item[\(h=(A_h,\succ_h)\)]
        Generic partial order on support set \(A_h \subseteq \mathcal{M}_0\).

    \item[\(y=(y_1,\dots,y_T)\)]
        Generic observed list over the support set \(A_h\).

    \item[\(y_{<t}\)]
        Prefix of the list \(y\) up to time \(t-1\):
        \[
        y_{<t} \coloneqq (y_1,\dots,y_{t-1}).
        \]

    \item[\(R_t\)]
        Remaining items at step \(t\):
        \[
        R_t \coloneqq A_h \setminus \{y_1,\dots,y_{t-1}\}.
        \]

    \item[\(F_t(h;y_{<t})\)]
        Frontier at step \(t\), i.e. the set of admissible undominated items.

    \item[\(S_t(a)\)]
        Number of remaining successors dominated by item \(a\) at step \(t\).

    \item[\(\beta\)]
        Inverse-temperature parameter in the observation model.

    \item[\(\epsilon\)]
        Observation-model noise probability.

    \item[\(p\)]
        Synthetic data-generating noise probability, when used in simulation studies.

    \item[\(\tau\)]
        Coupling parameter controlling similarity between global and local structures.

\end{description}

\subsubsection*{Clustering and partition summaries}

\begin{description}[leftmargin=4.2cm,labelwidth=3.8cm,style=multiline]

    \item[\(u=1,\dots,n\)]
        Generic clustering unit. Depending on the model, a unit may be an assessor or a list.

    \item[\(c=(c_1,\dots,c_n)\)]
        Cluster labels for the \(n\) units.

    \item[\(\mathcal{G}(c)\)]
        Set of occupied clusters:
        \[
        \mathcal{G}(c) \coloneqq \{g : n_g > 0\}.
        \]

    \item[\(G\)]
        Number of occupied clusters:
        \[
        G \coloneqq |\mathcal{G}(c)|.
        \]

    \item[\(n_g\)]
        Size of cluster \(g\):
        \[
        n_g \coloneqq |\{u : c_u = g\}|.
        \]

    \item[\((d,\vartheta)\)]
        Pitman--Yor discount and concentration hyperparameters.

    \item[\(\mathcal{C}_n\)]
        Space of all partitions of \(n\) units.

    \item[\(P_{uv}\)]
        Posterior similarity matrix entry:
        \[
        P_{uv} \coloneqq \Pr(c_u = c_v \mid \mathcal{D}).
        \]

    \item[\(VI(c,c')\)]
        Variation-of-information distance between two partitions.

    \item[\(\hat c\)]
        Representative clustering, typically chosen as a VI-optimal summary.

\end{description}

\subsubsection*{Trace notation for Cloud-IaC-6}

\begin{description}[leftmargin=4.2cm,labelwidth=3.8cm,style=multiline]

    \item[\(\pi\)]
        Execution trace.

    \item[\(A_\pi\)]
        Set of actions appearing in trace \(\pi\).

    \item[\(\operatorname{pos}_\pi(a)\)]
        Position of action \(a\) in trace \(\pi\).

\end{description}

\section{Background}

Here we include some background with the aim of making the paper self-contained.

\subsection{Context-independent preference}
\label{app:context-independent-models}

If preferences are context independent then we must have
\begin{equation}\label{eq:context-independence-as-sum}
p_S(y)=\sum_{z\in\C_\M} p_\M(z)\mathbb{I}_{z(S)=y},\ S\in\B_\M.    
\end{equation}
It further holds that all the marginals are consistent, so we can replace $\M$ by $S'$ in \eqref{eq:context-independence-as-sum} for any $S'\supseteq S$ and \eqref{eq:context-independence-as-sum} still holds. This kind of {\it marginal consistency} appears in other forms below but has the special meaning of context-independence for distributions over complete preference orders. 

Context independence need not hold if we simply write down $p_S(\cdot)$ separately for each $S\in\B_\M$.  
For example, the Mallows model \citep{mallows1957}, the contextual repeated selection (CRS) model \citep{seshadri20} and the partial order model in this section do not, in general, satisfy \eqref{eq:context-independence-as-sum}. 
On the other hand, the Plackett-Luce model in the next section is context-independent in the sense of Definition~\ref{def:context-independence} and in the equivalent sense of the Luce Axiom of Choice \citep{luce1959possible}.
\cite{ragain18} and \cite{seshadri19,seshadri20} develop models for context-dependent choice, which generalise the Plackett-Luce model. 
Our approach builds transitivity into a context-dependent preference model whilst transitivity plays no role in \cite{seshadri20}. 

The model in \eqref{eq:po_lkd_full_noisefree} expresses context-dependent preferences. 
This is illustrated by the example in Figure~\ref{fig:not-consistent-example}.
\begin{figure}
    \centering
    \scalebox{0.65}{\begin{tikzpicture}
      \node (1) [param] {$1$};
      \node (2) [param, right=of 1, xshift=0.5cm] {$2$};
      \node (3) [param, right=of 2, xshift=0.5cm] {$3$};
      \node (4) [param, right=of 3, xshift=0.5cm] {$4$};
      \node (5) [param, right=of 4, xshift=0.5cm] {$5$};

      \edge [->] {2} {3};
      \edge [->] {3} {4};
      \edge [->] {4} {5};

    \node (LEs) [text height=1cm, below=of 3, yshift=1cm] {\Large $\displaystyle h=(\M,\succ_h)$};

      \node (1s) [param, right=of 1, xshift=10cm, draw=blue] {$1$};
      \node (2s) [param, right=of 1s, xshift=0.5cm, draw=blue] {$2$};

      \node (LEs4) [text height=1cm, below=of 1s, xshift=1.15cm,yshift=1cm] {\Large $\displaystyle h[S]=(S,\succ_h)$};
    \end{tikzpicture} }
    \caption{In the partial order (left) on $\M=\{1,2,3,4,5\}$,  $1$ is maximal in one of five possible linear extensions. In the suborder on $S=\{1,2\}$, 1 is maximal on one of two.}
    \label{fig:not-consistent-example}
\end{figure}
Suppose $h \in \H_\M$ is the poset $(\{1, \dots, M\},\succ_h)$, in which the suborder $(\{2, \dots, M\},\succ_h)$ is a complete order $2\succ_h 3 \succ_h\dots\succ_h M$, but item $1$ has no order relation to any other elements.
Now $h$ has $M$ linear extensions (item $1$ can go in any position in the complete order, and the order of the rest is fixed), so item $2$ is the maximal element in $M-1$ out of $M$ linear extensions. 
If the choice set is $S = \{1,2\}$, and the list is realised as a suborder on $S$, then we pick a random linear extension $y'\sim p_\M(\cdot|h)$ and get $\Pr(1\succ_{y'[S]} 2)=1/M$. 
However, the suborder $h[S]=(\{1,2\},\succ_h)$ is empty (it has two unordered elements), so it has two linear extensions, $\ell^{(1)}$ and $\ell^{(2)}$, with $1 \succ_{\ell^{(1)}} 2$ and $2 \succ_{\ell^{(2)}} 1$. 
If the list is realised on $S$, then $y\sim p_S(\cdot|h[S])$ and $\Pr(1\succ_y 2)=1/2$, so $y'[S]$ and $y$ do not have the same distribution so preferences are context dependent. 

\subsection{PCMC models}\label{app:PCMC-models}


The likelihood $p_S(y|h)$ can be parameterised as a Pairwise choice Markov chain (PCMC, \cite{ragain16_pcmc}).
In PCMC models, an order $y$ is built sequentially by drawing $y_{1}$ from the equilibrium of a stochastic process on $S$ with $m\times m$ rate matrix $Q$ and repeating for $S\setminus\{y_1\}$ and so on. Taking $Q_{j_1,j_2}(\alpha)\propto |\L_{j_2}[h]|^{1-\alpha}|\L_{j_2}[h]|^{-\alpha}$ gives detailed balance, 
$q_S(j_1|h)Q_{j_1,j_2}(\alpha)=q_S(j_2|h)Q_{j_2,j_1}(\alpha)$, for any fixed $\alpha\in [0,1]$ so the $i$'th 
element $y_i$ in $y$ is selected with probability $q_{y_{i:m}}(y_i|h[y_{i:m}])$. This could be used to 
embed an underlying partial order structure in PCMC and thereby build transitivity into the PCMC setup, or conversely, to relax the observation model in \eqref{eq:po_lkd_full_noisefree}, by taking a prior on entries of $Q$ which is concentrated 
on $Q(\alpha)$. We have not pursued this generalisation. 

\subsection{Further remarks on the Plackett-Luce model}\label{app:PL-further-remarks}

The PL model is context independent: if $y\sim p_\M(\cdot|\alpha_\M)$ then $y[S]\sim p_S(\cdot|\alpha_S)$. We prove this using Lemma~\ref{lem:PL-gumbel}. \cite{hunter04} demonstrates the required relationship between marginal distributions, \eqref{eq:PL-MC} below, by direct computation.

\begin{theorem} \label{thm:PL-MC} \citep{hunter04}
For all $S\in\B_\M$ and all $y\in \C_S$
\begin{equation}\label{eq:PL-MC}
    p_S(y|\alpha_S)=\sum_{z\in \C_\M} p_\M(z|\alpha_\M)\,\mathbb{I}_{z[S]=y}.
\end{equation}
\end{theorem}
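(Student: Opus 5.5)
The plan is to use the Gumbel construction in Lemma~\ref{lem:PL-gumbel}, following the route flagged just before Theorem~\ref{thm:PL-MC}. This avoids the direct summation over $\C_\M$ that \cite{hunter04} carries out. Let $X_j\sim\mbox{Gumbel}(\alpha_j)$ be independent for $j\in\M$ and set $z=y(X)=(\M,\succ_X)$. By Lemma~\ref{lem:PL-gumbel}, $z\sim \mbox{PL}(\alpha_\M;\M)$. The right-hand side of \eqref{eq:PL-MC} is therefore exactly $\Pr(z[S]=y)$, the probability that the suborder of a PL-distributed order on $\M$ equals $y$. The target is to show this probability equals $p_S(y|\alpha_S)$.

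The key observation is that taking suborders commutes with the map from scores to orders. For $j_1,j_2\in S$ we have $j_1\succ_{z[S]} j_2$ if and only if $j_1\succ_z j_2$, which holds if and only if $X_{j_1}>X_{j_2}$. Hence $z[S]=y(X_S)$, where $X_S=(X_j)_{j\in S}$ is the subvector of scores and $y(X_S)$ is the complete order on $S$ that ranks items by score. Ties occur with probability zero because the Gumbel distribution is continuous, so $y(X_S)$ is almost surely a well-defined element of $\C_S$.

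The entries of $X_S$ are independent with $X_j\sim\mbox{Gumbel}(\alpha_j)$ for $j\in S$. Applying Lemma~\ref{lem:PL-gumbel} again, now with universe $S$ in place of $\M$, gives $y(X_S)\sim\mbox{PL}(\alpha_S;S)$. Therefore $\Pr(z[S]=y)=\Pr(y(X_S)=y)=p_S(y|\alpha_S)$, which is \eqref{eq:PL-MC}.

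No step is technically hard. The one point that needs care is that Lemma~\ref{lem:PL-gumbel} holds for an arbitrary finite item set, so that it applies to $S$ as well as to $\M$. If one wanted a self-contained check instead, an alternative is to marginalise directly by induction, removing one item $j\notin S$ at a time. This would use the sequential form \eqref{eq:PL-sequential} and sum over the position of $j$. However, that route involves telescoping sums that the latent-variable argument avoids, so I would use it only as a fallback.
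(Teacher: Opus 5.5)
Your proposal is correct and follows the paper's proof essentially step for step: generate $z$ from independent Gumbel scores, observe that the $S$-suborder of $y(X)$ equals $y(X_S)$, and apply Lemma~\ref{lem:PL-gumbel} a second time with item set $S$. Your remarks that ties have probability zero and that the lemma holds for any finite item set are small additions that the paper leaves implicit.
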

\begin{proof}
This follows from the Gumbel construction. 
If $G_j\sim \mbox{Gumbel}(\alpha_j)$ independent for $j=1,\dots,M$, then $y(G)\sim \mbox{PL}(\alpha_\M;\M)$ with $y(G)\in \C_\M$ by Lemma~\ref{lem:PL-gumbel}. 
In our notation, $y(G)[S]\in \C_S$ is the $S$-suborder of $y(G)$, so $\Pr(y(G)[S] = y)$ is given by the RHS of \eqref{eq:PL-MC}. 
Let $G_S=\{G_j: j\in S\}$. 
Removing elements from $G$ does not change the relative ordering for the elements that remain, so $y(G)[S]=y(G_S)$
with $G_j\sim \mbox{Gumbel}(\alpha_j),\ j\in S$ jointly independent and so by Lemma~\ref{lem:PL-gumbel}, $y(G_S)\sim p_S(\cdot|\alpha_S)$, the distribution on the LHS of \eqref{eq:PL-MC}.
\end{proof}

Context independence in the sense of Definition~\ref{def:context-independence} is a weaker condition than context-independence in the sense of the ``Luce Axiom of Choice'' (LAC, \cite{luce77}). As we build up a list by choosing the sequence of elements one at a time according to the factorisation in \eqref{eq:PL-sequential}, the odds of choosing $j_1$ next over $j_2$ in the sequence is the same for every choice set $S$ containing $j_1$ and $j_2$. In the notation of \eqref{eq:PL-def-one-step} that is
\begin{equation}\label{eq:PL-LAC}
\frac{q_S(j_1|\alpha_S)}{qF_S(j_2|\alpha_S)}=\frac{q_\M(j_1|\alpha_\M)}{q_\M(j_2|\alpha_\M)}.  
\end{equation}
This is easy to check for $q_S(j|\alpha_S)$ in \eqref{eq:PL-def-one-step} and in fact the converse is also true: the Plackett-Luce distribution in \eqref{eq:PL-def-full} is the only distribution over orders satisfying the LAC \citep{luce1959possible}. 
Since LAC $\Rightarrow$ \eqref{eq:PL-def-full} $\Rightarrow$  Lemma~\ref{lem:PL-gumbel} $\Rightarrow$ Theorem~\ref{thm:PL-MC}, it follows that LAC implies context independence in the sense of Definition~\ref{def:context-independence}. However, there are many marginally consistent families of distributions over orders which satisfy \eqref{eq:context-independence-as-sum} but not \eqref{eq:PL-LAC}: for example if we take $p_\M(z)=|\L[h]|^{-1}\mathbb{I}_{z\in\L[h]}$ for some $h\in\H_\M$ and {\it define} $p_S(y)$ by \eqref{eq:context-independence-as-sum} then the family $p_S,\ S\in\B_\M$ is marginally consistent by construction, but it is not Plackett-Luce.


Theorem~\ref{thm:PL-MC} is helpful when data $Y=(Y_1,\dots,Y_N)$ are observed as suborders with $Y_i=Y'_i[S_i]$, and $Y'_i\sim p_{\M}(\cdot|\alpha)$ jointly independent for $i=1,\dots,N$ given $\alpha$, as this is the same as generating $\tilde Y_i$ on the choice set $S_i$. 
Suppose the family of priors $\{\pi_{\alpha,S}(\alpha_S),\ \alpha_S\in \R^{m}\}_{S\in\B_\M}$ for $\alpha\in \R^M$ is marginally consistent, so $\alpha\sim \pi_{\alpha,\M}$ implies $\alpha_S\sim \pi_{\alpha,S}$ (for example, when the components of $\alpha$ are a priori independent).
We can drop $\alpha_j$ from the analysis if $j$ does not appear in at least one $S_i$.  
Let $\M'=\cup_i S_i$. 
 The marginal posterior for $\alpha\in \R^{|\M'|}$ is
\begin{align}
    \pi_{\alpha,\M'}(\alpha_{\M'}|Y)&=\int \pi_{\alpha,\M}(\alpha_\M|Y) d\alpha_{\M\setminus\M'}\nonumber\\
    &\propto \int \pi_\alpha(\alpha_\M) \prod_{i=1}^N \left[\sum_{Y'_i\in\C_\M} p_\M(Y'_i|\alpha)\mathbb{I}_{Y'[S_i]=Y_i}\right]\,d\alpha_{\M\setminus\M'}\nonumber\\
    \intertext{after applying Theorem~\ref{thm:PL-MC} to do the sum, the $\alpha_j,\ j\in {\M\setminus\M'}$ are no longer in the product,}
    &\propto \int \pi_\alpha(\alpha_\M) \,d\alpha_{\M\setminus\M'} \prod_{i=1}^N p_{S_i}(Y_i|\alpha_{S_i})\nonumber\\
    \intertext{so by the assumed marginal consistency of the $\alpha$-prior}
    &\propto \pi_{\alpha,\M'}(\alpha_{\M'})\prod_{i=1}^N p_{S_i}(Y_i|\alpha_{S_i}).\label{eq:PL-posterior}
\end{align}
The posterior $\pi_{\alpha,\M'}(\alpha_{\M'}|Y)$ only involves parameters for items in $\M$ that we have data for, so sampling and estimation will be more efficient than if we had to target $\pi_{\alpha,\M}(\alpha_\M|Y)$.  



\section{Proofs and further remarks for Sections~\ref{sec:PO-single-inference} and \ref{sec:Hpo-labeled}}
\subsection{Proof of Corollary~\ref{cor:po-MC} and data realised as a suborder in Section~\ref{sec:po-foundation-models}}\label{app:PO_model_single_PO}

Let $\pi_S(\cdot|\rho,\beta),\ S\in \B_\M$ be defined as in Theorem~\ref{thm:po-prior-gumbel-U} with $S$ replacing $\M$.

\noindent{\bf Corollary~\ref{cor:po-MC}}
{\it The family of prior distributions $\pi_S(\cdot|\rho,\beta),\ S\in \B_\M$, is marginally consistent, that is if $h\sim \pi_\M(\cdot|\rho,\beta)$, then $h[S]\sim \pi_S(\cdot|\rho,\beta)$.}

\begin{proof}
    For a $M\times K$ matrix $X$ let $X_{S,:}$ denote the sub-matrix with rows $X_{j,:},\ j\in S$. Since the maps are all applied element by element, $h[S]=h(\eta(\U,\beta)_{S,:})$, and
    \[
    \eta(\U,\beta)_{S,:}=G^{-1}(\Phi(\U_{S,:})) + X_{S,:}\beta\, 1^T_K.
    \]
    As $\U_{j,:}\sim N(0,\Sigma_\rho), j\in\M$ are iid over $j$, so are $\U_{j,:},\ j\in S$ (no dependence on index $j\in S$) so these are just the random variables and maps defining $\pi_S(\cdot|\rho,\beta)$.
\end{proof}

If the data are realised as a suborder then the assessor formed an order $Y'_i\sim p(\cdot|h(\eta(\U,\beta)))$ on $\M$, but we only observe $Y_i=Y'_i[S_i],\ i=1,\dots,N$. In this case we have parameters $\rho$ and $\beta$ as before, but now $\U\in \R^{M\times K}$ and we introduce auxiliary variables $Y'_i\in \{y\in \C_\M: y[S_i]=Y_i\},\ i=1,\dots,N$ that agree with $Y_i$ on the suborder. The posterior is
\begin{equation}\label{eq:po-posterior-realised-as-suborder}
    \pi_{\M}(\rho,\U,\beta,Y'|Y)\propto \pi_R(\rho)\pi_B(\beta)\left[\prod_{j\in\M} N(\U_{j,:};0_K,\Sigma_\rho)\right]\ \times\ \left[\prod_{i=1}^N p(Y'_i|h(\eta(\U,\beta)))\right],
\end{equation}
and we have to work on the full space $\U\in \R^{M\times K}$ and further marginalise over $Y'$.

\subsection{Proof of Theorem~\ref{thm:Hpo-prior-gumbel-U}}\label{app:hpo-prior-theorem-proof}

\noindent{\bf Theorem~\ref{thm:Hpo-prior-gumbel-U}} {\it (Hierarchical Partial Order prior) For $\alpha_\M=X\beta$ and $\Sigma_\rho$ as in Section~\ref{sec:PO-single-inference}, for $0<\tau\le 1$ and sets $\M_a\in\B_\M,\ a\in \A$, let $\M_0=\cup_{a=1}^A \M_a$ and
    \begin{align} 
    \u^{(0)}_{j,:}&\sim N(0,\Sigma_\rho),\quad\mbox{independent for each $j\in \M_0$,}\label{eq:Hpo-prior-U0-app}\\
    \u^{(a)}_{j,:}|\u^{(0)}_{j,:} &\sim N\left(\tau\u^{(0)}_{j,:},\, (1-\tau^2)\Sigma_\rho\right)\quad\mbox{independent for each $a\in\A$ and $j\in \M_a$,}\label{eq:Hpo-prior-Ua-app}\\ 
    \eta^{(a)}_{j,:}&=G^{-1}(\Phi(\u^{(a)}_{j,:}))+\alpha_j 1^T_K \quad\mbox{for $a=0,1,\dots,A$ and $j\in \M_a$ and}\label{eq:Hpo-prior-eta-app}\\
    h&=h(\eta(\u,\beta))\quad\mbox{for $\eta(\u,\beta)=(\eta^{(0)},\eta^{(1)},\dots,\eta^{(A)})$ and $h^{(a)}=h(\eta^{(a)})$.}\nonumber
\end{align}
For $h\in \H_{\M_{0:A}}$ let $\pi_{\M_{0:A}}(h|\rho,\beta,\tau)=E_{\u}(\mathbb{I}_{h(\eta(\u,\beta))=h})$ be the resulting prior for the poset hierarchy $h$. The marginal prior for $h^{(a)}$ is $\pi_{\M_{0:A}}(h^{(a)}|\rho,\beta,\tau)$. 
\begin{enumerate}
    \item (single PO marginals) For $a=0,1,\dots,A$, $\pi_{\M_{0:A}}(h^{(a)}|\rho,\beta,\tau)=\pi_{\M_a}(h^{(a)}|\rho,\beta)$ where $\pi_{\M_a}$ is the single partial-order prior given in \eqref{eq:po-single-h-prior};
    \item (PL hierarchy at $K=1$) when $K=1$, 
    $h^{(a)}\sim PL(\alpha_{\M_a},\M_a)$ for $a\in\A$;
    \item (independence at $\tau=0$) when $\tau=0$, $\pi_{\M_{0:A}}(h|\rho,\beta,\tau)=\prod_{a=0}^A\pi_{\M_{a}}(h^{(a)}|\rho,\beta,\tau)$;
    \item (matching suborders at $\tau=1$) for each $a\in \A$, $\displaystyle\lim_{\tau\to 1}\pi_{\M_{0:A}}(h^{(a)}|\rho,\beta,\tau)=\mathbb{I}_{h^{(a)}=h^{(0)}[\M_a]}$;
    \item (support) if $K\ge \lfloor M/2\rfloor$ then $\pi_{\M_{0:A}}(h|\rho,\beta,\tau)>0$ for all $h\in\H_{\M_{0:A}}$; 
    \item (marginal consistency) If $H\sim \pi_{\M^{A+1}}(\cdot|\rho,\beta,\tau)$ then $H[s_{0:A}]\sim \pi_{s_{0:A}}(\cdot|\rho,\beta,\tau)$ for every $s_{0:A}\in \B_\M^{A+1}$ with $s_0=\cup_{a=1}^A s_a$.
\end{enumerate}
}
\begin{proof}
We can write
\begin{equation}\label{eq:Hpo-Ua-U0-epsilon}
    \u^{(a)}_{j,:}=\tau\u^{(0)}_{j,:}+\epsilon^{(a)}_{j,:}
\end{equation}
with 
\[
\epsilon^{(a)}_{j,:}\sim N(0_K,(1-\tau^2)\Sigma_\rho)
\]
independent of everything else so the (marginal) covariance of $\u^{(a)}_{j,:}$ is
\[
\mbox{cov}(\tau\u^{(0)}_{j,:}) + (1-\tau^2)\Sigma_\rho=\Sigma_\rho,
\]
and the mean of $\u^{(a)}_{j,:}$ is zero. It follows that
marginally,
\[
\u^{(a)}_{j,:}\sim N(0_{m_a},\Sigma_\rho).
\]
Since this is the distribution of $\U_{j,:}$ in the single-PO model in Theorem~\ref{thm:po-prior-gumbel-U} and all else is the same, this establishes results 1 and 2 of Theorem~\ref{thm:Hpo-prior-gumbel-U}. Result 3 follows from \eqref{eq:Hpo-Ua-U0-epsilon} also, as $\mbox{cov}(\u^{(a)}_{j_1,:},\u^{(0)}_{j_2,:})=\tau\Sigma_\rho$ and $\mbox{cov}(\u^{(a)}_{j_1,:},\u^{(a')}_{j_2,:})=\tau^2\Sigma_\rho$ so if $\tau=0$ then $h(\eta(\u^{(a)},\beta))$ and $h(\eta(\u^{(a')},\beta))$ are functions of jointly normal independent random variables for $a\ne a'$. 

Define $f(x)=\Phi^{-1}(G(x))$ for $x\in \R$ and $f^{(a)}_{j,k}=f(\eta^{(a)}_{j,k}-\alpha_j)$ so that
\[
f^{(a)}_{j,k}-f^{(0)}_{j,k}=\u^{(a)}_{j,k}-\u^{(0)}_{j,k}.
\] 
Conditional on $\u^{(0)}$ we have  
\[
f^{(a)}_{j,k}-f^{(0)}_{j,k}|u^{(0)}\sim N\bigl((\tau-1)\u^{(0)}_{j,k},\, (1-\tau^2)\bigr).
\]
When $\tau\to 1$ this gives $f^{(a)}_{j,k}\stackrel{P}{\to}f^{(0)}_{j,k}$ and hence $\eta^{(a)}_{j,k}\stackrel{P}{\to}\eta^{(0)}_{j,k}$ as $f$ is continuous and invertible. This gives $h^{(a)}\stackrel{P}{\to} h^{(0)}$ (result 4). Results 5 and 6 in Theorem~\ref{thm:Hpo-prior-gumbel-U} may be shown using the same reasoning as Corollaries~\ref{cor:po-support} and \ref{cor:po-MC} respectively. 
\end{proof}

\section{Derivation of the VI Upper Bound Objective}
\label{app:vi_lb_derivation}

\subsection{VI Upper Bound Objective}\label{sec:VI-lower-bound-define}

Following the Bayesian cluster analysis framework of \citet{WadeGhahramani2018}, 
we minimize the posterior expected variation of information 
\citep{meila2007comparing}. Since this objective is computationally 
intractable, we instead optimize the VI upper bound derived using 
Jensen's inequality \citep{WadeGhahramani2018}.

Given posterior samples of clusterings, we compute the posterior similarity matrix (PSM), whose entries 
\begin{equation}\label{eq:psm-define}
P_{ij} = \Pr(c_i = c_j \mid D)
\end{equation}
represent the posterior probability that items $i$ and $j$ are co-assigned to the same cluster. 

Following the Bayesian decision-theoretic framework for clustering \citep{WadeGhahramani2018}, the optimal partition $\hat{c}^*$ minimizes the posterior expected variation of information (VI):
\[
\hat{c}^* = \arg\min_{\hat{c}} \mathbb{E}[VI(c, \hat{c}) \mid D].
\]

Because directly evaluating this expectation across the vast discrete space of partitions is computationally prohibitive, we instead optimize a tractable upper bound derived via Jensen's inequality \citep{WadeGhahramani2018}.
The tractable upper bound, $VI_{\mathrm{crit}}(\hat{c})$, can be computed directly from the PSM:
\begin{equation}\label{eq:VI-objective}
\widehat{VI}_{\mathrm{UB}}(\hat{c}) = \sum_{i=1}^n \log |C_{\hat{c}_i}| - 2 \sum_{i=1}^n \log \left( \sum_{j \in C_{\hat{c}_i}} P_{ij} \right),
\end{equation}
where $C_{\hat{c}_i}$ denotes the cluster containing item $i$ under the candidate partition $\hat{c}$. Intuitively, the first term acts as a regularization penalty against partitions with numerous small clusters, while the second term rewards configurations where members exhibit high posterior co-clustering probabilities. A derivation of this bound is detailed in the Appendix \ref{sec:vi-lower-bound-derive}.

\subsection{Frequency-weighted posterior VI Upper Bound partition estimator}
\label{sec:vi_freq_method}

Rather than optimizing the VI upper bound derived from the posterior similarity matrix (PSM),
we want the inference among the partitions the sampler actually visited, which one is closest on average to the whole posterior distribution.
Let
\[
c^{(1)},\dots,c^{(S)}
\]
be the post burn-in partition samples.
We estimate the posterior VI risk of a candidate partition $\hat{c}$ by the Monte Carlo average
\[
\widehat{R}(\hat{c})
=
\frac{1}{S}\sum_{s=1}^S VI\!\left(c^{(s)},\hat{c}\right).
\]

A direct optimization over the full space of partitions is still infeasible, so we restrict the
search to the set of unique sampled partitions.
To do this, each sampled partition is first converted to a canonical label-invariant representation,
for example by relabeling clusters in order of first appearance.
Let
\[
u_1,\dots,u_M
\]
denote the resulting unique sampled partitions, with multiplicities
\[
n_m = \sum_{s=1}^S \mathbf{1}\!\left(c^{(s)}=u_m\right),
\qquad
w_m=\frac{n_m}{S}.
\]
Then the empirical posterior VI risk can be written as
\[
\widehat{R}(\hat{c})
=
\sum_{m=1}^M w_m\, VI(u_m,\hat{c}),
\]
and our representative partition is chosen as
\[
\hat{c}^*
=
\arg\min_{\hat{c}\in\{u_1,\dots,u_M\}}
\sum_{m=1}^M w_m\, VI(u_m,\hat{c}).
\]

This estimator can be interpreted as a \emph{VI medoids} of the sampled posterior partitions.
Its main advantage over the PSM-based VI upper bound is that it explicitly accounts for the
posterior frequency of full partitions.
In particular, a partition that is only rarely visited by the MCMC sampler cannot be selected
simply because it appears favorable under pairwise co-clustering probabilities alone.

\subsection{Bayesian Clustering Decision Rule and Variation of Information}\label{sec:vi-lower-bound-derive}

In this appendix we derive the VI upper bound estimator in Appendix~\ref{sec:VI-lower-bound-define}.

Let $c = (c_1, \dots, c_n)$ denote the true clustering and $P(c \mid D)$ denote the posterior distribution over partitions given data $D$. Under Bayesian decision theory, the optimal point estimate $\hat{c}^*$ minimizes the expected posterior loss. When employing the variation of information \citep{meila2007comparing} as our loss function—i.e., $L(c, \hat{c}) = VI(c, \hat{c})$—the objective is to minimize $\mathbb{E}[VI(c, \hat{c}) \mid D]$.

The variation of information between a true partition $c$ and a candidate partition $\hat{c}$ is defined using Shannon entropy $H$ and mutual information $I$:
\[
VI(c, \hat{c}) = H(c) + H(\hat{c}) - 2I(c, \hat{c}).
\]


To express this in terms of cluster sizes, let $C_k = \{i : c_i = k\}$ denote cluster $k$ in partition $c$. The probability that a randomly selected item belongs to $C_k$ is $p_k = |C_k|/n$. The entropy of partition $c$ is therefore:
\[
H(c) = -\sum_k p_k \log p_k = -\sum_k \frac{|C_k|}{n} \log \frac{|C_k|}{n}.
\]
Similarly, by defining $\hat{C}_l = \{i : \hat{c}_i = l\}$ for the candidate partition, the candidate entropy $H(\hat{c})$ is:
\[
H(\hat{c}) = -\sum_l \frac{|\hat{C}_l|}{n} \log \frac{|\hat{C}_l|}{n}.
\]

To compute the mutual information $I(c, \hat{c})$, we define the overlap counts $n_{kl} = |C_k \cap \hat{C}_l|$, representing the number of items shared between true cluster $k$ and candidate cluster $l$. The joint probability of an item falling into both clusters is $p_{kl} = n_{kl}/n$. The mutual information is defined as:
\[
I(c, \hat{c}) = \sum_{k,l} p_{kl} \log \frac{p_{kl}}{p_k p_l}.
\]
Substituting the cluster-count expressions into the mutual information yields:
\[
I(c, \hat{c}) = \sum_{k,l} \frac{n_{kl}}{n} \log \frac{n_{kl}/n}{(|C_k|/n)(|\hat{C}_l|/n)}.
\]
Simplifying the ratio inside the logarithm gives:
\[
\frac{n_{kl}/n}{(|C_k|/n)(|\hat{C}_l|/n)} = \frac{n_{kl} n}{|C_k| |\hat{C}_l|}.
\]
Therefore, the mutual information becomes:
\[
I(c, \hat{c}) = \sum_{k,l} \frac{n_{kl}}{n} \log \frac{n_{kl} n}{|C_k| |\hat{C}_l|}.
\]


Substituting the explicit expressions for entropy and mutual information into the VI definition and multiplying the entire objective by $n$ provides the scaled formulation:
\[
n \, VI(c, \hat{c}) = \sum_k |C_k| \log \frac{n}{|C_k|} + \sum_l |\hat{C}_l| \log \frac{n}{|\hat{C}_l|} - 2 \sum_{k,l} n_{kl} \log \frac{n_{kl} n}{|C_k||\hat{C}_l|}.
\]
Taking the exact posterior expectation of this formulation, $\mathbb{E}[VI(c, \hat{c}) \mid D]$, is computationally intractable because it requires summing over the combinatorial space of all possible partitions. To derive a scalable alternative, we isolate the terms dependent on $\hat{c}$ and convert the cluster-level sums into item-level sums over $i=1, \dots, n$.

Notice that a cluster-level sum of the form $\sum_l |\hat{C}_l| f(|\hat{C}_l|)$ can be rewritten as an item-level sum $\sum_{i=1}^n f(|\hat{C}_{\hat{c}_i}|)$. Applying this identity to the mutual information overlap term, the count $n_{kl}$ represents the size of the intersection $C_{c_i} \cap \hat{C}_{\hat{c}_i}$ for an item $i$. We can express this intersection size using indicator functions as $\sum_{j \in \hat{C}_{\hat{c}_i}} \mathbb{I}(c_i = c_j)$. 

This allows us to rewrite the expected value of the key intractable cross-term as an expectation over individual items:
\[
\mathbb{E} \left[ \sum_{k,l} n_{kl} \log n_{kl} \mid D \right] = \mathbb{E} \left[ \sum_{i=1}^n \log \left( \sum_{j \in \hat{C}_{\hat{c}_i}} \mathbb{I}(c_i = c_j) \right) \mid D \right].
\]
Because the logarithm is a concave function, we can apply Jensen's inequality, $\mathbb{E}[\log X] \le \log \mathbb{E}[X]$, to push the expectation inside the logarithmic term, providing a lower bound:
\[
\mathbb{E} \left[ \sum_{i=1}^n \log \left( \sum_{j \in \hat{C}_{\hat{c}_i}} \mathbb{I}(c_i = c_j) \right) \mid D \right] \le \sum_{i=1}^n \log \left( \sum_{j \in \hat{C}_{\hat{c}_i}} \mathbb{E}[\mathbb{I}(c_i = c_j) \mid D] \right).
\]
Noting that the expectation of the indicator function $\mathbb{E}[\mathbb{I}(c_i = c_j) \mid D]$ corresponds exactly to our posterior similarity matrix entries $P_{ij} = \Pr(c_i = c_j \mid D)$, we arrive at our computable surrogate objective:
\[
\widehat{VI}_{\mathrm{UB}}(\hat{c}) =
\sum_{i=1}^n \log |\hat{C}_{\hat{c}_i}|
- 2 \sum_{i=1}^n \log \left(
\sum_{j \in \hat{C}_{\hat{c}_i}} P_{ij}
\right).
\]

Minimizing this equation serves as a rigorous, computationally efficient approximation to minimizing the true posterior expected VI loss, utilizing only the pairwise similarity matrix rather than enumerating the full posterior space.

\section{Experiments}

\subsection{Evaluation Metrics}
\label{sec:metrics}

For synthetic experiments, we evaluate each estimated poset \(\hat h\) against the ground-truth
poset \(h^\star\) using precedence-pair precision, recall, and F1. Let the transitive closure of h be 
\[
TC(h)=\{(i,j): i\neq j,\ i \prec_h j\}
\]
denote the set of all ordered comparable pairs in the transitive closure of poset \(h\). We define 
\[
\mathrm{TP}=|TC(h^\star)\cap TC(\hat h)|,\qquad
\mathrm{FP}=|TC(\hat h)\setminus TC(h^\star)|,\qquad
\mathrm{FN}=|TC(h^\star)\setminus TC(\hat h)|.
\]
We then compute
\[
\mathrm{Precision}=\frac{\mathrm{TP}}{\mathrm{TP}+\mathrm{FP}},\qquad
\mathrm{Recall}=\frac{\mathrm{TP}}{\mathrm{TP}+\mathrm{FN}},\qquad
\mathrm{F1}=\frac{2\,\mathrm{Precision}\,\mathrm{Recall}}
{\mathrm{Precision}+\mathrm{Recall}}.
\]
Thus, these metrics measure recovery of the underlying precedence relation rather than only the
minimal Hasse diagram representation.

\subsection{Experiment~A: HPO Synthetic}
Experiment~A is a full-factorial synthetic HPO recoverability study. We vary the assessor--global coupling strength
$\tau$, the queue-jump noise level $p$, the number of lists per assessor $L$, the choice-set size
$s$ (reported in code as \texttt{task\_size}), and the observation model (likelihood). The fixed settings are
$m=10$ items, $A=5$ assessors, $5\times 10^5$ MCMC iterations, burn-in $2.5\times 10^5$, and seed $=1$.
The resulting grid contains
\[
3 \times 2 \times 3 \times 3 \times 3 = 162
\]
fitted configurations.

\begin{table}[t]
\centering
\small
\caption{\textbf{Experiment~A: synthetic HPO configuration grid.}
The full grid is obtained by taking the Cartesian product of the values shown below.}
\label{tab:block_a_config_grid}
\setlength{\tabcolsep}{8pt}
\renewcommand{\arraystretch}{1.05}
\begin{tabular}{lll}
\toprule
\textbf{Parameter} & \textbf{Values} & \textbf{Interpretation} \\
\midrule
$\tau$ & $\{0.0,\ 0.5,\ 0.9\}$ & assessor--global coupling strength \\
$p$ & $\{0.01,\ 0.1\}$ & queue-jump noise probability \\
$L$ & $\{5,\ 10,\ 20\}$ & lists/tasks per assessor \\
$s$ (\texttt{task\_size}) & $\{2,\ 5,\ 10\}$ & items per choice set \\
likelihood &
Weighted/Log Successors/Queue-Jump &
observation model \\
\midrule
\multicolumn{3}{l}{\textbf{Fixed settings:} $m=10$, $A=5$, iterations $=5\times 10^5$, burn-in $=2.5\times 10^5$, seed $=1$} \\
\multicolumn{3}{l}{\textbf{Total configurations:} $162$} \\
\bottomrule
\end{tabular}
\end{table}

\begin{table}[h]
\centering
\small
\caption{\textbf{Mean runtime (minutes) for the synthetic experiments by likelihood and task size $m\in\{2,5,10\}$.}
Each run uses $5\times 10^5$ MCMC iterations with burn-in $2.5\times 10^5$.}
\label{tab:synth_runtime}
\setlength{\tabcolsep}{8pt}
\renewcommand{\arraystretch}{1.05}
\begin{tabular}{lccc}
\toprule
\textbf{Likelihood} & \textbf{Task size = 2} & \textbf{Task size = 5} & \textbf{Task size = 10} \\
\midrule
\texttt{frontier\_softmax\_queue\_jump} & 12.5 & 16.0 & 23.7 \\
\texttt{queue\_jump}                  & 21.7 & 33.9 & 449.8 \\
\texttt{weighted\_queue\_jump}        & 12.1 & 21.5 & 464.1 \\
\bottomrule
\end{tabular}
\end{table}

\begin{figure}[h]
\centering
\includegraphics[width=\linewidth]{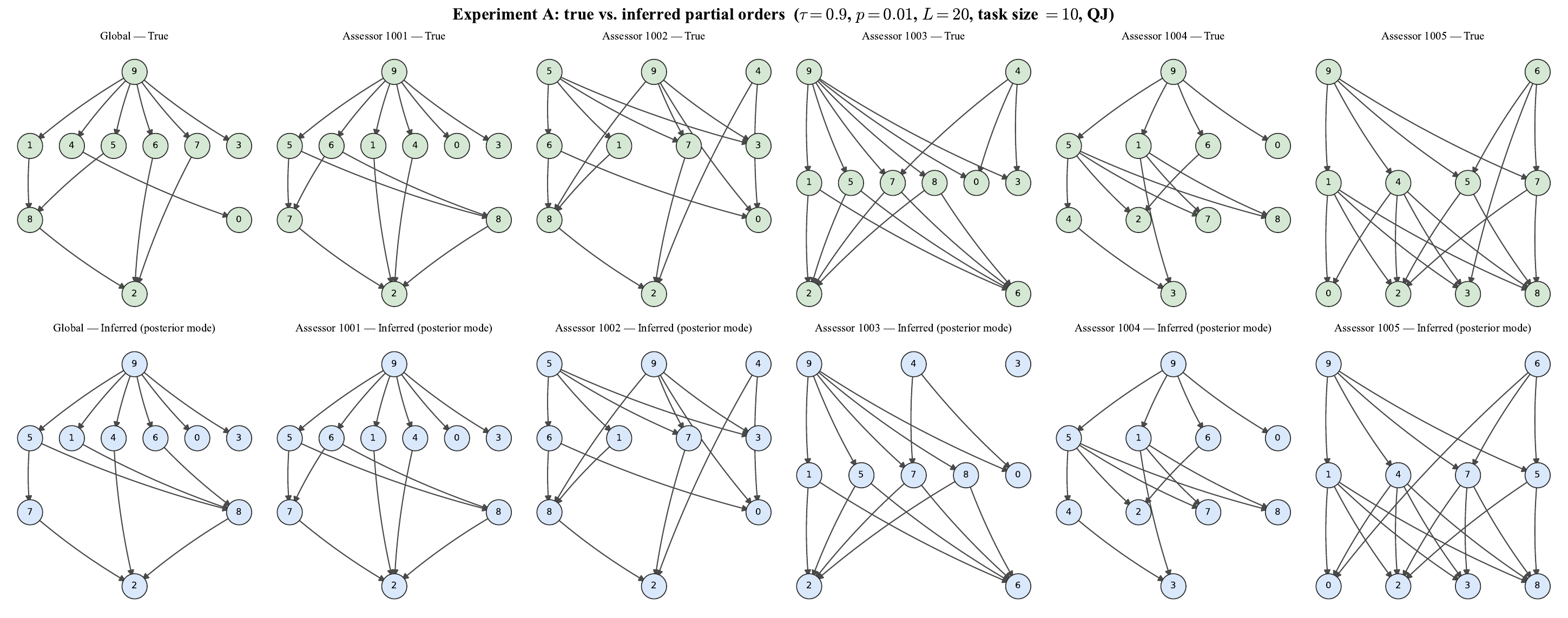}
\caption{\textbf{Experiment~A: true vs.\ inferred partial orders in a representative high-signal setting.}
Comparison between the ground-truth and posterior-mode inferred partial orders for a representative synthetic configuration
($\tau=0.9$, $p=0.01$, $L=20$, task size $m=10$, queue-jump). The inferred global and assessor-level structures recover
the main ordering relations closely.}
\label{fig:blocka_true_vs_inferred_poset}
\end{figure}

\begin{figure}[h]
\centering
\includegraphics[width=\linewidth]{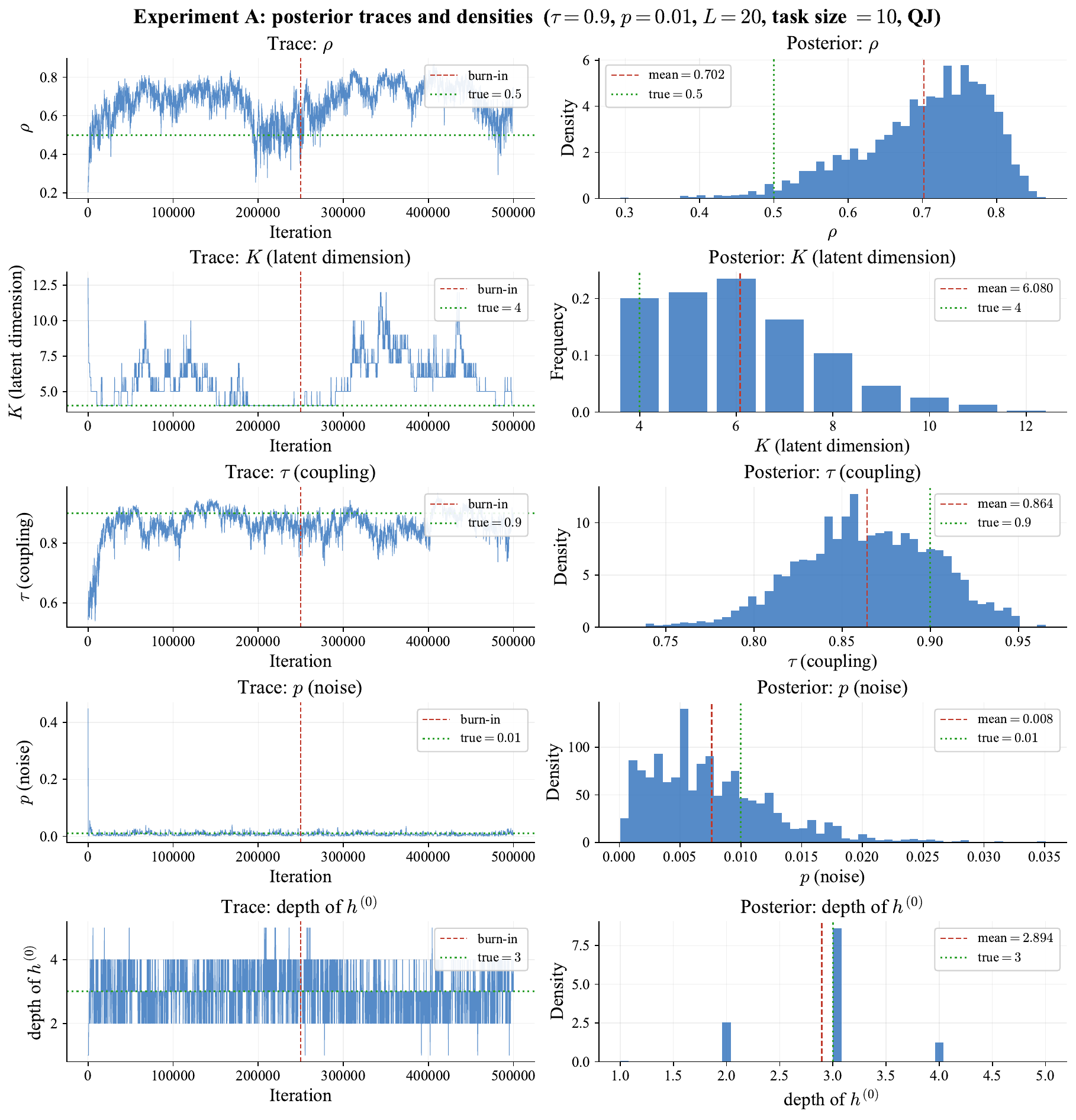}
\caption{\textbf{Experiment~A: posterior trace and density summaries in a representative high-signal setting.}
Posterior trace and marginal density plots for the main scalar parameters under the same synthetic configuration as
Figure~\ref{fig:blocka_true_vs_inferred_poset}. The traces show stable post burn-in behavior, and the posterior
distributions concentrate around the true parameter values.}
\label{fig:blocka_posterior_traces}
\end{figure}

\begin{figure}[h]
\centering
\includegraphics[width=\linewidth]{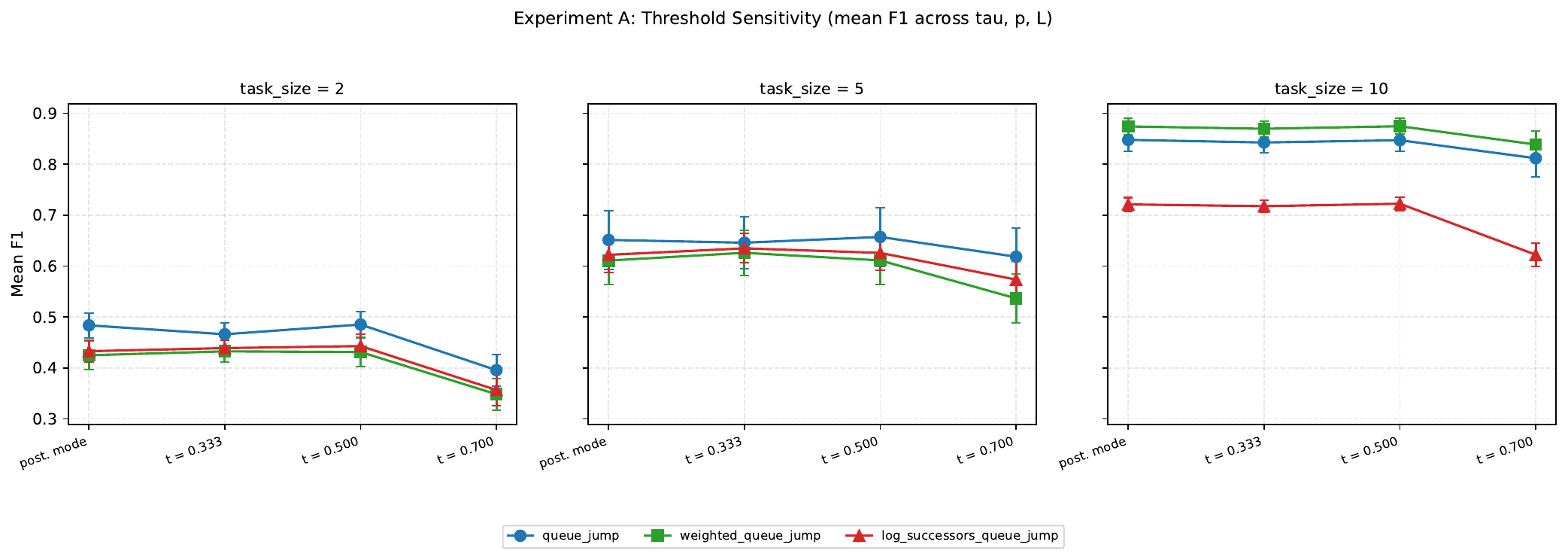}
\caption{\textbf{Experiment~A: threshold sensitivity of structural recovery.}
Mean F1, averaged over $\tau$, $p$, and $L$, for four posterior summaries: posterior mode and thresholded edge-probability estimators with thresholds $t=0.333$, $0.500$, and $0.700$. Panels correspond to task sizes $s\in\{2,5,10\}$, and curves compare the three observation models (\texttt{queue\_jump}, \texttt{weighted\_queue\_jump}, and \texttt{frontier\_softmax\_queue\_jump}).}
\label{fig:blocka_threshold_sensitivity}
\end{figure}

\subsection{Experiment B: HCPO Synthetic}

\begin{figure}[h]
\centering
\includegraphics[width=\linewidth]{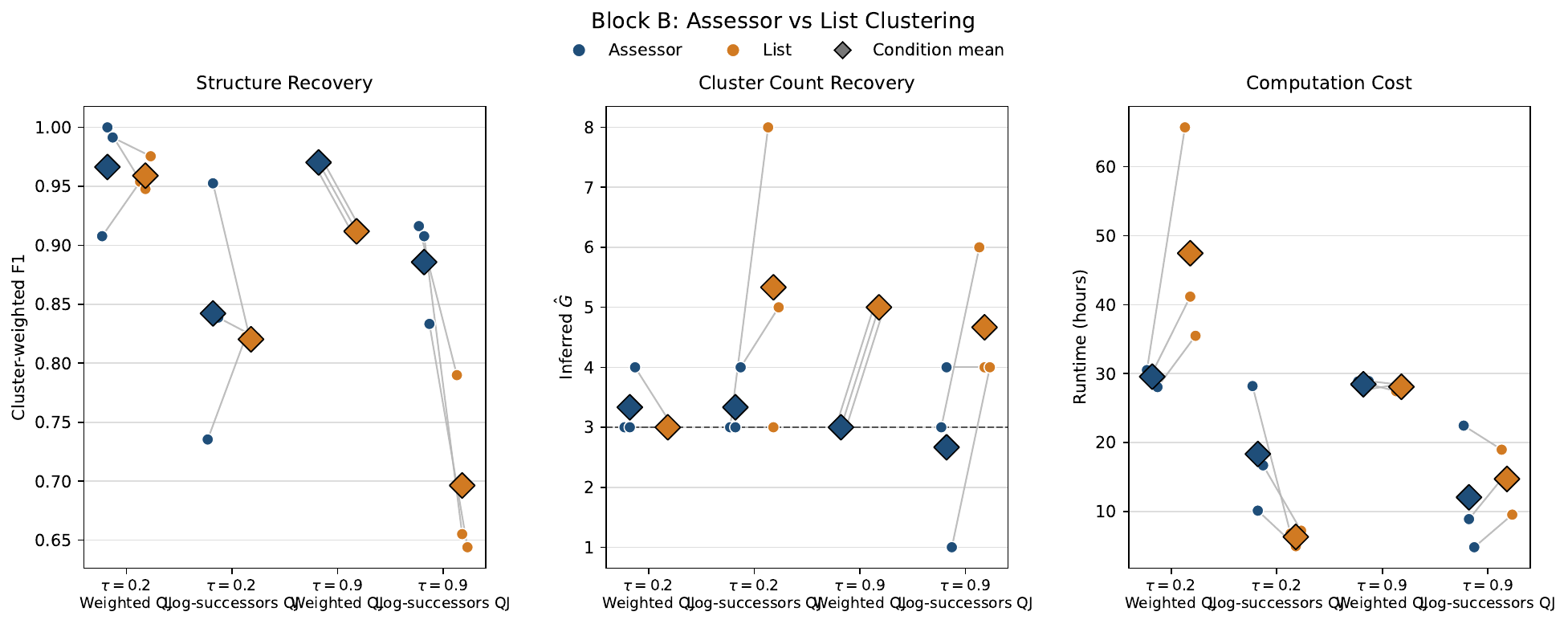}
\caption{\textbf{Experiment~B: assessor-level versus list-level clustering.}
Comparison of the two clustering modes across four representative conditions
($\tau\in\{0.2,0.9\}$ and likelihood $\in\{\texttt{weighted\_queue\_jump},\texttt{frontier\_softmax\_queue\_jump}\}$).
Left: cluster-weighted F1 for structure recovery. Middle: inferred number of clusters $G$, with the dashed line marking the true value $G_{\mathrm{true}}=3$. Right: runtime in hours. Assessor clustering yields equal or better structural recovery and typically recovers the true number of clusters more accurately, while list clustering is more prone to over-fragmentation and has less stable computational cost.}
\label{fig:blockb_assessor_vs_list}
\end{figure}

\subsection{Experiment C: CLOUD IAC Experiment for HPO}
\subsubsection{Cloud Iac 6 dataset}
\label{app:cloud_iac_6_dataset}
We evaluate our method on \textbf{Cloud-IaC-6}, a benchmark of cloud-provisioning tasks ranging from simple instance creation to complex high-availability clusters (see Tables~\ref{tab:cloud_iac_6} and~\ref{tab:aliyun_sceario}). The scenarios are derived from Aliyun cloud infrastructure products (see Table~\ref{tab:aliyun_glossary}). The dataset contains 60 successful real execution traces, comprising 54 LLM-generated traces from a diverse pool of agents, including Qwen-Plus, DeepSeek, and Kimi, together with 6 expert traces. The ground-truth graphs are specified by experts; see Figure~\ref{fig:aliyun-gt-covers}. We provide the full implementation and benchmark datasets in our public repository.\footnote{\url{https://anonymous.4open.science/r/Cloud-IaC-6-B970/README.md}}

\begin{table}[H]
\centering
\small
\renewcommand{\arraystretch}{1.3}
\begin{tabular}{|c|l|p{8.5cm}|}
\hline
\textbf{ID} & \textbf{Scenario Identifier} & \textbf{Description} \\ \hline
1 & SIMPLE\_ECS & Provisions a VPC, VSwitch, and Security Group, followed by a single ECS instance. \\ \hline
2 & SLB\_ECS\_RDS & A classic 3-tier web architecture integrating Server Load Balancer (SLB), ECS, and Relational Database Service (RDS). \\ \hline
3 & SLB\_ECS\_REDIS & A web architecture featuring a caching layer, utilizing SLB, ECS, and Redis. \\ \hline
4 & EIP\_SLB\_ECS & A public-facing application using an Elastic IP (EIP) bound to an SLB and an ECS backend. \\ \hline
5 & DUAL\_ZONE\_ECS\_SLB & Implements High Availability (HA) across multiple Availability Zones at the compute layer. \\ \hline
6 & DUAL\_ZONE\_ECS\_SLB\_RDS & A full-stack HA architecture featuring cross-zone ECS instances and a Primary/Secondary RDS deployment. \\ \hline
\end{tabular}
\caption{Cloud Infrastructure Benchmarking Scenarios}
\label{tab:aliyun_sceario}
\end{table}

\begin{table}[h]
    \centering
    \caption{Glossary of Aliyun Cloud Infrastructure Terms}
    \label{tab:aliyun_glossary}
    \renewcommand{\arraystretch}{1.2} 
    \begin{tabular}{l p{10cm}}
        \toprule
        \textbf{Term} & \textbf{Description} \\
        \midrule
        \textbf{ECS} (Elastic Compute Service) & A web service that provides resizable compute capacity in the cloud (virtual servers), allowing users to launch instances with a variety of operating systems and hardware configurations. \\
        
        \textbf{SLB} (Server Load Balancer) & A traffic distribution service that manages high traffic by distributing incoming network requests across multiple ECS instances to ensure high availability and reliability. \\
        
        \textbf{RDS} (Relational Database Service) & A managed database service that provides scalable and reliable relational databases (e.g., MySQL, PostgreSQL) without the need for manual hardware provisioning or maintenance. \\
        
        \textbf{VPC} (Virtual Private Cloud) & A private, isolated network environment within the cloud where users can configure IP address ranges, subnets, and routing tables to securely manage their resources. \\
        
        \textbf{VSwitch} (Virtual Switch) & A virtual networking component within a VPC that connects different cloud resources (like ECS instances) in a specific zone or subnet. \\
        
        \textbf{EIP} (Elastic IP) & A static, public IP address designed for dynamic cloud computing, allowing users to mask the failure of an instance or software by rapidly remapping the address to another instance. \\
        
        \textbf{Redis} & An in-memory data structure store used as a database, cache, and message broker, often utilized in web architectures to improve performance. \\
        
        \textbf{HA} (High Availability) & A system design approach that ensures a certain level of operational performance (uptime) for a higher-than-normal period, often achieved by deploying resources across multiple zones (e.g., Dual Zone). \\
        \bottomrule
    \end{tabular}
\end{table}

\noindent \textbf{Trace Data Structure.}
Each entry in the dataset is a serialized execution trace $\tau = (I, \mathcal{A}, \mathcal{B})$, where:
\begin{itemize}
    \item \textbf{Intent ($I$):} The natural language instruction (e.g., ``Create a 2-core ECS in Hangzhou Zone H'').
    \item \textbf{Action Sequence ($\mathcal{A}$):} The linear sequence of API calls executed by the agent (e.g., \texttt{CreateVpc} $\to$ \texttt{RunInstances}).
    \item \textbf{Blackboard State ($\mathcal{B}$):} The shared context containing resource IDs (e.g., \texttt{VpcId}, \texttt{SecurityGroupId}) produced by earlier actions and consumed by later ones.
\end{itemize}

Figure~\ref{lst:trace_example} illustrates a sample trace from Scenario S1 (\texttt{simple\_ecs}). Although the agent executes the actions sequentially (System 2 behavior), the underlying dependencies reveals latent concurrency: \texttt{CreateVSwitch} and \texttt{CreateSecurityGroup} both depend on \texttt{CreateVpc}, but are independent of each other. 

\begin{figure}[h!]
\begin{minipage}{\linewidth}
\small
\begin{verbatim}
{
  "trace_id": "T01_qwen-plus_20260104",
  "intent": "Create a 2-core 4G ECS instance in Hangzhou Zone H",
  "action_sequence": [
    { "step": 1, "action": "CreateVpc", 
      "output": {"VpcId": "vpc-9517..."} },
      
    { "step": 2, "action": "CreateVSwitch", 
      "params": {"VpcId": "vpc-9517...", "ZoneId": "cn-hangzhou-h"},
      "output": {"VSwitchId": "vsw-191b..."} },
      
    { "step": 3, "action": "CreateSecurityGroup", 
      "params": {"VpcId": "vpc-9517..."},
      "output": {"SecurityGroupId": "sg-0fae..."} },
      
    { "step": 4, "action": "RunInstances", 
      "params": {"VSwitchId": "vsw-191b...", "SecurityGroupId": "sg-0fae..."},
      "output": {"InstanceId": "i-007d..."} }
  ]
}
\end{verbatim}
\captionof{figure}{\textbf{Sample Execution Trace (S1: Simple ECS).} The log captures the linear execution of actions. Note the explicit data dependencies: Step 4 requires outputs from Steps 2 and 3, while Steps 2 and 3 only require Step 1.}
\label{lst:trace_example}
\end{minipage}
\end{figure}

\begin{table}[t]
\centering
\caption{\textbf{Cloud-IaC-6 benchmark statistics for the combined real-trace setting.}
The benchmark contains $60$ successful real traces collected across six scenarios, comprising $54$ LLM-generated traces and $6$ expert traces.
Here, $M_a$ and $|E|$ denote the number of nodes and edges in the ground-truth partial order, $n_{\mathrm{real}}$ denotes the number of successful real traces for the scenario, \textbf{IP-Cov} denotes bidirectional incomparable-pair coverage induced by the observed real traces before synthetic augmentation, and $n_{\mathrm{full}}$ denotes the total number of traces after augmentation required to reach $\mathrm{IP\text{-}Cov}=1.0$.}
\label{tab:cloud_iac_6}
\small
\setlength{\tabcolsep}{5pt}
\begin{tabular}{l l c c c r c}
\toprule
\textbf{ID} & \textbf{Scenario Name} & $\mathbf{M_a}$ & $\mathbf{|E|}$ & $\mathbf{n_{\mathrm{real}}}$ & \textbf{IP-Cov} & $\mathbf{n_{\mathrm{full}}}$ \\
\midrule
S1 & \texttt{simple\_ecs}               & 5  & 5  & 11 & 0.0\%  & 13 \\
S2 & \texttt{slb\_ecs\_rds}             & 12 & 14 & 10 & 25.0\% & 33 \\
S3 & \texttt{slb\_ecs\_redis}           & 9  & 10 & 11 & 53.3\% & 30 \\
S4 & \texttt{eip\_slb\_ecs}             & 9  & 10 & 11 & 56.2\% & 17 \\
S5 & \texttt{dual\_zone\_ecs\_slb}      & 7  & 8  & 9  & 20.0\% & 13 \\
S6 & \texttt{dual\_zone\_ecs\_slb\_rds} & 10 & 12 & 8  & 33.3\% & 24 \\
\bottomrule
\end{tabular}
\end{table}

\begin{figure*}[t]
\centering
\begin{subfigure}[t]{0.32\textwidth}
  \centering
  \includegraphics[width=\linewidth]{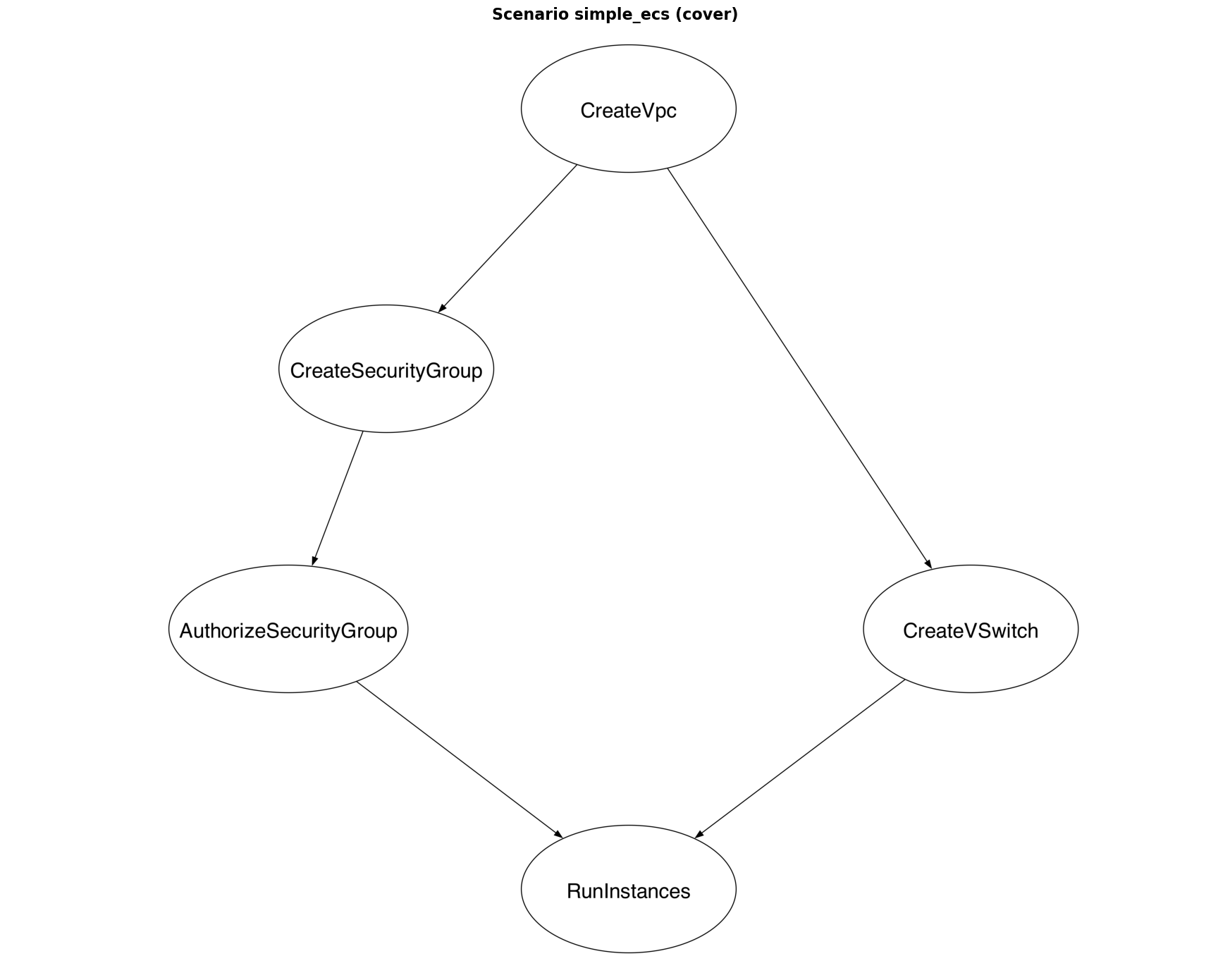}
  \caption{\texttt{simple\_ecs}}
\end{subfigure}\hfill
\begin{subfigure}[t]{0.32\textwidth}
  \centering
  \includegraphics[width=\linewidth]{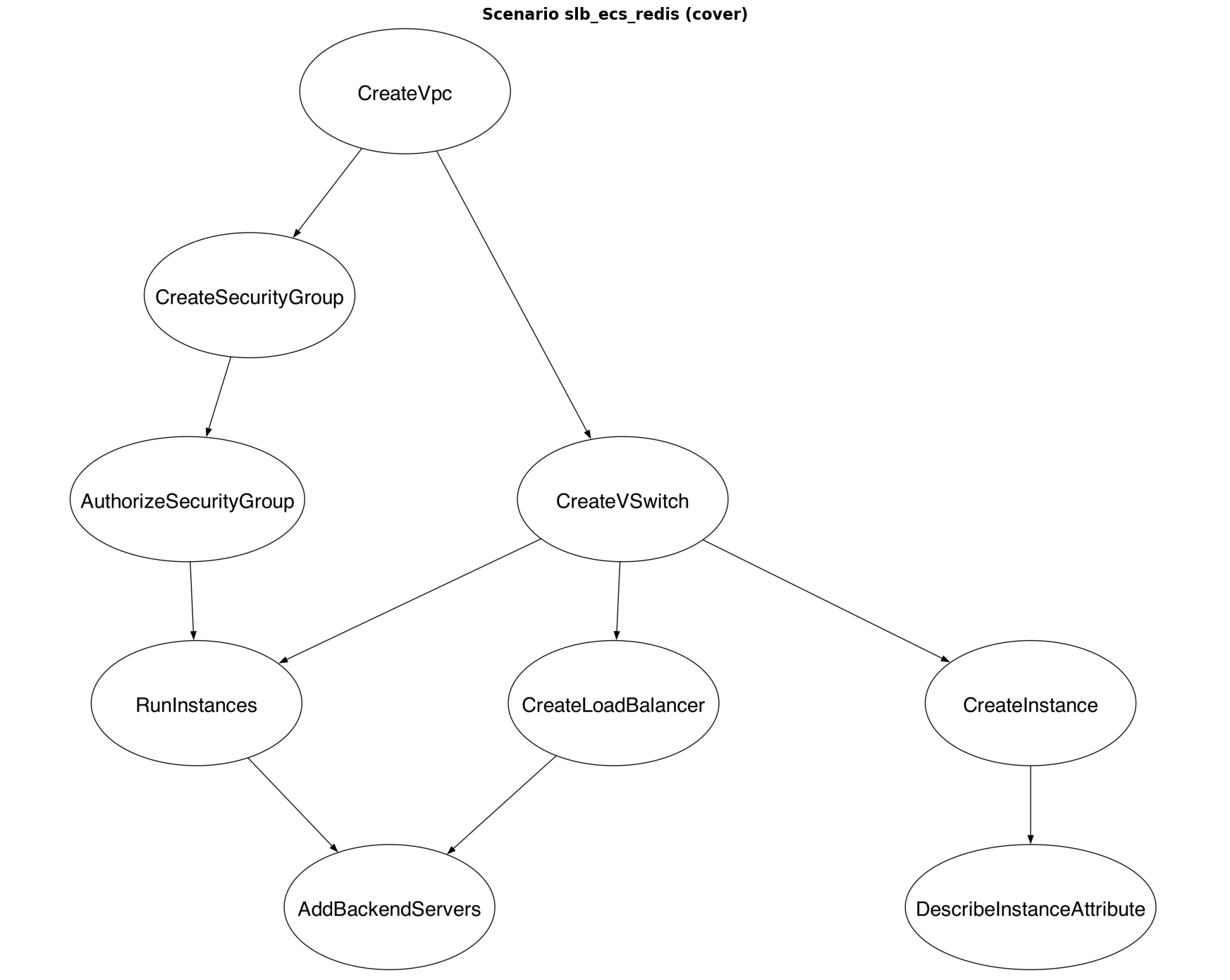}
  \caption{\texttt{slb\_ecs\_redis}}
\end{subfigure}\hfill
\begin{subfigure}[t]{0.32\textwidth}
  \centering
  \includegraphics[width=\linewidth]{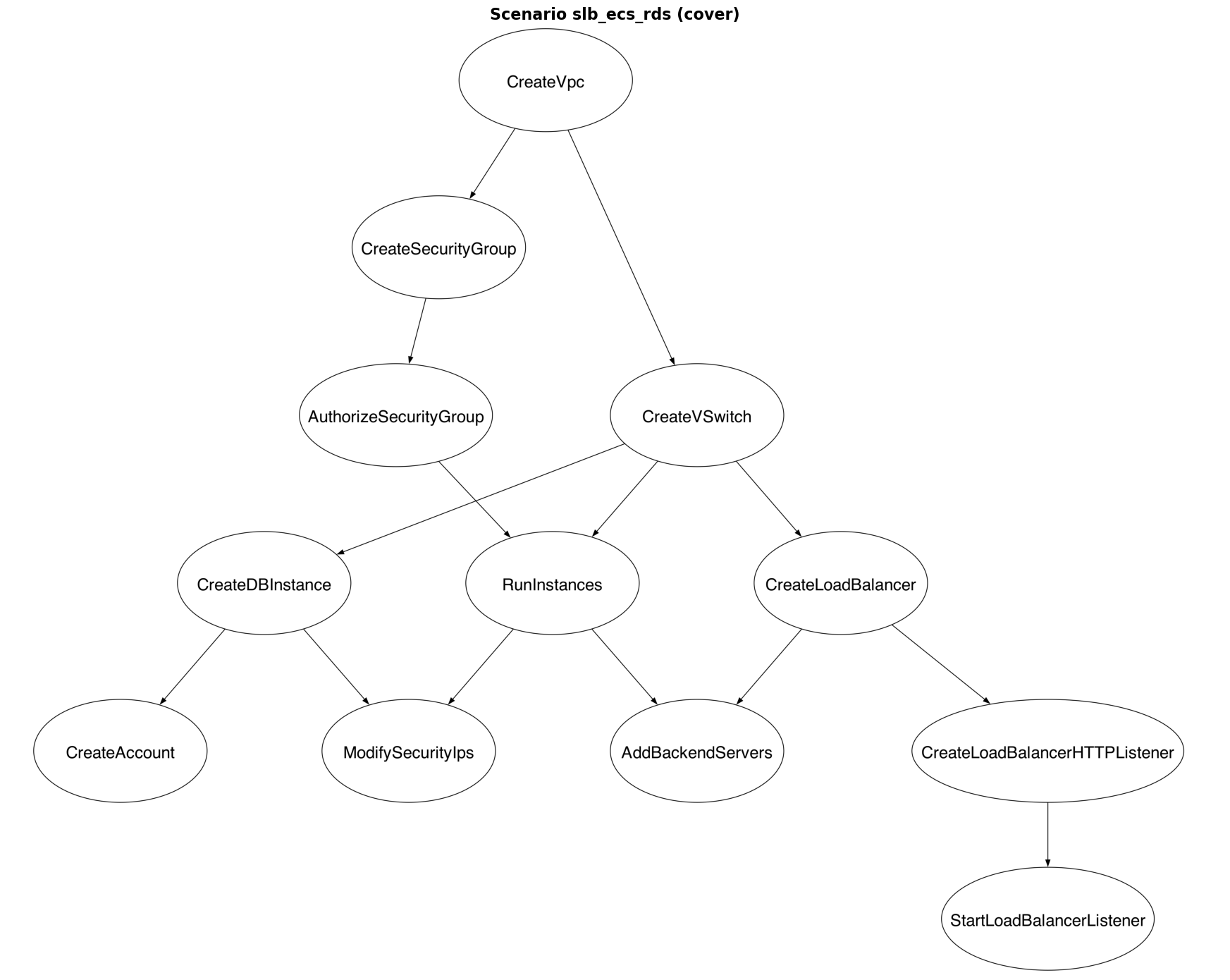}
  \caption{\texttt{slb\_ecs\_rds}}
\end{subfigure}

\vspace{2mm}

\begin{subfigure}[t]{0.32\textwidth}
  \centering
  \includegraphics[width=\linewidth]{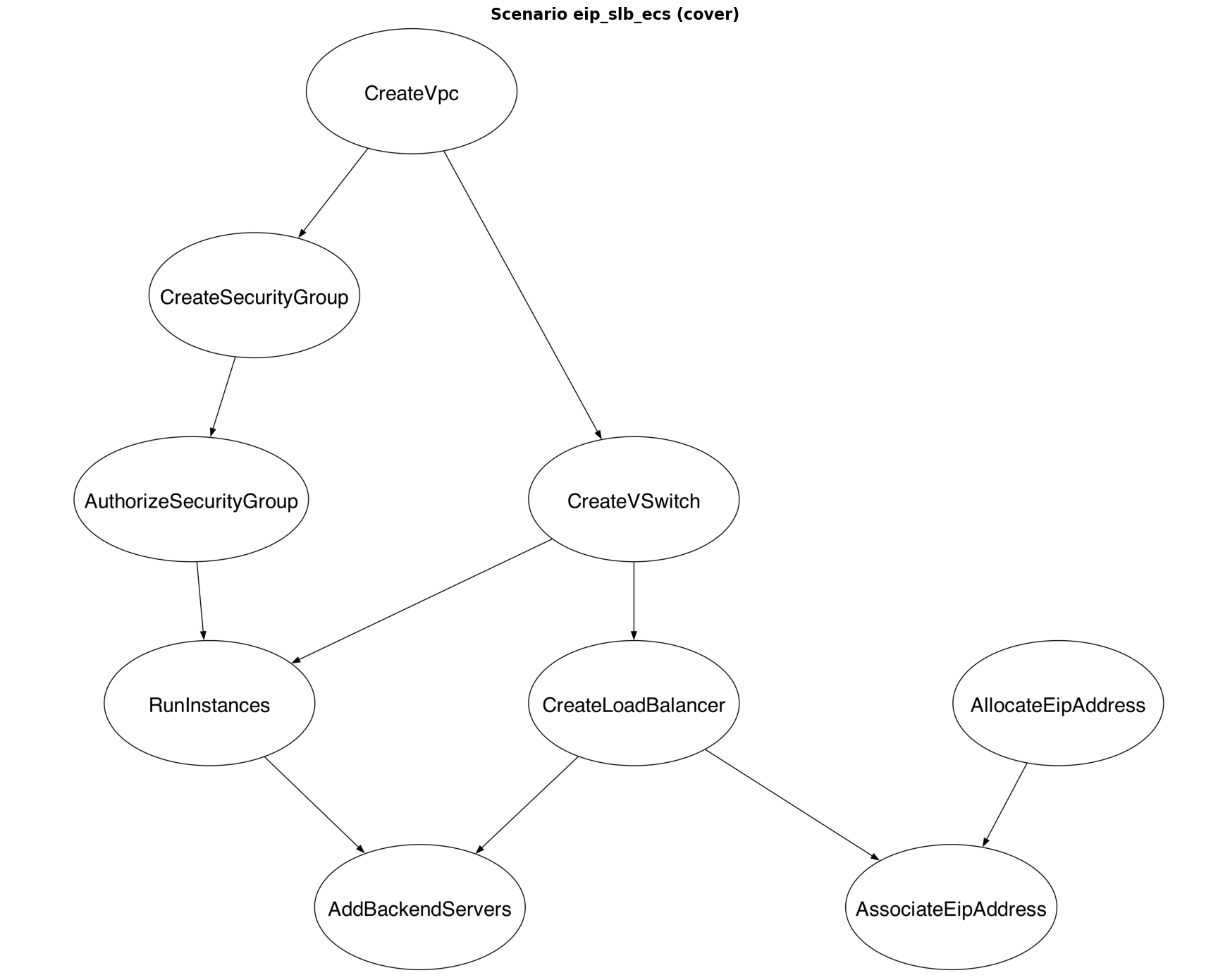}
  \caption{\texttt{eip\_slb\_ecs}}
\end{subfigure}\hfill
\begin{subfigure}[t]{0.32\textwidth}
  \centering
  \includegraphics[width=\linewidth]{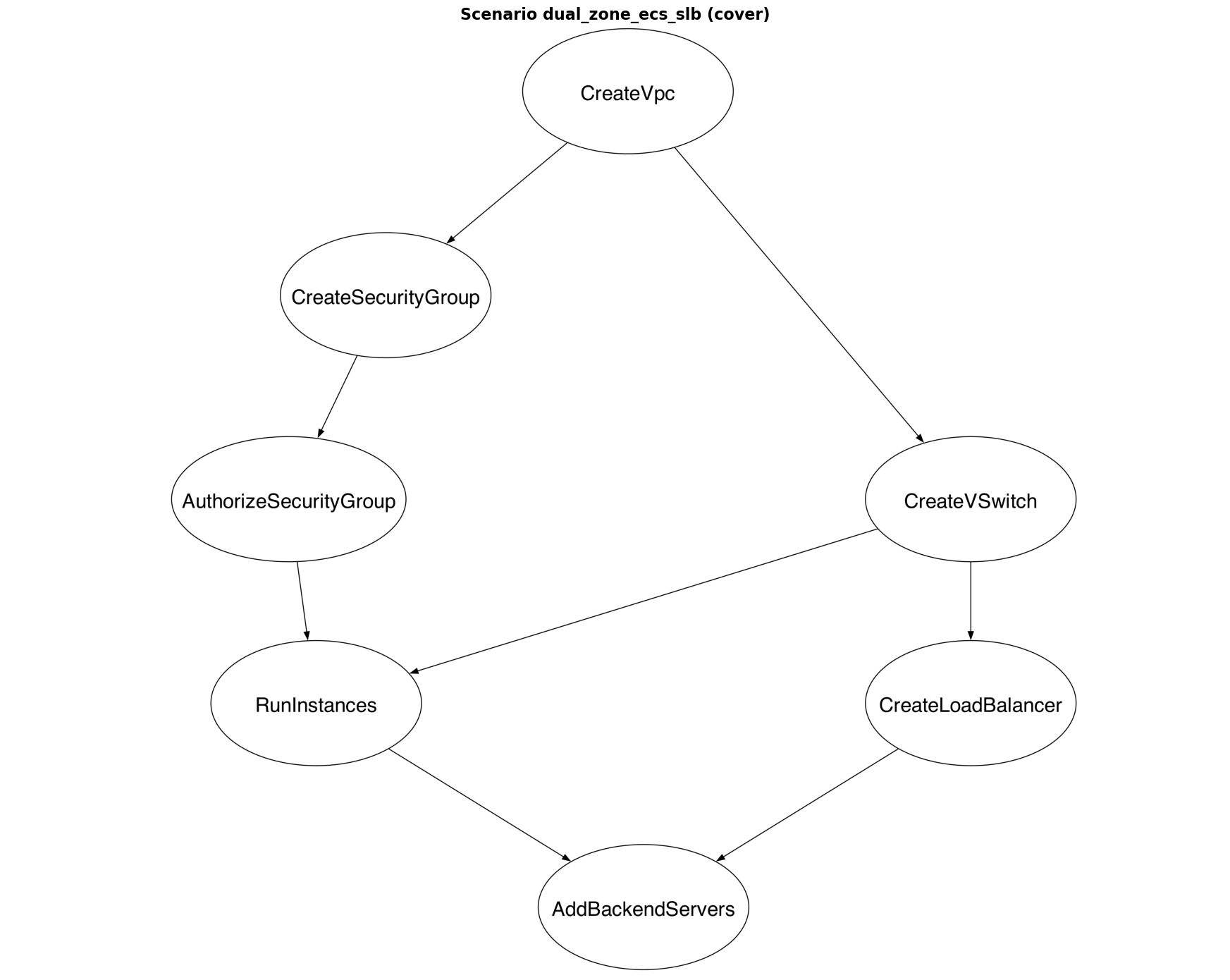}
  \caption{\texttt{dual\_zone\_ecs\_slb}}
\end{subfigure}\hfill
\begin{subfigure}[t]{0.32\textwidth}
  \centering
  \includegraphics[width=\linewidth]{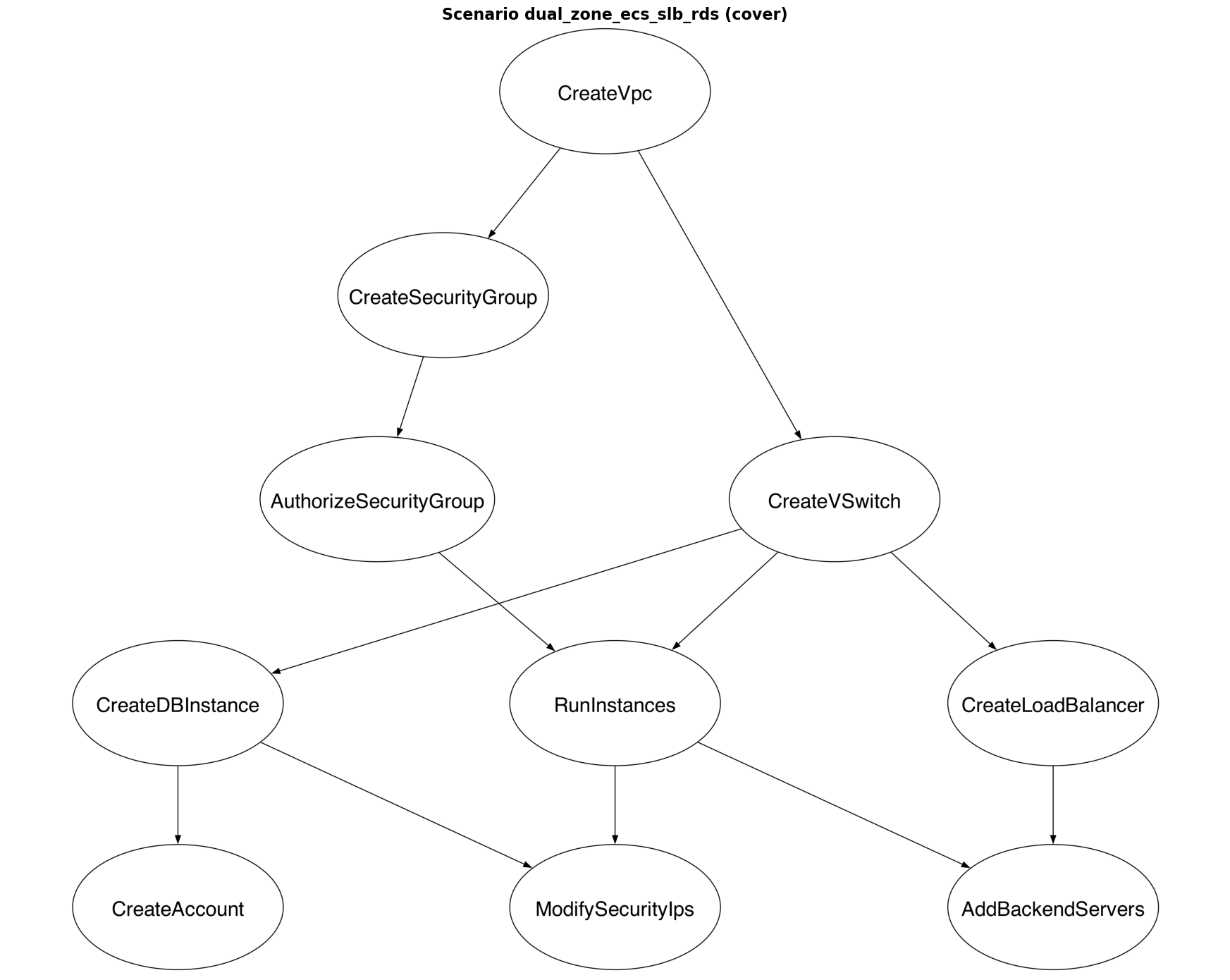}
  \caption{\texttt{dual\_zone\_ecs\_slb\_rds}}
\end{subfigure}
\caption{Ground-truth action-precedence graphs (covers) for the six Aliyun scenarios.
Nodes are cloud API actions; edges denote mandatory precedence constraints.}
\label{fig:aliyun-gt-covers}
\end{figure*}

\subsubsection{Evaluation metrics and incomparability-pair coverage}
\label{app:metrics_ipcov}

Let $h^\star$ denote the ground-truth partial order on action set $A$, and let $\widehat h$ denote an inferred partial order. We write
\[
C(h) = \{(a,b)\in A\times A : a \succ_h b,\ \nexists c \in A \text{ such that } a \succ_h c \succ_h b\}
\]
for the set of cover relations (equivalently, the edge set of the transitive reduction of $h$), and
\[
I(h) = \bigl\{\{a,b\}\subseteq A : a \not\succ_h b \text{ and } b \not\succ_h a\bigr\}
\]
for the set of incomparable task pairs. In the main text, we refer to elements of $I(h^\star)$ as \emph{critical pairs}.

For a generic ground-truth set $S^\star$ and estimate $\widehat S$, we define
\[
\mathrm{Prec}(S^\star,\widehat S)=\frac{|S^\star \cap \widehat S|}{|\widehat S|},\qquad
\mathrm{Rec}(S^\star,\widehat S)=\frac{|S^\star \cap \widehat S|}{|S^\star|},
\]
and
\[
\mathrm{F1}(S^\star,\widehat S)
=
\frac{2\,\mathrm{Prec}(S^\star,\widehat S)\,\mathrm{Rec}(S^\star,\widehat S)}
{\mathrm{Prec}(S^\star,\widehat S)+\mathrm{Rec}(S^\star,\widehat S)}.
\]

The \emph{cover-edge F1} is defined as
\[
\mathrm{F1}_{\mathrm{cover}}
=
\mathrm{F1}\!\bigl(C(h^\star),\, C(\widehat h)\bigr),
\]
which evaluates recovery of the direct ordering constraints in the transitive reduction.

The \emph{critical-pair F1} is defined as
\[
\mathrm{F1}_{\mathrm{crit}}
=
\mathrm{F1}\!\bigl(I(h^\star),\, I(\widehat h)\bigr),
\]
which evaluates how well the inferred structure preserves task pairs that are truly incomparable in the ground truth.

Now let $\mathcal D=\{\pi^{(1)},\dots,\pi^{(m)}\}$ denote a collection of observed traces, and let $A_\pi \subseteq A$ be the set of actions appearing in trace $\pi$. We say that a trace $\pi$ is \emph{feasible} under $\widehat h$, written $\pi \models \widehat h$, if
\[
a \succ_{\widehat h} b,\ a,b\in A_\pi
\quad \Longrightarrow \quad
\operatorname{pos}_\pi(a) > \operatorname{pos}_\pi(b),
\]
for all ordered pairs constrained by $\widehat h$. The \emph{trace feasibility} score is then
\[
\mathrm{TF}(\widehat h;\mathcal D)
=
\frac{1}{|\mathcal D|}
\sum_{\pi \in \mathcal D}
\mathbf{1}\{\pi \models \widehat h\}.
\]

Finally, to quantify how much evidence the trace set provides about true incomparabilities, we define the \emph{incomparability-pair coverage} (IP-Cov). For each $\{a,b\}\in I(h^\star)$, let
\[
\kappa_{\mathcal D}(a,b)
=
\mathbf{1}\Bigl\{
\exists\, \pi,\pi' \in \mathcal D
\text{ such that }
a,b \in A_\pi \cap A_{\pi'}
\text{ and }
\operatorname{pos}_\pi(a) > \operatorname{pos}_\pi(b),\
\operatorname{pos}_{\pi'}(a) < \operatorname{pos}_{\pi'}(b)
\Bigr\}.
\]
That is, $\kappa_{\mathcal D}(a,b)=1$ if the observed traces contain evidence for both relative orders of the incomparable pair $\{a,b\}$. We then define
\begin{equation}\label{eq:IP-Cov-define}
\mathrm{IP\text{-}Cov}(\mathcal D;h^\star)
=
\frac{1}{|I(h^\star)|}
\sum_{\{a,b\}\in I(h^\star)}
\kappa_{\mathcal D}(a,b).
\end{equation}
Thus, $\mathrm{IP\text{-}Cov}$ measures the fraction of truly incomparable pairs for which the dataset reveals both admissible orderings. 

\begin{table}[t]
\centering
\caption{\textbf{Cross-likelihood comparison on Cloud-IaC-6 at $10^6$ MCMC iterations.}
For each likelihood, we report the mean HPO cover-edge F1, critical-pair F1, and trace feasibility across the six scenarios, together with the runtime of the hierarchical HPO fit, the aggregate runtime of the six single-PO baselines, and the total elapsed time.}
\label{tab:aliyun_likelihood_comparison_appendix}
\small
\setlength{\tabcolsep}{5pt}
\begin{tabular}{l c c c c c c}
\toprule
\textbf{Likelihood} & \textbf{Cover F1} & \textbf{Crit F1} & \textbf{Feas.} & \textbf{HPO (h)} & \textbf{Single (h)} & \textbf{Total (h)} \\
\midrule
\texttt{weighted\_queue\_jump}         & 0.988 & 0.997 & 0.982 & 6.618 & 7.931 & 14.559 \\
\texttt{queue\_jump}                   & 0.988 & 0.997 & 0.982 & 6.850 & 7.695 & 14.554 \\
\texttt{frontier\_softmax\_likelihood} & 0.860 & 0.842 & 0.808 & 1.922 & 3.979 & 5.915 \\
\bottomrule
\end{tabular}
\end{table}

\subsection{Experiment D: The Sound Experiment for HCPO Assessor}

\subsubsection{The 3D sound dataset}
\label{subsec:data-sound}

Our first case study re-analyses listening data from
\citet{crispino19}, who investigate how three–dimensional (3-D) sound
spatialisation shapes perceived human agency in acousmatic music. The stimuli
are synthetic, yet physically constrained, renderings of a single cello
bow–stroke captured with 3-D motion tracking. The vertical component of the
bow trajectory is mapped to pitch, the bow velocity controls amplitude and spectral variation, and the resulting signals are spatialised using higher-order Ambisonics in a virtual concert hall.
\begin{table}[htbp]
  \centering
  \caption{Stimulus conditions for the 3D bow–stroke sounds.}
  \label{tab:bow-stimuli}
  \begin{tabular*}{\linewidth}{@{\extracolsep{\fill}} lcccl}
    \toprule
    ID  & Spatial cue          & Pitch cue & Vol-2 cue & Extra notes \\
    \midrule
    S1  & full                 & \cmark & \cmark & baseline (5 s) \\
    S2  & full (front scene)   & \cmark & \cmark & scene rendered in front \\
    S3  & none (mono)          & \cmark & \cmark & mono over 1 speaker \\
    S4  & partial (global)     & \cmark & \cmark & fewer spatial details \\
    S5  & minimal              & \cmark & \cmark & only 3 direction changes \\
    S6  & full                 & \cmark & \xmark & Vol-2 removed \\
    S7  & full                 & \xmark & \cmark & pitch variation removed \\
    S8  & full                 & \xmark & \xmark & pitch + Vol-2 removed \\
    S9  & partial              & \xmark & \xmark & S4 minus pitch + Vol-2 \\
    S10 & minimal              & \xmark & \xmark & S5 minus pitch + Vol-2 \\
    S11 & full                 & \cmark & \cmark & 30\% slower (6.5 s) \\
    S12 & full                 & \cmark & \cmark & 50\% slower (7.5 s) \\
    \bottomrule
  \end{tabular*}
\end{table}

\begin{table}[h]
\centering
\small
\caption{\textbf{Sound data: VI-optimal partition summaries.}
For each observation model, we report the cluster-size profile of the VI-optimal partition, the corresponding number of clusters $\hat G$, and the VI objective value. Lower VI objective values indicate better agreement with the posterior similarity structure.}
\label{tab:sound_partition_summary}
\setlength{\tabcolsep}{5pt}
\renewcommand{\arraystretch}{1.05}
\begin{tabular}{l p{0.46\linewidth} c c}
\toprule
\textbf{Observation model} & \textbf{VI-optimal cluster sizes} & $\hat G$ & \textbf{VI objective} \\
\midrule
frontier-softmax        & 16, 14, 5, 4, 4, 3 & 6 & $-62.62$ \\
weighted queue-jump & 18, 14, 7, 4, 3    & 5 & $-73.10$ \\
\bottomrule
\end{tabular}
\end{table}

\begin{table}[t]
\centering
\small
\caption{\textbf{Sound refit: multi-chain restart MCMC efficiency.}
Effective sample sizes (ESS) for key metrics from the final 4-chain restart run (seeds 42/143/244/345), along with the summed ESS across chains. All split-$\widehat{R}$ values for these metrics are below $1.05$ (range $1.001$--$1.039$).}
\label{tab:sound_refit_restart_ess}
\begin{tabular}{lrrrrr}
\toprule
\textbf{Metric} & \textbf{seed42} & \textbf{seed143} & \textbf{seed244} & \textbf{seed345} & \textbf{ESS sum} \\
\midrule
loglik        & 724.519  & 532.314  & 642.681  & 510.071  & 2409.584 \\
$\rho$        & 38.528   & 30.439   & 56.239   & 61.117   & 186.323 \\
$\tau$        & 22.343   & 31.760   & 82.558   & 73.404   & 210.065 \\
$K$           & 319.721  & 236.305  & 294.844  & 320.260  & 1171.130 \\
prob\_noise   & 182.479  & 217.678  & 312.085  & 336.489  & 1048.730 \\
$K_{\log(1+\rho)}$ & 212.531  & 100.926  & 188.486  & 201.086  & 703.029 \\
depth         & 1034.049 & 1399.369 & 1559.238 & 1078.648 & 5071.305 \\
\bottomrule
\end{tabular}
\end{table}

This appendix records the execution protocol used for all experiment blocks in the paper.
The goal is to make the computational pipeline reproducible (fixed configurations and seeds),
and to ensure that reported posterior summaries are supported by basic MCMC diagnostics.

\begin{figure}[h]
\centering
\includegraphics[width=0.95\linewidth]{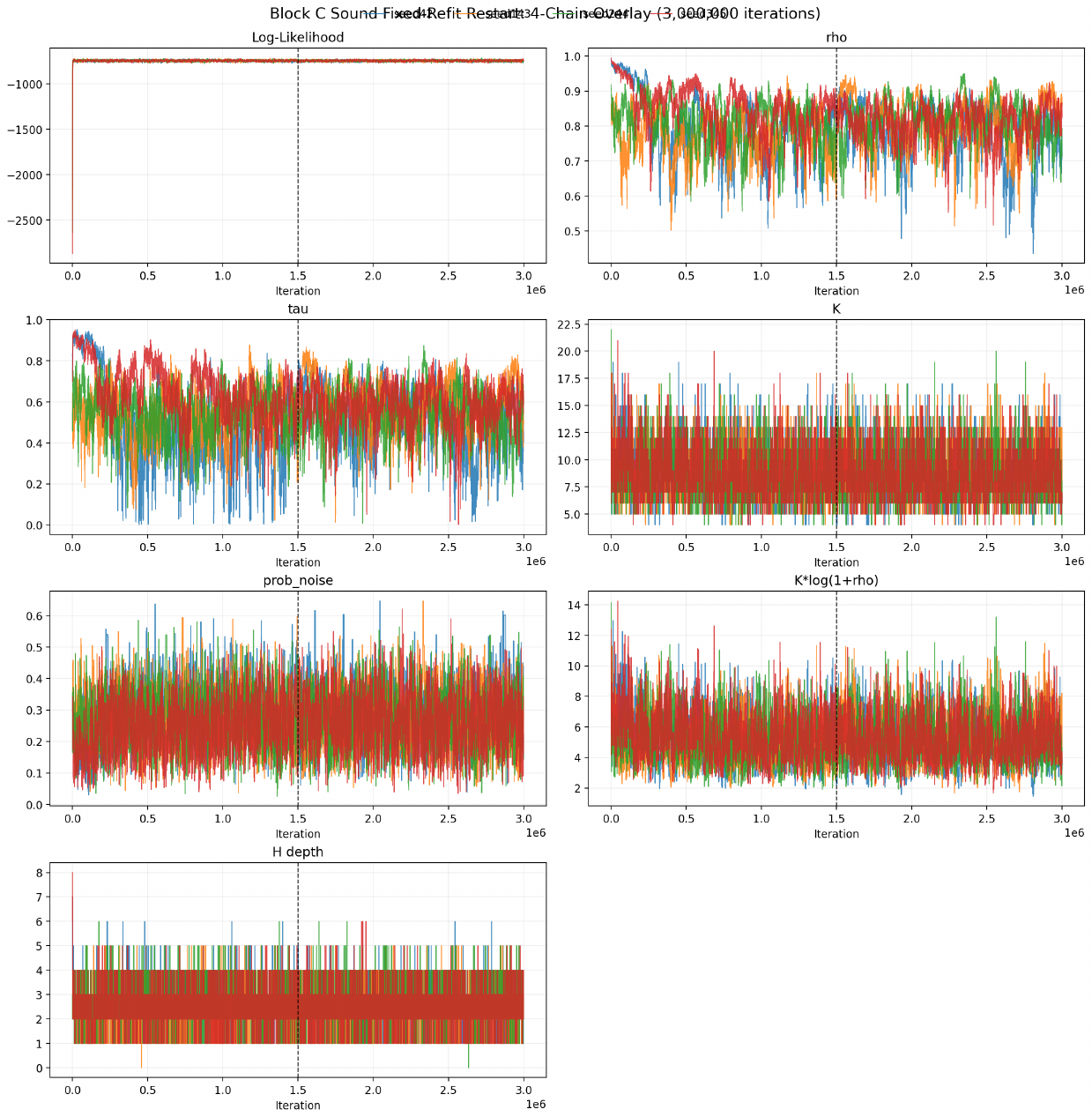}
\caption{\textbf{Sound refit: multi-chain overlay.}
Overlay of four refit chains (with restarts) conditional on the fixed cluster assignments, used to assess
cross-chain agreement and mixing.}
\label{fig:multiseed_profile_new}
\end{figure}

\begin{figure}[h]
\centering
\includegraphics[width=0.95\linewidth]{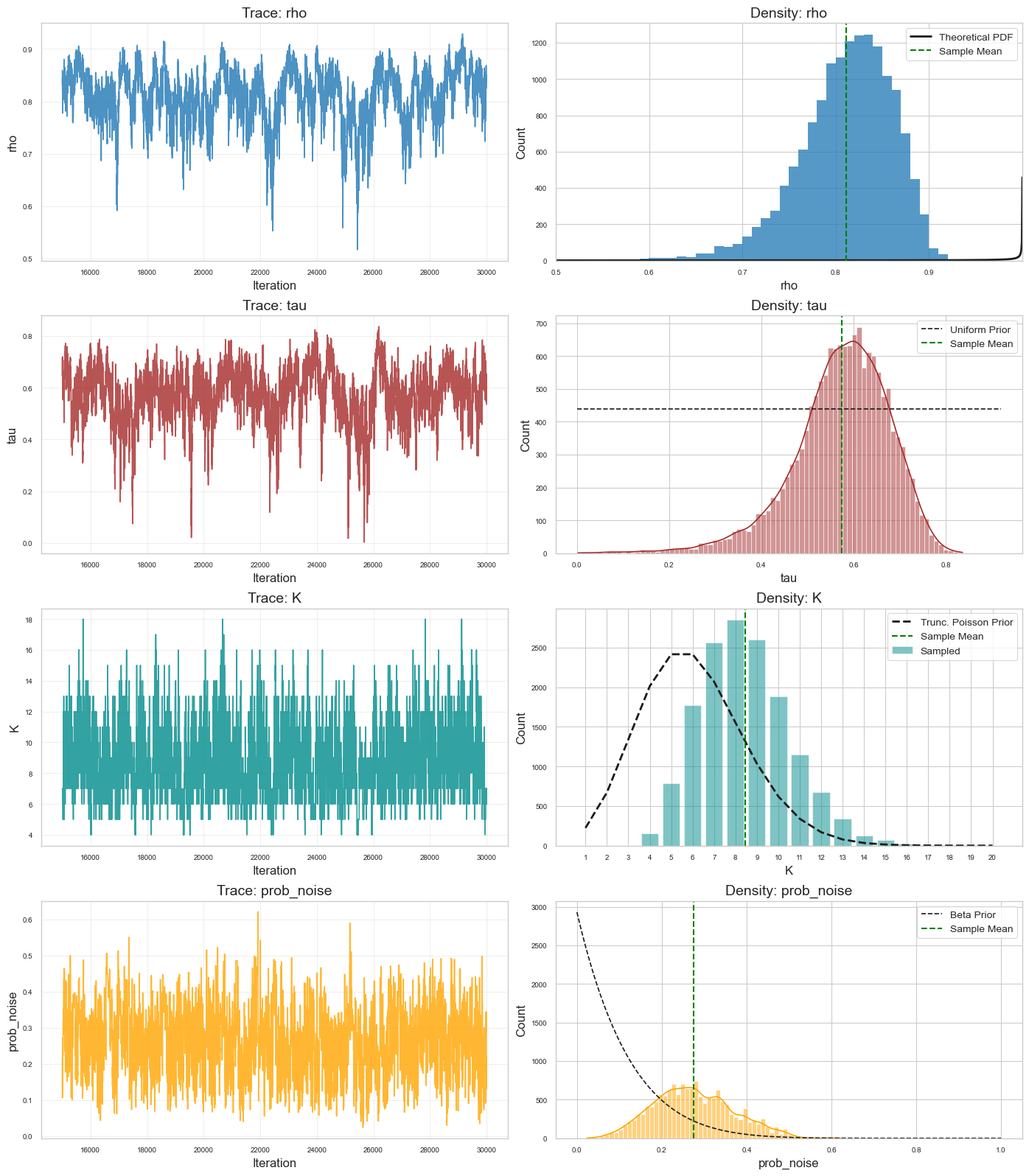}
\caption{\textbf{Sound refit: posterior diagnostics.}
Posterior trace summaries from the fixed-partition refit on the sound dataset (four chains; seeds 42/143/244/345),
used to assess convergence and mixing of key scalar quantities.}
\label{fig:sound_refit_posterior_diagnostics}
\end{figure}

We treat each participant as an assessor providing a sparse, noisy partial
order over the $12$ stimuli, induced by their $30$ pairwise choices. This
setting is particularly well-suited to hierarchical partial-order models,
which can pool information across listeners while allowing for clustered
variation in how spatial and dynamical cues translate into perceived human
agency.

\begin{table}[h]
\centering
\small
\caption{\textbf{Sound data: MCMC summaries for two observation models.}
Posterior mean (SD) and effective sample size (ESS) computed from $n_{\text{post}}=3000$ post burn-in draws.}
\label{tab:sound_obsmodel_diag}
\setlength{\tabcolsep}{6pt}
\renewcommand{\arraystretch}{1.0}
\begin{tabular}{lccccc}
\toprule
& \multicolumn{2}{c}{\textbf{frontier-softmax}} && \multicolumn{2}{c}{\textbf{Weighted queue-jump}} \\
\cmidrule{2-3}\cmidrule{5-6}
\textbf{Parameter} & \textbf{Mean (SD)} & \textbf{ESS} && \textbf{Mean (SD)} & \textbf{ESS} \\
\midrule
$K$            & 11.38 (2.37) & 176.7 && 11.01 (2.42) & 115.5 \\
Depth          & 2.37 (0.81)  & 21.1  && 1.91 (0.72)  & 76.5 \\
Log-likelihood & $-715.30$ (16.72) & 81.1 && $-714.80$ (17.25) & 57.4 \\
Noise prob.    & 0.129 (0.071) & 4.19 && 0.181 (0.095) & 44.2 \\
$\rho$         & 0.832 (0.041) & 12.6 && 0.838 (0.036) & 35.2 \\
$\tau$         & 0.458 (0.176) & 6.30 && 0.611 (0.120) & 28.5 \\
\bottomrule
\end{tabular}
\end{table}

\begin{figure}[t]
  \centering
  \includegraphics[width=0.95\linewidth]{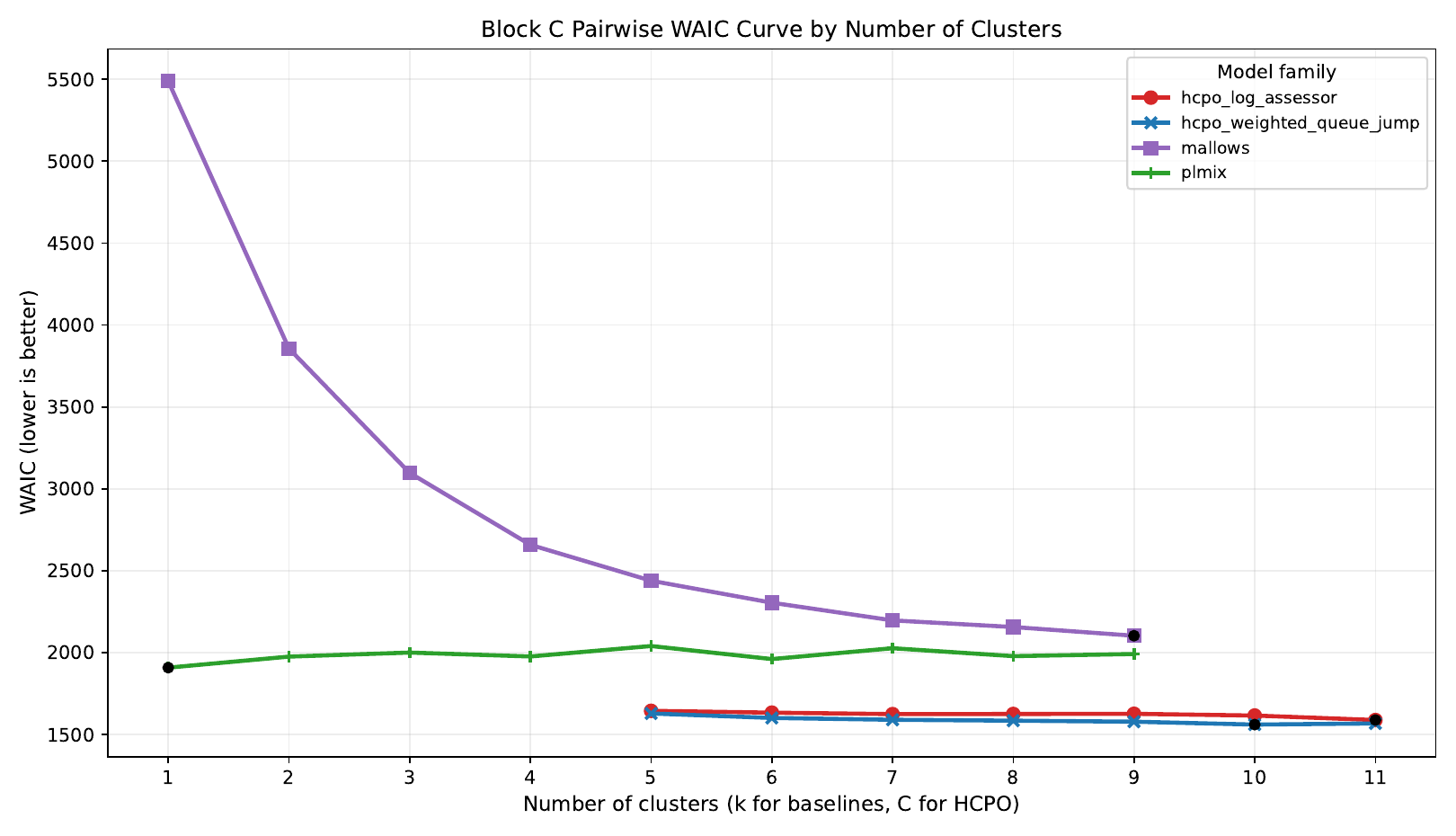}
  \caption{\textbf{Sound data (pairwise unit): WAIC versus mixture size / number of clusters.}
  WAIC is computed on the same $n_{\mathrm{obs}}=1380$ pairwise comparisons for all models.
  For the PL and Mallows baselines, the horizontal axis is the number of mixture components.
  For HCPO, it is the number of clusters $G$ appearing in the posterior sample (two observation models shown).
  Lower values are better.}
  \label{fig:sound_pairwise_waic}
\end{figure}

\subsubsection{Experiment D: further details for model comparisons}\label{sec:sound-model-comp-detail}

We compare HCPO with two total-order mixture baselines, each using three mixture components. For the Mallows mixture, we fit a grid of candidate models over \(G\) to explore performance against $G$, and also fix \(G=3\) for further focused comments. Because the resulting MCMC output is subject to label switching, we relabel each retained post-burn-in draw and define the assessor-level cluster assignment by the modal relabeled label,
\[
\hat c_a = \arg\max_g \sum_t \mathbf{1}\{c_a^{(t)} = g\}.
\]
For the Plackett--Luce (PL) mixture, we use a multistart MAP fit with \(G=3\). This yields soft membership probabilities \(\hat{\bm z}_a \in [0,1]^3\), satisfying \(\sum_{g=1}^3 \hat z_{ag}=1\), which we convert to hard assignments via
\[
\hat c_a = \arg\max_g \hat z_{ag}.
\]
For behavioral interpretation, we align the three Mallows clusters to the three PL clusters by optimal matching.
After alignment, the assessor match rate between the PL and Mallows partitions is $0.8043$, and the aligned
cluster-level stimulus ranks are strongly concordant, with Spearman correlations
$(0.979,\ 0.825,\ 0.846)$; see Figure~\ref{fig:blockc_rank_comparison}.

To compare predictive fit on a common scale, we compute WAIC on the \emph{pairwise} observation unit, treating each
A/B judgment as one data point ($n_{\mathrm{obs}}=1380$). Figure~\ref{fig:sound_pairwise_waic} plots WAIC as a
function of the number of mixture components for the total-order baselines (Plackett--Luce and Mallows), and as a
function of the number of inferred clusters for HCPO under two observation models (weighted queue-jump and
log-assessor/Frontier-Softmax). Lower WAIC indicates better expected out-of-sample performance.

Two qualitative patterns emerge. First, the Mallows baseline improves substantially as the number of mixture components increases. When \(G\) is small, a total-order mixture must explain many mutually inconsistent pairwise outcomes, including cyclic patterns, using only a few archetypal permutations, which results in poor predictive fit. As \(G\) increases, the fit improves because heterogeneous listeners can be separated into more homogeneous subgroups. Second, the PL curve is comparatively flat across \(G\), suggesting that, under this fitting regime, additional mixture components provide only limited predictive benefit. Over the range of cluster counts shown, HCPO consistently achieves lower WAIC than the total-order baselines, and the weighted queue-jump variant uniformly outperforms the log-assessor/Frontier-Softmax variant; see Figure~\ref{fig:sound_pairwise_waic}.

\begin{figure}[t]
\centering
\begin{minipage}{0.49\linewidth}
  \centering
  \includegraphics[page=1,width=\linewidth]{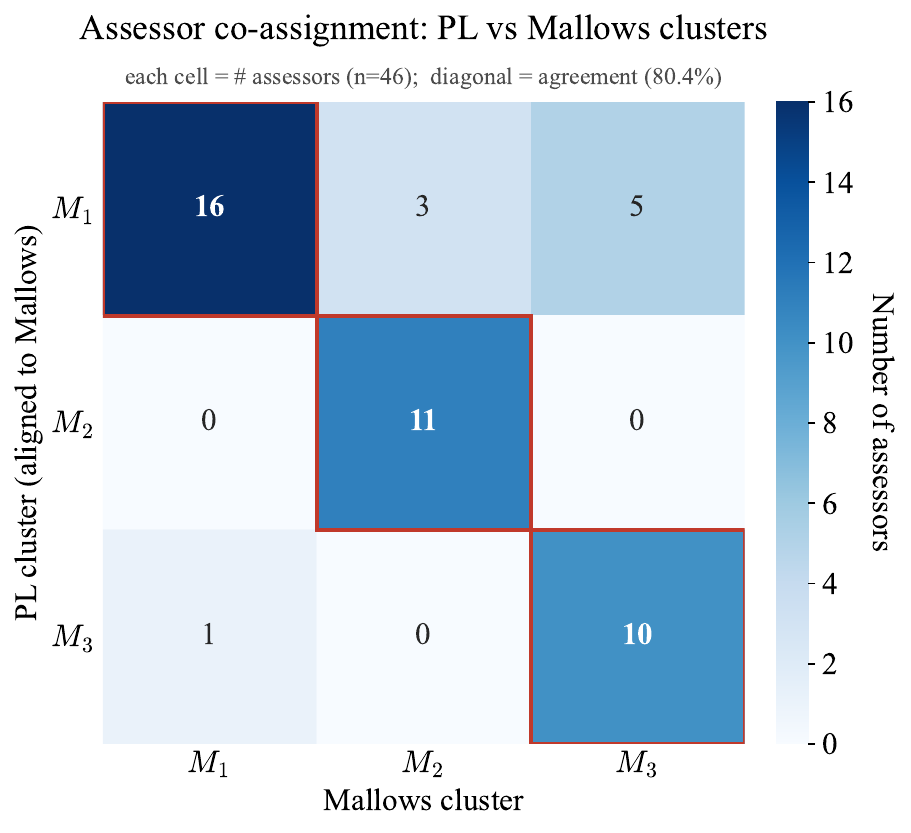}
\end{minipage}\hfill
\begin{minipage}{0.49\linewidth}
  \centering
  \includegraphics[page=2,width=\linewidth]{figures/main_sound_baselines_rank_comparison.pdf}
\end{minipage}
\caption{\textbf{Direct benchmark comparison of aligned clusterwise rankings (PL vs.\ relabeled Mallows).}}
\label{fig:blockc_rank_comparison}
\end{figure}

For HCPO, the weighted queue-jump variant achieves consistently lower WAIC than the log-assessor/Frontier-Softmax variant across the
range of inferred cluster counts shown in Figure~\ref{fig:sound_pairwise_waic}. In light of this systematic dominance
(and the stronger MCMC efficiency reported in Appendix Table~\ref{tab:sound_obsmodel_diag}), we adopt the
\emph{weighted queue-jump} observation model for the main sound-data results. Following the baseline analysis of \citet{crispino19}, we fix the number of mixture components to three for
both total-order baselines. The PL mixture baseline ($G=3$) attains $\mathrm{WAIC}=2046.72$ and
$\mathrm{LOOIC}=2047.04$. The Mallows mixture baseline ($K=3$) yields substantially larger values
($\mathrm{WAIC}=3182.48$, $\mathrm{LOOIC}=2969.93$) and exhibits very weak PSIS diagnostics
(775/1380 observations with Pareto-$k>0.7$; 751/1380 with Pareto-$k>1.0$), so LOO-based summaries for Mallows
should be interpreted cautiously. For HCPO, the weighted queue-jump likelihood achieves $\mathrm{WAIC}=1593.52$,
improving on the log-assessor/Frontier-Softmax variant ($\mathrm{WAIC}=1628.44$). PSIS--LOO is unstable for the HCPO fits in this
experiment, and we therefore use WAIC (rather than LOOIC) as the primary predictive criterion for HCPO model
comparison.

Due to sparsity and unstable cluster occupancy across draws, we refit the hierarchical model conditional on inferred cluster assignments using four chains
(seeds 42/143/244/345). See figure \ref{fig:multiseed_profile_new}. Convergence and mixing for the refit are summarized in Appendix Table~\ref{tab:sound_refit_restart_ess}
(split-$\widehat{R}<1.05$, range $1.001$--$1.039$). For cluster-level Hasse summaries we use the seed 345
refit run, which exhibits strong ESS for key structural and noise-related metrics.

To compare HCPO with the PL and Mallows mixtures, we align HCPO clusters to baseline components by best match
(maximal overlap / closest clusterwise ranking), since HCPO infers $G_{\mathrm{VI}}=5$ clusters whereas the
baselines use three components. Under this alignment (using the seed 345 refit), HCPO exhibits systematic
\emph{directional} shifts in item placement: relative to the PL mixture, items 1/8/3 tend to move upward while
items 12/11/7 tend to move downward; relative to the Mallows mixture, items 8/3/5/9 tend to move upward while
items 11/12/7/4 tend to move downward. Beyond these rank shifts, HCPO highlights partial comparability: several
clusters contain broad top layers (near-indifference sets) together with a small number of clearly inferior stimuli,
a pattern that is explicit in the inferred Hasse diagrams in Figure~\ref{fig:sound_final_inference} but is
necessarily collapsed into strict within-component permutations by the total-order baselines.

\subsection{Experiment E: Ghana sweet--potato consumer--preference dataset}
\label{app:potato_dataset_desc}

We evaluate our model using the Ghana sweet--potato consumer--preference dataset,
a widely studied example of heterogeneous ranking data arising from on--farm varietal evaluation.
The data originate from a large citizen--science initiative applying the Triadic Comparison of Technologies (tricot)
methodology, and are distributed together with the \texttt{SFPL} package.

The Ghana study involved rural consumers in multiple regions, who evaluated boiled
sweet--potato samples from a panel of thirteen breeding lines and released varieties.
Participants tasted three varieties at a time, prepared using standardised protocols, and reported
preference rankings (Win $\succ$ Mid $\succ$ Lose). In addition to the preference orders,
each variety is accompanied by five physicochemical and agronomic covariates: dry matter content,
sweetness, flesh colour, root size, and fibre content. These are used as item--level predictors in downstream
ranking models.

\begin{figure}[t]
\centering
\includegraphics[width=\linewidth]{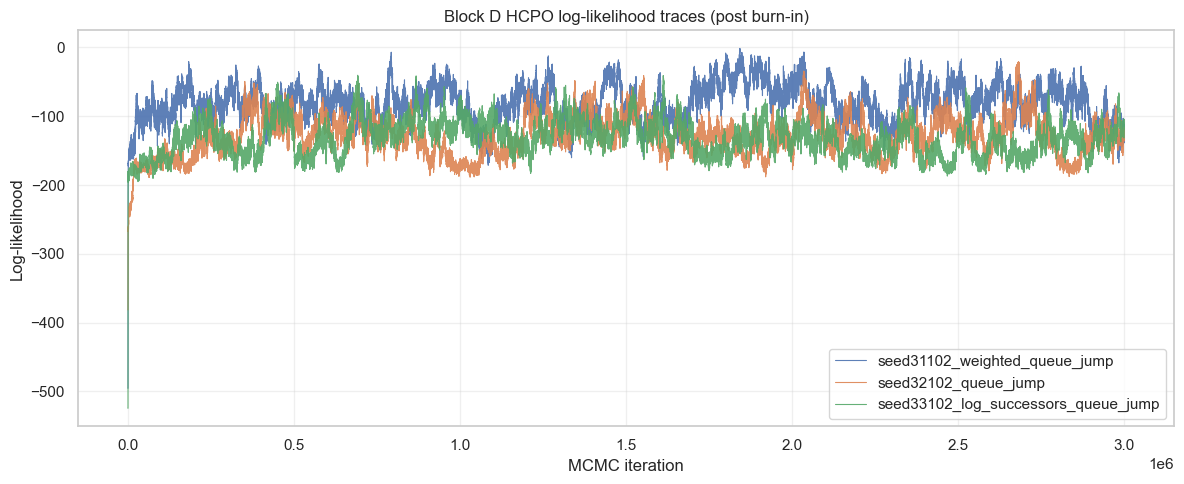}
\caption{\textbf{Ghana (Block D): post burn-in log-likelihood traces.}
Log-likelihood trajectories for three HCPO observation models: weighted queue-jump, queue-jump, and
frontier-softmax.}
\label{fig:ghana_blockd_loglik}
\end{figure}

\begin{figure}[t]
\centering
\includegraphics[width=\linewidth]{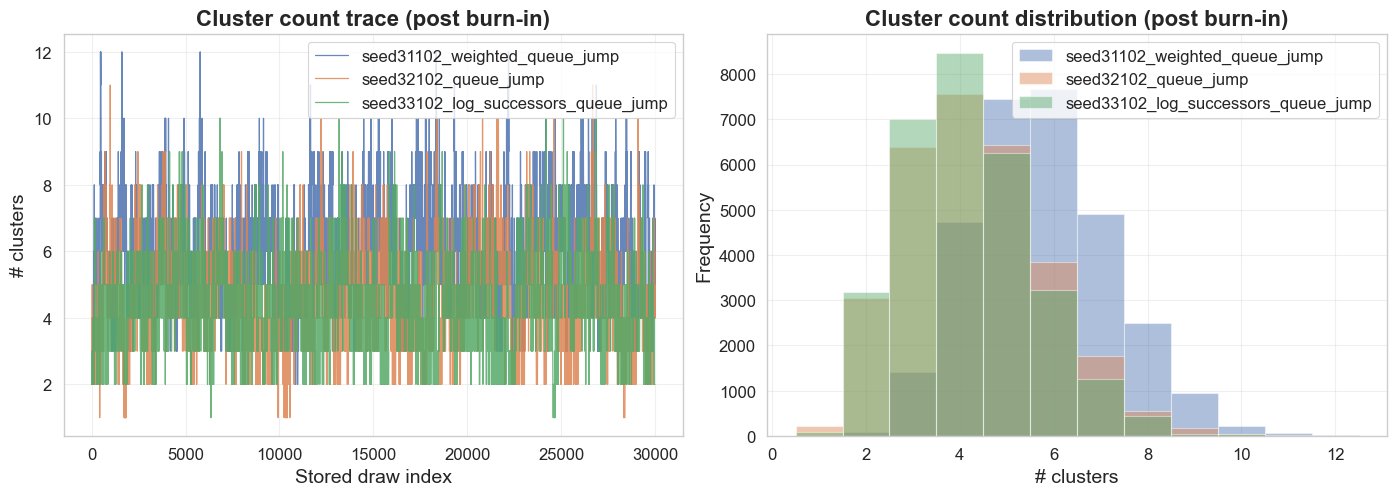}
\caption{\textbf{Ghana (Block D): posterior cluster-count diagnostics.}
(\emph{Left}) trace of the number of clusters across stored post burn-in draws. (\emph{Right}) histogram of the
posterior distribution of the number of clusters.}
\label{fig:ghana_blockd_cluster_count}
\end{figure}

\begin{table}[t]
\centering
\small
\caption{\textbf{Ghana: posterior summaries and MCMC efficiency (post burn-in).}
Posterior mean (SD) and effective sample size (ESS) for key scalar quantities under three observation models.
For log-likelihood, ESS is computed on a thinned subtrace (stride shown in the original diagnostics output).}
\label{tab:ghana_blockd_mcmc_diag}
\setlength{\tabcolsep}{6pt}
\renewcommand{\arraystretch}{1.05}
\begin{tabular}{llrrrr}
\toprule
\textbf{Likelihood} & \textbf{Param} & \textbf{Mean} & \textbf{SD} & \textbf{ESS} & \textbf{ESS rel.} \\
\midrule
weighted queue-jump & $K$             & 3.214 & 1.759 & 195.17 & 0.00651 \\
weighted queue-jump & $K_{\log(1+\rho)}$ & 2.163 & 1.217 & 208.08 & 0.00694 \\
weighted queue-jump & depth           & 8.481 & 2.467 & 81.73  & 0.00272 \\
weighted queue-jump & loglik          & -80.50 & 27.50 & 55.43  & 0.00111 \\
weighted queue-jump & prob\_noise     & 0.249 & 0.141 & 71.19  & 0.00237 \\
weighted queue-jump & $\rho$          & 0.944 & 0.117 & 237.95 & 0.00793 \\
weighted queue-jump & $\tau$          & 0.316 & 0.172 & 152.03 & 0.00507 \\
\midrule
queue-jump          & $K$             & 2.838 & 1.689 & 177.96 & 0.00593 \\
queue-jump          & $K_{\log(1+\rho)}$ & 1.843 & 1.152 & 228.63 & 0.00762 \\
queue-jump          & depth           & 8.670 & 2.745 & 52.91  & 0.00176 \\
queue-jump          & loglik          & -128.15 & 28.67 & 33.37 & 0.00067 \\
queue-jump          & prob\_noise     & 0.300 & 0.115 & 40.10  & 0.00134 \\
queue-jump          & $\rho$          & 0.893 & 0.169 & 214.06 & 0.00714 \\
queue-jump          & $\tau$          & 0.431 & 0.230 & 55.21  & 0.00184 \\
\midrule
Frontier-softmax q-j  & $K$             & 2.868 & 1.687 & 102.47 & 0.00342 \\
Frontier-softmax q-j  & $K_{\log(1+\rho)}$ & 1.814 & 1.155 & 184.13 & 0.00614 \\
Frontier-softmax q-j  & depth           & 8.492 & 2.877 & 32.11  & 0.00107 \\
Frontier-softmax q-j  & loglik          & -131.22 & 24.83 & 50.31 & 0.00101 \\
Frontier-softmax q-j  & prob\_noise     & 0.304 & 0.104 & 70.02  & 0.00233 \\
Frontier-softmax q-j  & $\rho$          & 0.878 & 0.205 & 50.56  & 0.00169 \\
Frontier-softmax q-j  & $\tau$          & 0.461 & 0.236 & 37.57  & 0.00125 \\
\bottomrule
\end{tabular}
\end{table}

\begin{figure}[h]
\centering
\includegraphics[width=0.8\linewidth]{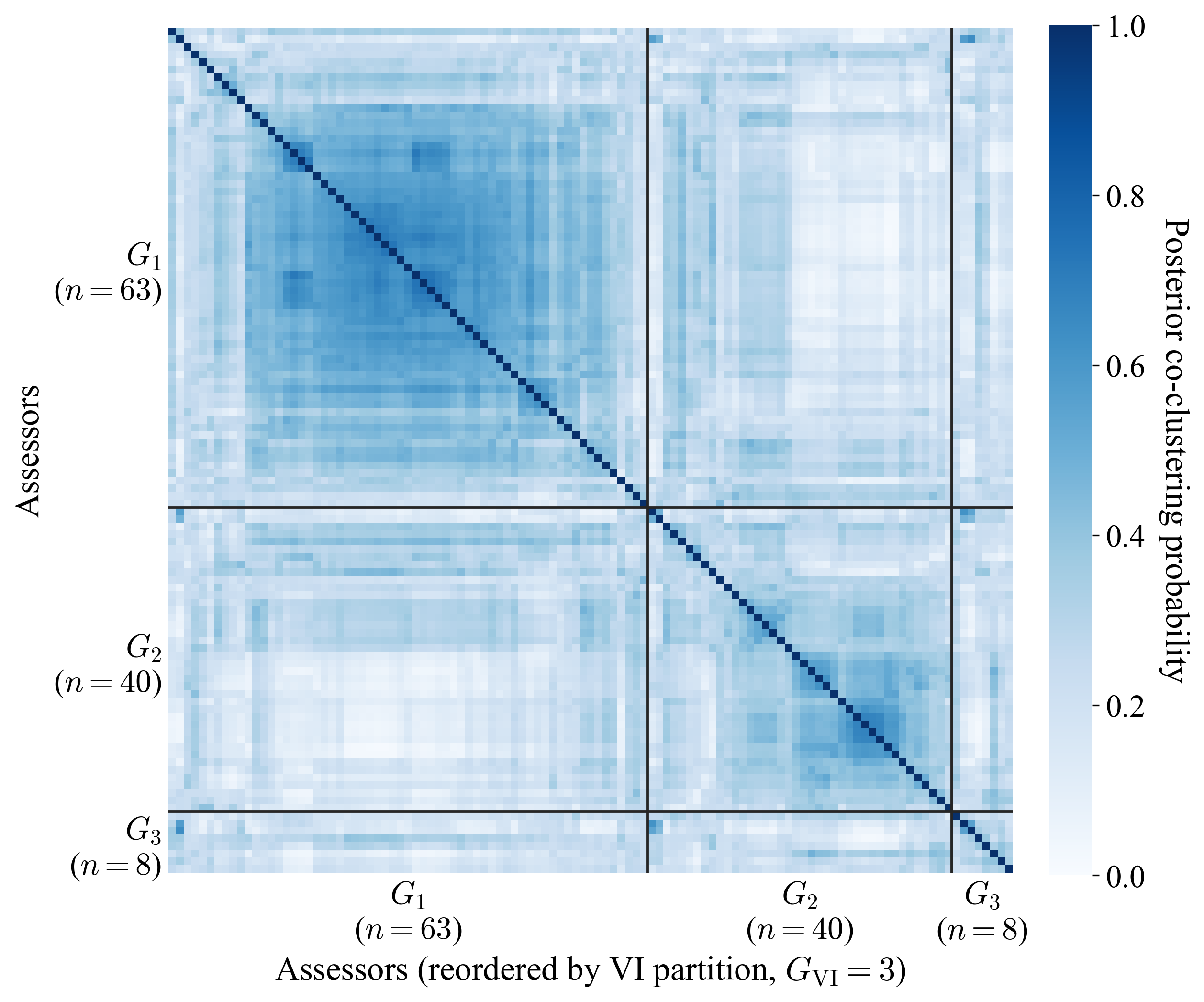}
\caption{\textbf{Ghana tricot: posterior similarity matrix (PSM) under weighted queue-jump.}
Assessors are reordered according to the VI-optimal partition with $G_{\mathrm{VI}}=3$ and cluster sizes
$(63,40,8)$ ; darker cells indicate higher posterior co-clustering
probability.}
\label{fig:app_ghana_psm_weighted_vi3}
\end{figure}

\begin{table}[t]
\centering
\small
\caption{\textbf{Ghana: MCMC efficiency for fixed-partition HPO refits.}
Posterior mean, standard deviation, and effective sample size (ESS) for key scalar parameters, computed from post-burn-in draws. The refits use the VI-selected cluster count, namely \(G=1\) for the frontier-softmax and queue-jump models, and \(G=3\) for the weighted queue-jump model.}
\label{tab:ghana_refit_ess}
\setlength{\tabcolsep}{6pt}
\renewcommand{\arraystretch}{1.05}
\begin{tabular}{l l l r r r r}
\toprule
\textbf{Likelihood} & \textbf{Refit model} & \textbf{Param} & $n_{\text{post}}$ & \textbf{Mean} & \textbf{SD} & \textbf{ESS} \\
\midrule
frontier-softmax & HPO refit (\(G=1\)) & loglik      & 2,985,000 & $-176.08$ & 3.31  & 2,399.26 \\
frontier-softmax & HPO refit (\(G=1\)) & prob\_noise & 15,000    & 0.520     & 0.064 & 58.64 \\
frontier-softmax & HPO refit (\(G=1\)) & $\rho$      & 15,000    & 0.906     & 0.128 & 216.80 \\
frontier-softmax & HPO refit (\(G=1\)) & $\tau$      & 15,000    & 0.505     & 0.279 & 836.76 \\
\midrule
queue-jump                  & HPO refit (\(G=1\)) & loglik      & 2,985,000 & $-176.55$ & 3.32  & 1,558.95 \\
queue-jump                  & HPO refit (\(G=1\)) & prob\_noise & 15,000    & 0.502     & 0.079 & 85.85 \\
queue-jump                  & HPO refit (\(G=1\)) & $\rho$      & 15,000    & 0.893     & 0.150 & 157.51 \\
queue-jump                  & HPO refit (\(G=1\)) & $\tau$      & 15,000    & 0.536     & 0.298 & 345.22 \\
\midrule
weighted queue-jump         & HPO refit (\(G=3\)) & loglik      & 2,985,000 & $-99.47$  & 5.31  & 650.79 \\
weighted queue-jump         & HPO refit (\(G=3\)) & prob\_noise & 15,000    & 0.311     & 0.067 & 388.64 \\
weighted queue-jump         & HPO refit (\(G=3\)) & $\rho$      & 15,000    & 0.790     & 0.252 & 926.15 \\
weighted queue-jump         & HPO refit (\(G=3\)) & $\tau$      & 15,000    & 0.961     & 0.017 & 118.00 \\
\bottomrule
\end{tabular}
\end{table}

\begin{figure}[t]
\centering
\includegraphics[width=0.8\linewidth]{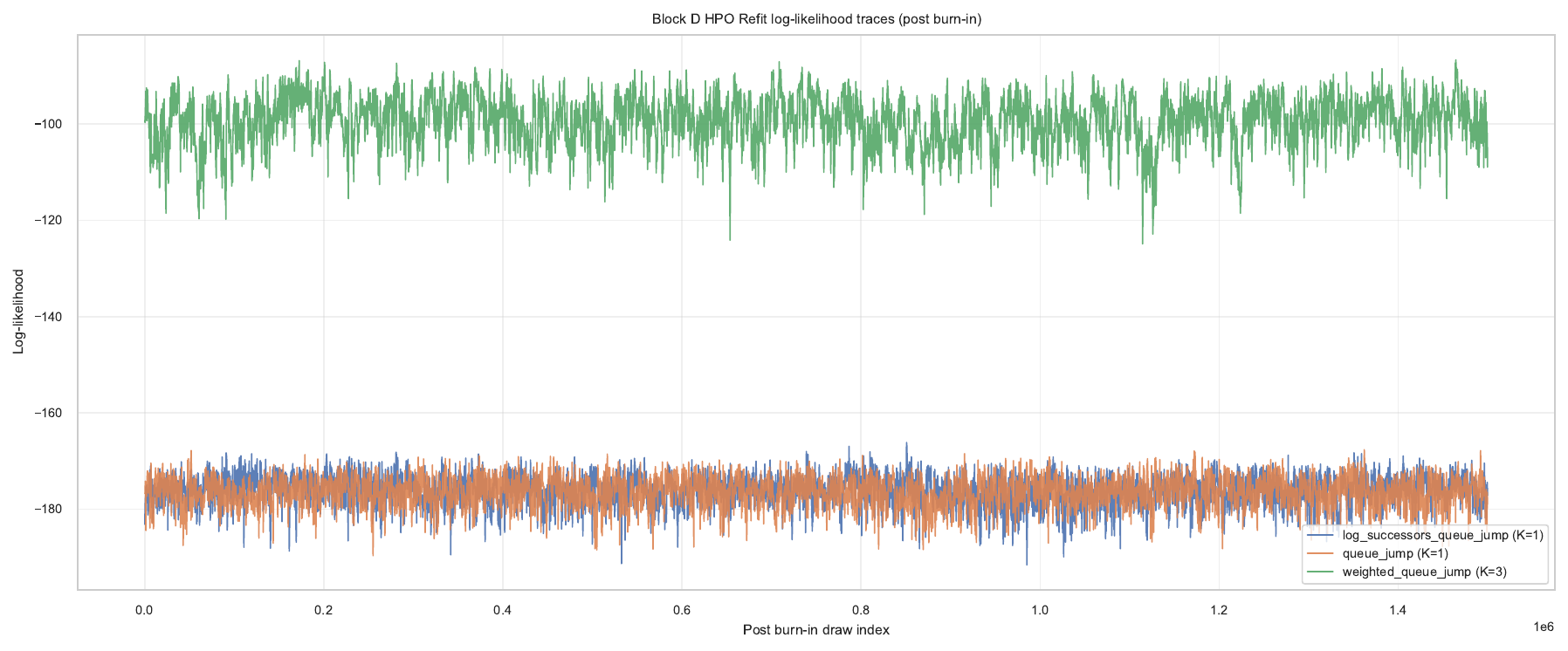}
\caption{\textbf{Ghana refit: post burn-in log-likelihood trace comparison.}
Post burn-in log-likelihood traces for the fixed-$K$ HPO refits under three observation models.
The weighted queue-jump refit exhibits consistently higher log-likelihood values and a stable stationary regime.}
\label{fig:app_ghana_refit_loglik_compare}
\end{figure}


\begin{figure}[h]
\centering
\includegraphics[width=\linewidth]{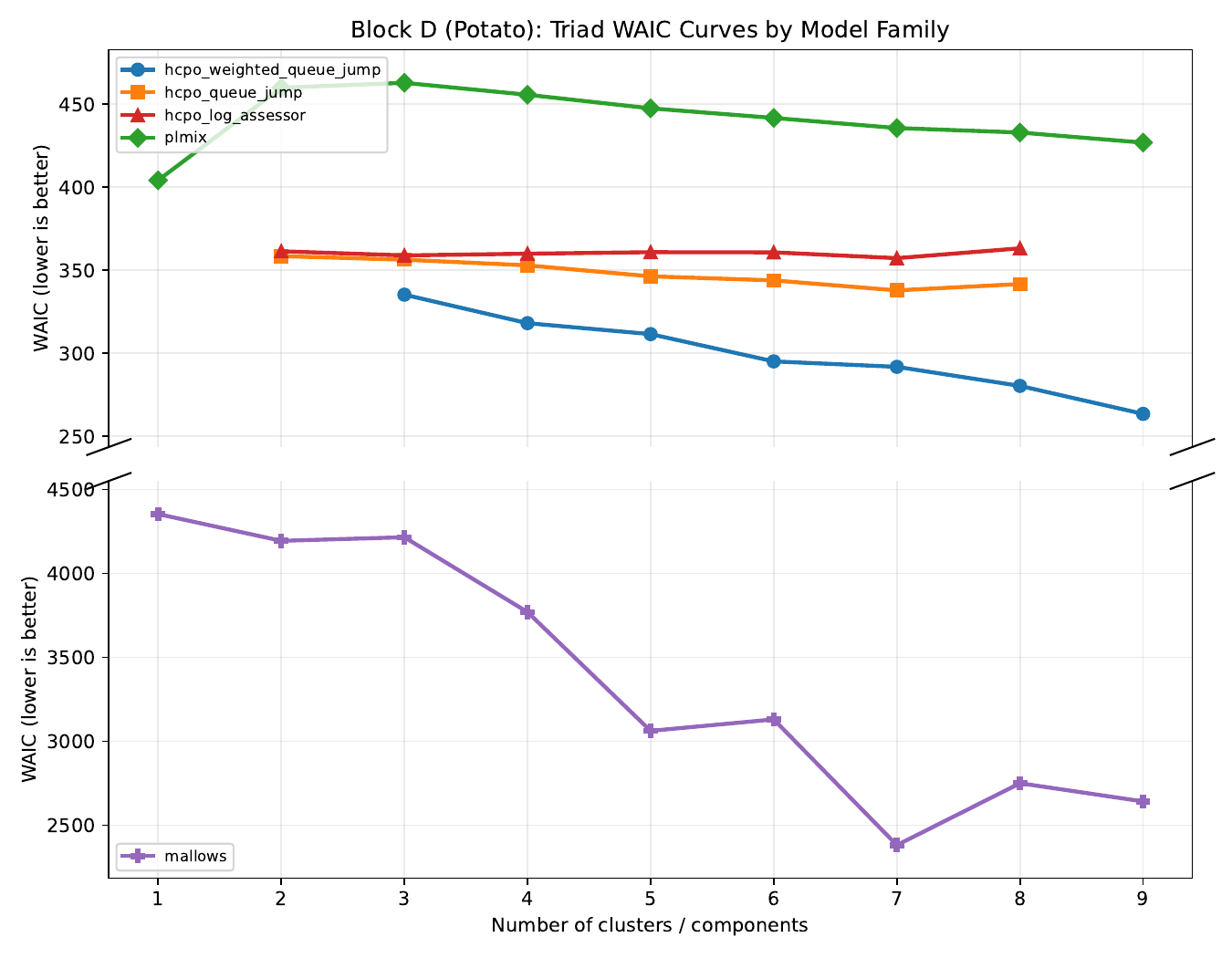}
\caption{\textbf{Ghana (Block D): triad WAIC curves by model family.}
WAIC (lower is better) is computed on the native triad observation unit ($n_{\mathrm{obs}}=111$) and plotted against model complexity. For the PL mixture and Mallows baselines, the horizontal axis is the number of mixture components ($1$--$9$); for HCPO, it is the posterior cluster count $G$ for grouped-by-$G$ WAIC estimates with sufficient draws. The Mallows curve is shown on a separate scale because exact Mallows WAIC, computed from saved posterior draws using the augmented-ranking conditional likelihood rather than the earlier proxy, is much larger than the HCPO and PL mixture values. Across the evaluated range, HCPO achieves substantially lower triad WAIC than the total-order baselines, with the weighted queue-jump variant best overall.}
\label{fig:ghana_triad_waic_curves}
\end{figure}

\begin{table}[t]
\centering
\small
\caption{\textbf{Ghana: fixed-partition HPO refits on the original triad observation unit.}
For each likelihood, we report the VI cluster count used in the refit ($G_{\mathrm{VI}}$) and predictive
criteria computed on the original triad observation unit ($n_{\mathrm{obs}}=111$). All scores use
$n_{\mathrm{draws}}=800$ draws. Lower is better.}
\label{tab:ghana_refit_waic_original}
\setlength{\tabcolsep}{5pt}
\renewcommand{\arraystretch}{1.05}
\begin{tabular}{l c rr}
\toprule
\textbf{Likelihood} & $G_{\mathrm{VI}}$ & \textbf{WAIC} & \textbf{LOOIC} \\
\midrule
frontier-softmax & 1 & 381.01 & 380.75 \\
queue-jump                  & 1 & 371.76 & 371.73 \\
weighted queue-jump         & 3 & 236.49 & 237.19 \\
\bottomrule
\end{tabular}

\vspace{0.25em}
\footnotesize
All rows use the original triad observation unit ($n_{\mathrm{obs}}=111$) and $n_{\mathrm{draws}}=800$.
\end{table}

\section{Fixed-Leaf HPO for the Arcagni Example}
\label{app:arcagni-fixed-leaf}

We consider a fixed-leaf specialization of HPO for the regional well-being example of \citet{arcagni2022complexity}. Unlike the main model, where lower-level data are noisy observation lists or task-based traces, the data here are themselves partial orders. Specifically, we treat the three domain-specific posets as exact observed leaves on a common ground set of $n=20$ Italian regions, with no list likelihood, no task structure, no covariates, and no additional noise parameter. The latent dimension is fixed at $K=n/2=10$, so the stochastic unknowns are
\[
(U^0,U^1,U^2,U^3,\tau,\rho),
\]
where $U^0$ is the population-level latent matrix, $U^1,U^2,U^3$ are the leaf-level latent matrices, and the inferential target is the root poset
\[
h^0 = h(U^0).
\]

Let $P^1,P^2,P^3 \in \{0,1\}^{n\times n}$ denote the observed leaf posets, represented by their transitive closures. $h(U)$ is the intersection of the $K$ total orders induced by the columns of $U$. In the present appendix the leaf posets are imposed exactly,
\[
h(U^a)=P^a,\qquad a=1,2,3,
\]
and the posterior takes the form
\[
\pi(U^0,U^1,U^2,U^3,\tau,\rho \mid P^1,P^2,P^3)
\propto
p(\rho)\,p(\tau)\,p(U^0\mid \rho)
\prod_{a=1}^3 p(U^a\mid U^0,\tau,\rho)\,
\mathbf{1}\{h(U^a)=P^a\}.
\]
Thus the leaf data enter only through hard constraints, and there is no separate observation-order likelihood. 
The sampler alternates constrained leaf updates, Gaussian updates of $U^0$ given the current leaves, and Metropolis updates of $\rho$ and $\tau$. Each leaf update combines an order-changing linear-extension proposal with a constrained Gibbs sweep, and any proposal that changes the induced leaf poset is rejected immediately. Since $K$ is fixed throughout, there are no reversible-jump, birth, or death moves.

Figure~\ref{fig:arcagni-fixed-leaf-data} shows the three observed leaf posets used in this experiment: personal economic situation, health and family relations, and leisure friendship. Their closures contain $177$, $116$, and $114$ strict comparabilities, with $13$, $74$, and $76$ incomparable pairs, respectively. As external references, \citet{arcagni2022complexity} report an optimal approximating poset with $120$ strict comparabilities and $70$ incomparabilities, a unique top region TRE, and minimal regions APU, BAS, CAL, CAM, MOL, SAR, and SIC. We also reconstruct the bucket-order solution shown in their Figure~16 and the final linear order reported in their Table~5.

\begin{figure}[t]
    \centering
    \includegraphics[width=\textwidth]{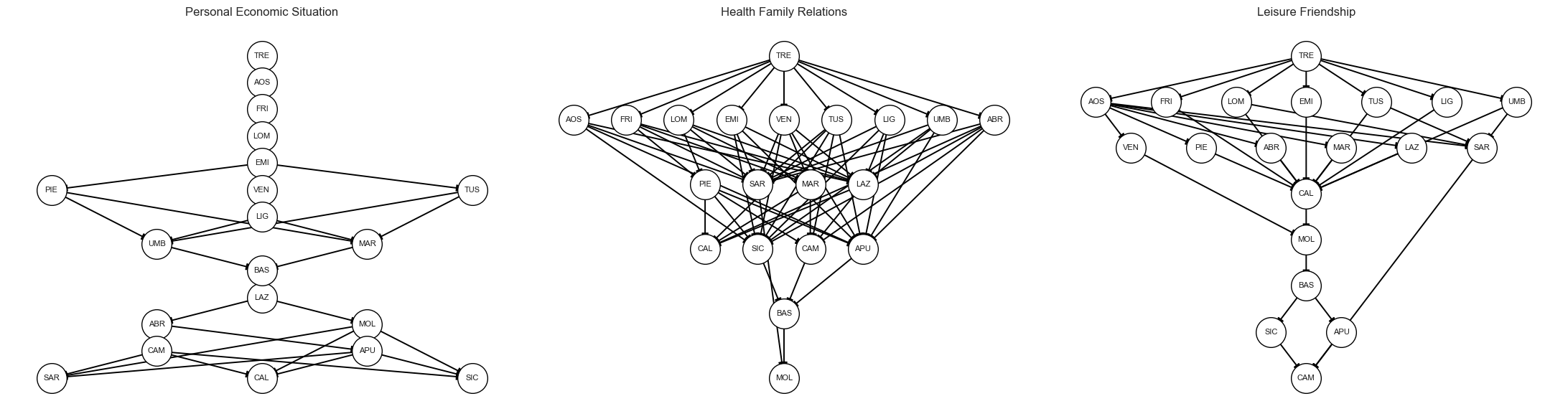}
    \caption{Observed leaf posets in the fixed-leaf Arcagni analysis.}
    \label{fig:arcagni-fixed-leaf-data}
\end{figure}

Figure~\ref{fig:arcagni-fixed-leaf-result} summarizes the corresponding fixed-leaf HPO output. The posterior means were $\rho=0.974$ and $\tau=0.766$. The posterior-mode root poset had $152$ strict comparabilities and $49$ cover relations. Relative to the reconstructed bucket-order closure, it shared $113$ strict comparabilities, corresponding to a comparability Jaccard overlap of $0.585$. Relative to the reconstructed final linear-order closure, it shared $141$ strict comparabilities, with Jaccard overlap $0.701$. Thus the inferred root is more ordered than the optimal approximating poset reported by \citet{arcagni2022complexity}; it recovers the same unique top region, TRE, but does not reproduce the reported minimal-region set.

\begin{figure}[t]
    \centering
    \includegraphics[width=\textwidth]{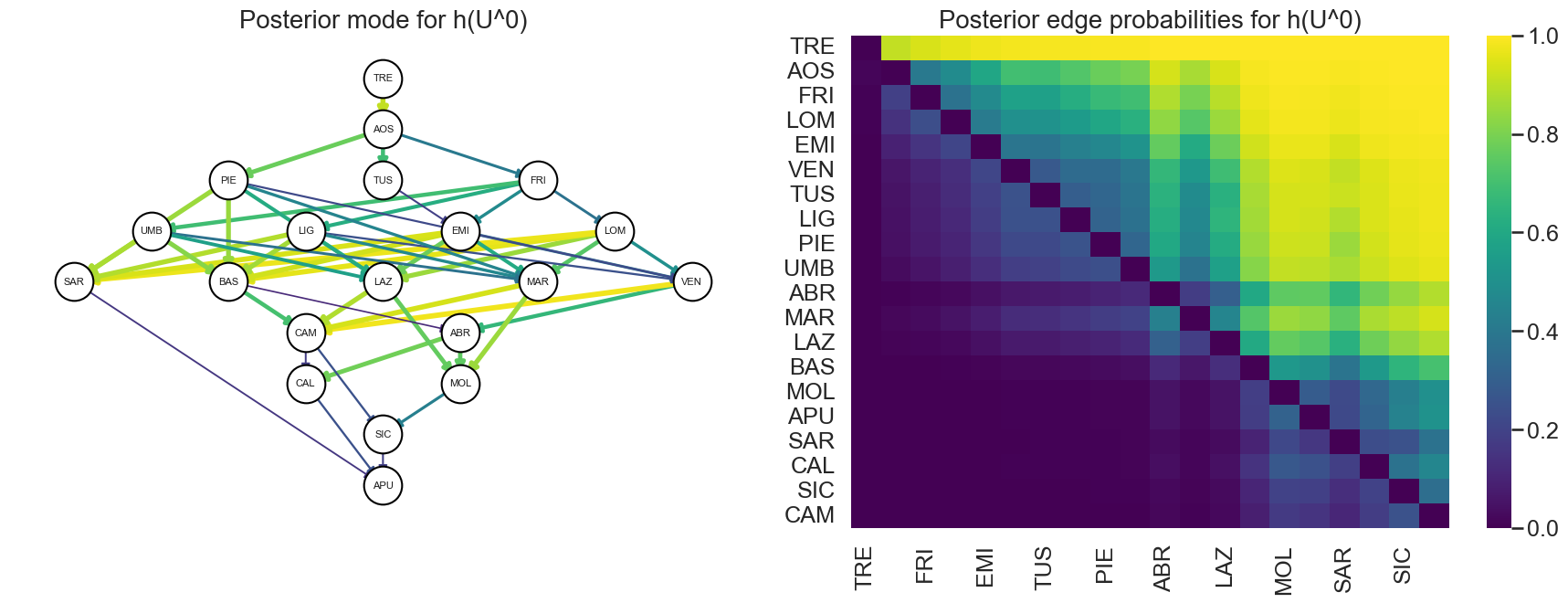}
    \caption{Posterior mode of $h(U^0)$ (left) and edgewise posterior probabilities for $h(U^0)$ (right).}
    \label{fig:arcagni-fixed-leaf-result}
\end{figure}

The posterior over root structures remains diffuse: the empirical posterior mode appeared only once among the $19{,}950$ stored root draws. This example should therefore be interpreted as a Bayesian hierarchical analogue of the Arcagni problem rather than as a literal reproduction of its greedy mutual-ranking-probability objective.

\section{Introduction-table}\label{sec:intro-table-dataset-sizes}
Table~\ref{tab:intro-data-summary} summarizes the sizes and structures of the representative ranking-data settings discussed below. Taken together, these examples illustrate that HPO is intended for a broad range of grouped rank data, from assessor-based studies and peer grading to historical witness lists, biological rankings, and trace-derived rankings.

\begin{table}[t]
\centering
\footnotesize
\setlength{\tabcolsep}{4pt}
\renewcommand{\arraystretch}{1.05}
\caption{Representative ordered-data settings discussed in the introduction.
Here $M$ denotes the number of items (or nodes), $A$ denotes the natural grouping
structure when present, and \#lists denotes the number of observed rankings/lists. The \citet{Johnson02MonkeyJasa} data has ties which we do not consider. The \citet{nicholls25AOAS} data is a timeseries so it takes a hierarchical model of a sort not considered here. We analyse datasets from the papers in bold.}
\label{tab:intro-data-summary}
\begin{tabular}{
    >{\raggedright\arraybackslash}p{3.5cm}
    c
    c
    >{\raggedright\arraybackslash}p{2.7cm}
    >{\raggedright\arraybackslash}p{1.6cm}
    >{\raggedright\arraybackslash}p{1.4cm}}
\toprule
Reference & Hier. & $M$ & Grouping & \#lists & Length \\
\midrule
\citet{Johnson02MonkeyJasa} & Yes & 24 & 9 paradigms / 30 procedures & 30 & $\le 12$ \\
\citet{GormleyMurphy2009} & No & 5 & -- & 2498 & 1--5 \\
\citet{nagy13pigeonPO} & Yes & 10/30 & 3 & unclear & 2/unclear\\ 
\citet{raman14crowdGrade} & No & 42 & -- & 996 & 6.73 avg. \\
\bf{\citet{crispino19}} & Yes & 12 & 46 assessors & 1380 & 2 \\
\bf{\citet{arcagni2022complexity}}  & Yes & 20 & 3 local posets & 0 & n/a \\
\citet{nicholls25AOAS} & Yes & 67 & 76 years & 371 & 2--15 \\
\bf{\citet{li26delinearizing}} & Yes & 15 & 6 & 60 & 5-12\\
\bottomrule
\end{tabular}

\vspace{2pt}
\begin{minipage}{0.94\linewidth}
\footnotesize
\textit{Note.} ``Hier.'' indicates whether the data have an explicit grouped or multi-level
structure relevant for HPO. The Arcagni row refers to the fixed-leaf appendix example,
where the observations are posets rather than ranked lists.
\end{minipage}
\end{table}
\end{document}